\newtheorem{theorem}{Theorem}
\newtheorem{corollary}[theorem]{Corollary}
\newtheorem{definition}[theorem]{Definition}
\newtheorem{lemma}[theorem]{Lemma}
\newtheorem{proposition}[theorem]{Proposition}
\theoremstyle{remark}
\newtheorem{remark}[theorem]{Remark}
\newtheorem{example}[theorem]{Example}
 \newcommand{\eps}{\varepsilon}
 \newcommand{\h}{\mathcal{H}}
 \newcommand{\M}{\mathcal{M}}
  \newcommand{\F}{\mathcal{F}}
 \newcommand{\sgn}{\operatorname{sgn}}
 \renewcommand{\phi}{\varphi}
\newcommand{\E}{\mathbb{E}}
\renewcommand{\P}{\mathbb{P}}
\newcommand{\N}{\mathbb{N}}
\newcommand{\Q}{\mathbb{Q}}
\newcommand{\R}{\mathbb{R}}
\renewcommand{\eps}{\varepsilon}
\renewcommand{\epsilon}{\varepsilon}
\DeclareMathOperator{\id}{Id}
\DeclareMathOperator{\esssup}{ess\, sup}
\numberwithin{equation}{section}
\numberwithin{theorem}{section}
\renewcommand{\subset}{\subseteq}
\renewcommand{\supset}{\supseteq}
\newcommand{\mylabel}[2]{#2\def\@currentlabel{#2}\label{#1}}
\newcommand{\AW}{\mathcal{AW}}
\newcommand{\NA}{\textrm{NA}}
\def\fcmp{\mathbin{\raise 0.6ex\hbox{\oalign{\hfil$\scriptscriptstyle \mathrm{o}$\hfil\cr\hfil$\scriptscriptstyle\mathrm{9}$\hfil}}}}
\begin{document}

\author{Acciaio~Beatrice}
\author{Backhoff-Veraguas~Julio}
\author{Pammer~Gudmund}
\thanks{ETH Zurich, Switzerland,
\href{mailto:beatrice.acciaio@math.ethz.ch}{beatrice.acciaio@math.ethz.ch},
University of Vienna, Austria,
\href{mailto:julio.backhoff@univie.ac.at}{julio.backhoff@univie.ac.at},
ETH Zurich, Switzerland,
\href{mailto:gudmund.pammer@math.ethz.ch}{gudmund.pammer@math.ethz.ch}}
\thanks{The authors thank Mathias Beiglb\"ock for insightful discussions.}
\title{Quantitative Fundamental Theorem of Asset Pricing}

\maketitle

\begin{abstract}
In this paper we provide a quantitative analysis to the concept of arbitrage, that allows to deal with model uncertainty without imposing the no-arbitrage condition. In markets that admit ``small arbitrage", we can still make sense of the problems of pricing and hedging. The pricing measures here will be such that asset price processes are close to being martingales, and the hedging strategies will need to cover some additional cost. We show a quantitative version of the Fundamental Theorem of Asset Pricing and of the Super-Replication Theorem. Finally, we study robustness of the amount of arbitrage and existence of respective pricing measures, showing stability of these concepts with respect to a strong adapted Wasserstein distance. \\

\noindent\emph{Keywords:} arbitrage, martingale measure, FTAP, adapted Wasserstein distance, robust pricing and hedging\\
\noindent {MSC (2020): 60G48, 91G15, 49Q22}
\end{abstract}

\section{Introduction}

The paradigm of no-arbitrage is arguably the pillar upon which mathematical theory of finance is based. It is mostly considered as a minimal equilibrium or efficiency condition to require on a financial market, because if it were to fail, then agents would exploit it and consequently arbitrage opportunities would disappear from the market. Huge effort is put in order to understand whether a given setting admits arbitrage or not, and to provide characterizations thereof. 
The renowned Fundamental Theorem of Asset Pricing (FTAP) deserves a special mention, as it crucially connects arbitrage to pricing; see the early works \cite{Kr81,HaKr79,HaPl81,DaMoVi90}, as well as \cite{DeSc06,KaMh09} for an overview and
\cite{GuRaSc10,AcBePeSc13,biagini2017robust,bayraktar2016fundamental,herdegen2017no} for further developments.

In the present work, in a finite discrete-time setting, we go beyond this paradigm by accommodating markets that allow for some ``small arbitrage". The main motivation behind our study is the possibility to consider model uncertainty in a strong sense, i.e.\ without imposing no-arbitrage conditions on markets neighbouring a reference market model. Indeed, usually model uncertainty is dealt with by considering some neighbourhood of the selected model, while still imposing some idealized conditions (such as no arbitrage, no transaction costs, etc.). Instead, we would like to allow uncertainty to go beyond such restrictions.
This will allow us to understand robustness of known results w.r.t.\ small perturbations of the model, e.g.\ also accommodating small price shocks in the market.{ We can for example consider the situation where we fixed a no-arbitrage model which we use for pricing and hedging, and we want to understand how these operations and our decision making perform w.r.t.\ model uncertainty. Now, without imposing no-arbitrage conditions on the model misspecification, in neighbouring models we can still make sense of the concepts of pricing and hedging, and prove continuity in an appropriate sense. This will in particular require a relaxation of the martingale property of pricing measures. The fact that one can study hedging and pricing while working outside of the classical no-arbitrage and martingale setting provides a relaxed framework that should in turn be useful in data-driven models, where it may be difficult to check for no-arbitrage conditions, as well as in numerical and machine learning approaches, where imposing a martingale constraint would be costly, see e.g.\ \cite{xu2020cot}.}

In order to talk of ``small arbitrage", we need to quantify a concept that so far was a dichotomy: either there are arbitrage opportunities in the market, or there aren't.
For us, the ``amount of arbitrage" in a market will correspond to the maximal amount that one can get by investing in it without risk. Clearly this makes sense only under some normalization or constraint on the strategies, as otherwise one would be able to push the obtainable amount to infinity as soon as this is non-zero. This leads us to the concept of (strict) $\eps$-arbitrage, defined as a self-financing strategy $H$ such that 
\begin{align}\label{eq:intro_arb}
\P\left[  (H \bullet S)_T - \epsilon \| H \| \ge 0 \right] = 1\quad \text{ and }\quad  \P\left[ (H \bullet S)_T - \epsilon \| H \| > 0 \right] > 0.
\end{align}
The additional term as compared to the classical notion of arbitrage, namely $\epsilon \| H \|$, has a natural interpretation in terms of costs for holding or managing the portfolio associated to the strategy $H$, {but should not be confused for a transaction cost; cf. discussion on related literature at the end of the introduction}. 

In parallel, the martingale property of pricing measures will be also relaxed. 
The notion of pricing measure in a market that may admit small arbitrage, is that of $\eps$-martingale measure, which is a measure $\Q\sim \P$ such that, at every time step $t$, it makes the market $\eps$-away from being a martingale: {i.e.\ we have the almost sure inequality}
\begin{align}\label{eq:intro_mart}
\left|
\E_\Q[S_t | \F_{t - 1}] - S_{t - 1}
\right|
\le \epsilon .
\end{align}
This concept is the analogous, but in a model-dependent setting, to that of $\eps$-approximating martingale measure in \cite{GuOb19}. 
Such approximating martingale measures were used by Guo and Ob{\l}{\'o}j in order to develop a convergence result, and computational methods, in a model-free framework. Also in a model-free setting, Dolinsky and Soner derive in \cite{DoSo13} a similar but different concept of $\eps$-martingale in connection to proportional transaction costs.

In our $\eps$-relaxed framework, we first establish two core results comparable to those in classical mathematical finance: the fundamental theorem of asset pricing and the pricing-hedging duality. The former can be roughly stated as
\[
\text{there is no $\eps$-arbitrage}\quad \Longleftrightarrow\quad \text{there exists an $\eps$-martingale measure}.
\]
As for the pricing-hedging duality result, as in the classical setting we are relating the supremum over the fair prices of an option to the minimum amount necessary to set up a strategy that super-replicates the option at maturity, with the exception that in our $\eps$-relaxed framework we also need to cover some additional cost.
A natural question is whether these results can be recovered directly by means of the analogous results from the classical arbitrage framework.
In this spirit, we show in Remark~\ref{rem:eps_mart_arb} below that the existence of an $\epsilon$-martingale measure has the absence of $\epsilon$-arbitrage as a consequence.
An essential observation in this context is that $\epsilon$-martingale measures can be characterized by the Doob decomposition of the underlying asset.
It turns out that proving the reverse implication and also the pricing-hedging duality is mathematically more challenging as it requires a deeper understanding of the concept of $\epsilon$-arbitrage, which we explore in Section~\ref{sec:canonic_subsec}.

A fundamental insight in the no-arbitrage theory in continuous time is that in order to obtain the fundamental theorem of asset pricing one has to understand the sequential closure of the set of stochastic integrals of admissible trading strategies.
Strikingly, even though we consider here a discrete-time setting, a similar phenomenon occurs and one has to distinguish between the notion of (strict) $\epsilon$-arbitrage and its sequential closure, cf.\ Definition~\ref{def:arb} and Example~\ref{ex:nostrictarb_but_arb}.
We further show that trading strategies can be canonically decomposed, see Lemma~\ref{lem:H_bar}, and use this to derive an explicit representation of this sequential closure which is also used to derive the pricing-hedging duality. Two admittedly technical findings arise which we also consider worth emphasizing: On the one hand, all our results and arguments work the same if we choose in \eqref{eq:intro_arb} a strictly convex $p$-norm (resp.\ in \eqref{eq:intro_mart} its dual norm), meaning that the Hilbertian structure present when $p=2$ is of no advantage. And on the other hand, when we choose the $1$-norm in \eqref{eq:intro_arb}, or  when $\epsilon$ is larger than a model-dependent threshold, then our arguments substantially simplify. 
For instance, strict arbitrage is already enough and no sequential closures are needed anymore.

After settling the aforementioned classical questions, we move on to study stability of the above concepts. We introduce an adapted $L^\infty$ distance that ensures continuity/stability of three key quantities: the amount of arbitrage in a market, the existence of pricing measures, and fair prices. We will see how, in order to guarantee stability, a  distance of $L^\infty$-type is needed. In fact, any distance that would allow processes to be far with small probability, would fail the purpose. We will also see how continuity may fail without the adaptedness (bicausal) constraint. This comes as no surprise, as adapted distances proved to be the correct class of distances in order to obtain robustness of stochastic optimization problems in finance, see \cite{BaBaBeEd19a}. Finally, we establish an approximation of the adapted $L^\infty$ distance by adapted log-exponential divergences. This is relevant since the adapted $L^\infty$ distance, albeit natural in this context, seems to be unwieldy for applications, while the log-exponential approximating versions thereof appear to be more tractable. To substantiate this point, we show the consistency of adapted empirical estimators, introduced in \cite{BaBaBeWi20}, w.r.t.\ the adapted log-exponential divergences. 
\\

{ 
{\bf On related literature}
From the early works \cite{Kr81,HaKr79,HaPl81,DaMoVi90}, the concept of arbitrage (in its variations) and its relation with pricing measures have played a crucial role in any development of the mathematical finance field. 
Both the quantitative definition of arbitrage \eqref{eq:intro_arb}, and the related concept of $\epsilon$-martingale measure \eqref{eq:intro_mart}, have some similarity to related concepts coming from the theory of transaction costs. However, to the best of our knowledge, the concepts proposed here cannot be accommodated into the existing literature. At a conceptual level, a transaction cost should only apply when portfolios are rebalanced, i.e.\ they should be a function of the increments of $H$ in our notation. By contrast \eqref{eq:intro_arb} pertains the actual size of the trading strategy $H$ rather. Even in the context of Kabanov's very general model \cite{Ka09,Sch04} of transaction costs, we could not see how to subsume  \eqref{eq:intro_arb} into that framework. On the dual side, the analogue of \eqref{eq:intro_mart} typically involves the term $\mathbb E_{\mathbb Q}[S_T|\mathcal F_{t-1}]$ rather than  $\mathbb E_{\mathbb Q}[S_t|\mathcal F_{t-1}]$, see e.g.\ \cite{ChKuTa17}, or the r.h.s.\ in \eqref{eq:intro_mart} would have to be replaced by $\epsilon S_{t-1}$ e.g.\ if $S$ denotes the process of bid prices and $(1+\epsilon)S$ is the process of ask prices. We refer the reader to  \cite{GuMu13,Sch17} for further bibliographical discussion and to \cite{PePe18} for a recent reference concerning the modelling of iliquitity in a very general discrete-time framework.

In a model-independent context, the case of transactions costs has been considered in \cite{DoSo13}, where it is shown that pricing measures correspond to a modification of martingale measures similar to the $\eps$-martingales introduced in the present paper. However, the concept of pricing measures in \cite{DoSo13}, similarly to the literature mentioned above, cannot be reduced to a local condition to be satisfied at every time step as we have in  \eqref{eq:intro_mart}.
Also in a robust setting, and
closely related to the topic of the present paper, is the work by Guo and Ob{\l}{\'o}j \cite{GuOb19}.
There the authors introduce a relaxed martingale optimal transport in order to study the numerical resolution as well as stability of martingale optimal transport.
This setting can be seen as a model-independent version (given marginal observations) of our model-dependent setting.
In particular, \cite[Theorem 4.1]{GuOb19} is a robust analogue of the pricing-hedging duality, cf.\ Theorem \ref{thm:duality}, established down below.
A central part of their analysis is \cite[Proposition 4.2]{GuOb19} which is an existence result of $\epsilon$-approximating martingales under perturbations, akin to Proposition \ref{prop:cont} in the current paper.
To measure the magnitude of the perturbation, a Wasserstein distance specific to their setting is used instead of adapted Wasserstein distances.

Finally we stress the existence of a line of literature concerned with the exploitation of arbitrages, or the idea of hedging even in models admitting arbitrage; see e.g.\ \cite{Ru13,KaFe09}.
}
\\

\noindent{\bf Organization of the paper.} We end this section by introducing the notation that will be used throughout the paper. Then, in Section~\ref{sect:arb}, we define the concepts of $\eps$-arbitrage and $\eps$-martingale measure, and state the fundamental theorem of asset pricing and the pricing-hedging duality in this framework.
Section~\ref{sect:cont} is devoted to study stability of the above concepts. Here we introduce a strong adapted distance, and show that continuity w.r.t.\ it holds.
In Section~\ref{sect:reg} we study regularization, introducing exponential distances that approximate the strong adapted distance and allow for empirical estimation.
Postponed proofs and auxiliary results are contained in Section~\ref{sect:proofs}.
\\

\noindent{\bf Notation and Conventions:} 

\begin{itemize}
\item We fix throughout $p \in [1,\infty)$ and $q \in (1,\infty]$  such that $\frac1p + \frac1q = 1$.
    \item For $x$ and $y$ vectors in $\R^d$, we denote by $x\cdot y$ their scalar product.
    \item By $|\cdot|_p$ we denote the $p$-norm on $\R^d$, that is, $|x|_p^p = \sum_{i = 1}^d |x_i|^p$
    \item For a $d$-dimensional stochastic processes $X=(X_t)_{t=t_0,\ldots,T}$, we write $X_{t,i}$ for the $i$-th component of the random vector $X_t\in\R^d$. Moreover, we set $\|X\|_p := \sum_{t = t_0}^{T} |X_t|_p$, where we will have either $t_0=0$ or $t_0=1$ depending on the context.
    \item For $X=(X_t)_{t=0,\ldots,T}$ and $Y=(Y_t)_{t=0,\ldots,T}$ $d$-dimensional stochastic processes, we denote by $(Y\bullet X)$ the component-wise stochastic integral of $Y$ w.r.t.\ $X$, so that $(Y\bullet X)_0=0$ and, for $u=1,\ldots,T$, $(Y\bullet X)_u=\sum_{t=1}^uY_t\cdot \Delta X_t$, where $\Delta X_t:=X_t-X_{t-1}$.
    \item Equalities and inequalities between random variables are intended almost surely (a.s.). 
    \item $L^0(\Omega,\mathcal G, \P;A)$ denotes the space of random variables defined on the stochastic basis $(\Omega,\mathcal G, \P)$ with values in the measurable space $A$.
    \item $L^r(\Omega,\mathcal G, \P;\R^d)$ denotes the space of $r$-integrable random $d$-vectors. We will usually write $L^r_+(\Omega,\mathcal G, \P):= L^r(\Omega,\mathcal G, \P;\R_+) $. We say that a stochastic process $X$ is in $L^r(\Omega,\mathcal G, \P)$ if each $X_t\in L^r(\Omega,\mathcal G, \P;\R^d)$.
    \item A sequence $(f_n)_{n \in \N}\subset L^0(\Omega,\mathcal G, \P; \R)$ is bounded (resp.\ bounded from below) if there is $c\in L^0(\Omega,\mathcal G, \P; \R_+)$ such that $|f_n|\leq c$ (resp.\ $f_n\geq -c $) for all $n$ a.s. 
Moreover, we say that a sequence $(f_n)_{n \in \N}\subset L^0(\Omega,\mathcal G, \P; \R^d)$ is bounded if $(|f_n|)_{n \in \N}\subset L^0(\Omega,\mathcal G, \P; \R)$ is bounded.
\end{itemize}

\section{On \texorpdfstring{$\epsilon$}{}-arbitrage and \texorpdfstring{$\epsilon$}{}-martingale measures} \label{sect:arb}

We consider a financial market model, defined on a filtered probability space $(\Omega, \F, \mathbb{F}:=(\F_t)_{t  = 0}^T, \P)$, { consisting of $d$ risky assets and one riskless asset used as num\'eraire. The evolution of the discounted value of the $d$ risky assets is then an $\R^d$-valued, $\mathbb{F}$-adapted process $S$.}
% We consider a financial market defined on a filtered probability space $(\Omega, \F, \mathbb{F}=(\F_t)_{t  = 0}^T, \P)$ with an $\R^d$-valued, $\mathbb{F}$-adapted process $S$ describing the evolution of the discounted value of $d$ assets.
As usual trading strategies on the risky assets are assumed to be predictable processes. The set of admissible trading strategies is then denoted by
$\mathcal H := \{ H=(H_t)_{t=1}^{T} \colon H_t\in\h_t \}$,
where $\h_t=L^0(\Omega,\F_{t-1}, \P;\R^d)$. Investing in the market $S$ with an initial capital $x\in\R$ and following a strategy $H\in \mathcal H$ {in a self-financing manner} then results in the portfolio value $x+(H\bullet S)_T$.

This section contains the main results of this paper, and to make it more readable we postpone most of the proofs to {Section~\ref{sect:proofs}}. Unless differently specified, definitions and results below hold for any $p\in[1,\infty)$. In some cases we will need to differentiate, in particular distinguishing between the two cases $p=1$ and $p>1$.

\begin{definition}[$\epsilon$-arbitrage]\label{def:arb} { Let $\epsilon\geq 0$.}
A trading strategy $H$ in $\mathcal H$ is a \emph{strict $\epsilon$-arbitrage} if
{
\begin{equation}
\label{eq:def_strict_epsilon_arbitrage}
\P\left[  (H \bullet S)_T - \epsilon \| H \|_p=:Y \ge 0 \right] = 1\quad \text{ and }\quad  \P\left[ Y > 0 \right] > 0.
\end{equation}
A sequence $(H^k)_{k \in \N}$ of trading strategies in $\mathcal H$ is called an \emph{$\epsilon$-arbitrage} if
\begin{equation}\label{eq:def_epsilon_arbitrage}
\P\left[\liminf_{k \to \infty} \left\{ (H^k \bullet S)_T - \epsilon \|H^k\|_p \right\} =: Y_\infty \ge 0\right]=1\quad 
\text{ and }\quad
\P \left[ Y_\infty > 0 \right] > 0.
\end{equation}
}
The market is said to have no $\epsilon$-arbitrage, or said to satisfy $\NA_\epsilon$, if there exists no such $\epsilon$-arbitrage.
\end{definition}

{ In this definition one can  replace $Y$ in \eqref{eq:def_strict_epsilon_arbitrage}  by $V_T-V_0-\epsilon \| H \|_p$, whereby $V$ is the uniquely defined discounted self-financing wealth process associated to $H\in\mathcal H$ and $V_0\in\mathbb R$. A similar adaptation of \eqref{eq:def_epsilon_arbitrage} is possible. As in the classical theory, this does not change the class of (strict) $\epsilon$-arbitrages, so we will employ Definition \ref{def:arb} throughout.} 

{ Clearly, for $\epsilon=0$, a strategy satisfying \eqref{eq:def_strict_epsilon_arbitrage} is an arbitrage strategy in the classical sense (see \cite{DaMoVi90}). Moreover, absence of such strategies already excludes existence of sequences of strategies satisfying
\eqref{eq:def_epsilon_arbitrage}; see e.g.\ \cite{FoSc16,DeSc06}. This means that our notion of $\NA_0$ coincides with the classical no-arbitrage ($\NA$) condition.
In the case $\epsilon=0$ the analysis that follows is then already covered by the classical theory, and therefore we assume throughout that $$\epsilon >0.$$
}

As mentioned in the introduction, in order to have a notion of  ``amount of arbitrage'' admitted in a market, some normalization or constraint on the admissible trading strategies is needed{, otherwise by linearity one would get an ``infinite arbitrage" as soon as arbitrage is possible}. This is the role played by the term $\epsilon \| H \|_p$ in \eqref{eq:def_strict_epsilon_arbitrage}, that can be interpreted as the cost faced for holding or managing the portfolio associated to the strategy $H$. 
In fact, existence of strict $\epsilon$-arbitrage means existence of strategies $K\in\mathcal H$ with $\E_\P[\|K\|_p]=1$ whose implementation leads to a gain of at least $\epsilon$ and with positive probability strictly larger than that.

It is clear from Definition~\ref{def:arb} that the absence of strict $\epsilon$-arbitrage implies that, for any $\epsilon' > \epsilon$ there is no strict $\epsilon'$-arbitrage either.
Therefore, the set of all values $\epsilon > 0$ for which the market model admits no strict $\epsilon$-arbitrage is an interval and its infimum will play a special role.

\begin{definition}[critical value]\label{def:critical_value}
We define the critical value as
\begin{equation}\label{def:crit}
\eps(\P):=\inf\{ \epsilon > 0 \colon \text{there is no strict $\epsilon$-arbitrage under model $\P$}\}. 
\end{equation}
\end{definition}

In fact, in  the definition of critical value one could replace  `strict $\epsilon$-arbitrage' by `$\epsilon$-arbitrage' without changing its value, as the next proposition clarifies.

{
\begin{proposition}\label{thm:epscritval}
There is no $\epsilon$-arbitrage for any $\epsilon$  strictly bigger than the critical value $\epsilon(\P)$.
\end{proposition}
The proof is postponed to Section~\ref{sec:NA_multistep}.  
}

The critical value $\epsilon(\P)$ plays a special role, as for $\epsilon$ strictly bigger than $\epsilon(\P)$ our results simplify significantly, resembling more the classical setting; see Section~\ref{sect:crit}.

{
\begin{remark}\label{rem.epsp1}
Note that the statement of Proposition~\ref{thm:epscritval} cannot be strengthened by including the critical value, that is, there may exist $\eps(\P)$-arbitrage,  as the next example shows.
\hfill$\Diamond$
\end{remark}
 }

\begin{example}[A market with $\eps$-arbitrage but no strict $\eps$-arbitrage]
\label{ex:nostrictarb_but_arb}
To motivate the necessity of considering the sequential closure of strict $\epsilon$-arbitrage, that is $\epsilon$-arbitrage as in Definition \ref{def:arb},
for simplicity we let $p = 2$, and consider a single-period market consisting of two assets and two possible (i.e., with positive probability) events, i.e., $\Omega := \{\omega_1,\omega_2\}$.
The asset $S$ is given by
\[
S_0(\omega) := (0,0)\quad \mbox{ and }\quad
S_1(\omega) := 
\begin{cases}
(\epsilon, 0), & \omega = \omega_1, \\
(\epsilon, 1), & \omega = \omega_2.
\end{cases}
\]
We refer to Figure \ref{fig:na} below for a depiction of this setting.
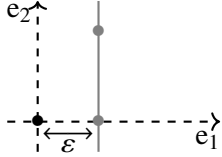
\begin{figure}[h]
\centering
\begin{tikzpicture}[scale=0.8]
\draw[thick,dashed,->] (-0.5,0)--(3,0);
\draw[thick,dashed,->] (0,-0.5)--(0,2);
\draw[thick,gray] (1,-0.5) -- (1,2);
\node at (0,0) {\color{black}$\bullet$};
\node at (1,0) {\color{gray}$\bullet$};
\node at (1,1.5) {\color{gray}$\bullet$};
\node at (-0.3,1.8) {e${}_2$};
\node at (2.8,-0.3) {e${}_1$};
\node at (0.5,-0.2) {$\longleftrightarrow$};
\node at (0.5,-0.4) {$\eps$};
\end{tikzpicture}
\caption{A market with $\eps$-arbitrage but no strict $\eps$-arbitrage.
The black bullet indicates the initial state of the market, $S_0=(0,0)$, while the gray bullets represent the two possible states of the market at time $1$. Here we have that $\Delta S_1\in \{\text{e}_1\}^\perp+\eps\text{e}_1$.}
\label{fig:na}
\end{figure}

Consider the filtration generated by $S$.
Then $\mathcal H=\mathcal H_1$ is composed exclusively of constant trading strategies, that is $\mathcal H \equiv \R^2$.
Let $H = (a,b) \in \R^2$, then we have
\[
    (H \bullet S)_1 =
    \begin{cases}
        \epsilon a, & \omega = \omega_1,
        \\
        \epsilon a + b, & \omega = \omega_2;
    \end{cases}
    \quad
    \mbox{ and }
    \quad
    \|H\|_2 = \sqrt{a^2 + b^2}.
\]
If $b = 0$ then $(H \bullet S)_1 - \epsilon \|H\|_2 \leq 0$ whereas if $b \neq 0$ then $(H \bullet S)_1 - \epsilon \|H\|_2 < 0$ on $\{ \omega_1 \}$, hence, there is no strict $\epsilon$-arbitrage. { One can also easily see that $\eps=\eps(\P)$.}
On the other hand, consider the admissible sequence of trading strategies $H^k \equiv (k,1)$, $k \in \N$.
By elementary calculus we have that $\lim_{k \to \infty} k - \sqrt{k^2 + 1} = 0$, thus,
\begin{align*}
    \lim_{k \to \infty} (H^k \bullet S)_1(\omega) - \epsilon \| H^k (\omega) \|_2 &=
    (0,1) \cdot \Delta S_1(\omega) + \epsilon \lim_{k \to \infty} (k - \sqrt{k^2 + 1})
    \\
    &=
    \begin{cases}
        0, & \omega = \omega_1, \\
        1, & \omega = \omega_2.
    \end{cases}
\end{align*}
We conclude that $(H^k)_{k \in \N}$ is an $\epsilon$-arbitrage.\hfill$\Diamond$
\end{example}

{
\begin{example}
We complement Example \ref{ex:nostrictarb_but_arb} by making the case that, independent of the power $p \ge 1$, the critical value $\epsilon(\P)$ may yield strict $\epsilon(\P)$-arbitrage.
In the same setting as Example \ref{ex:nostrictarb_but_arb}, we consider a single asset given by
\[
S_0(\omega) := 0 \quad \text{and} \quad S_1(\omega) := 
\begin{cases}
\epsilon, & \omega = \omega_1, \\
1 + \epsilon, & \omega = \omega_2.
\end{cases}
\]
For this particular scenario, we have $\epsilon(\mathbb P) = \epsilon$ while there is strict $\epsilon$-arbitrage.
\hfill$\Diamond$
\end{example}
}

Similarly to the classical case, it will prove crucial to introduce the set
$$K := \{ (H\bullet S)_T - \epsilon \|H\|_p \colon H \in \mathcal H\},$$
and denote by $\overline{K}$ its closure w.r.t.\ convergence in probability. {Indeed, the set $K$ need not be closed, see 
Example~\ref{ex:KbarK} for a simple setting showing this phenomenon and Theorem~\ref{thm:K=barK} for sufficient conditions ensuring $K = \overline{K}$.}
With this at hand, we also set
\[
C := \{ X \in L^0(\Omega,\F_T, \P; \R) \colon \exists Y \in \overline{K} \text{ with } X \le Y \; {\P\text{-a.s.}}\}.
\]
It turns out that $\NA_\epsilon$ provides much information concerning the latter set. For instance it guarantees the crucial closure property:

\begin{theorem}
\label{cor:C_closed}
Under $\NA_\epsilon$ the set $C$ is a convex cone that is closed w.r.t.\ convergence in probability and satisfies $C \cap L^0(\Omega,\F_T, \P; \R_+) = \{ 0 \}$.
\end{theorem}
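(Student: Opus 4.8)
The structure of $K$ and hence $C$ follows the classical Dalang--Morton--Willinger blueprint, with the term $-\epsilon\|H\|_p$ treated as a perturbation. First I would establish the easy structural facts. Convexity and the cone property of $K$: given $H, H' \in \mathcal H$ and $\lambda \in [0,1]$, one has $(\lambda H \bullet S)_T = \lambda (H\bullet S)_T$ and $\|\lambda H\|_p = |\lambda|\,\|H\|_p$, while positive homogeneity of the stochastic integral and of the norm give the cone property directly; the dominated element $X \le Y$ in the definition of $C$ then makes $C$ a convex cone as well (closed under addition using two witnesses $Y, Y'$ and the fact that $\overline K$ inherits convexity and the cone property from $K$, which it does since closure in probability preserves both). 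The property $C \cap L^0_+ = \{0\}$ is essentially a restatement of $\NA_\epsilon$: if $X \in C \cap L^0(\Omega,\F_T,\P;\R_+)$ with $\P[X>0]>0$, then $X \le Y$ for some $Y \in \overline K$, so there are $H^k \in \mathcal H$ with $(H^k\bullet S)_T - \epsilon\|H^k\|_p \to Y$ in probability; passing to an a.s.\ convergent subsequence and noting $Y \ge X \ge 0$ with $\P[Y>0]>0$ exhibits an $\epsilon$-arbitrage in the sense of \eqref{eq:def_epsilon_arbitrage}, a contradiction. (If $\P[X>0]=0$ then $X=0$.)

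The substantial part is the closedness of $C$ in probability. Here the standard strategy is a backward induction on the number of trading periods, reducing the multi-period statement to a one-period statement, exactly as in the proof of the classical DMW theorem (see \cite{DeSc06}). The one-period claim is: for $H_t \in L^0(\Omega,\F_{t-1},\P;\R^d)$, the set $\{H_t \cdot \Delta S_t - \epsilon |H_t|_p : H_t \in \h_t\}$ plus $L^0_+$-domination is closed in probability, and more generally each stage of the induction preserves closedness. The crux of every such argument is the following dichotomy on the predictable sets $\{\omega : |H^k_t(\omega)|_p \to \infty\}$: either the family $(H^k_t)$ can be taken bounded (after conditioning on $\F_{t-1}$ and a measurable-selection/normalization argument), in which case a subsequence converges and closedness is routine; or $|H^k_t|_p \to \infty$ on a set of positive measure, in which case, after normalizing $G^k := H^k/|H^k_t|_p$, one extracts a limiting direction $G$ with $|G_t|_p = 1$ on that set satisfying $G_t\cdot\Delta S_t \ge 0$ there --- and crucially, because of the $-\epsilon\|H^k\|_p$ term, $G$ is in fact an $\epsilon$-arbitrage (or produces one together with the induction hypothesis), contradicting $\NA_\epsilon$. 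The essential observation making the $\epsilon$-term harmless is that $\epsilon\|\lambda H\|_p = \lambda\,\epsilon\|H\|_p$ scales the same way as the integral, so the normalization argument goes through verbatim; one only needs to be slightly careful that the "blow-up direction'' $G$ has $\|G\|_p$ bounded below by a positive constant on the relevant set so that the $-\epsilon\|G\|_p$ penalty does not destroy the sign, but $|G_t|_p = 1$ there handles this.

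The main obstacle I anticipate is the measurable-selection bookkeeping in the blow-up step: one must work $\F_{t-1}$-conditionally, partition $\Omega$ into the predictable set where norms stay bounded and where they blow up, apply a Komlós-type or $\liminf$-of-Cesàro argument (or the random-subsequence lemma, e.g.\ Lemma 1.63--1.64 in \cite{DeSc06}) to extract pointwise limits of the $H^k_t$ along an $\F_{t-1}$-measurable random subsequence, and verify that the candidate limit produced by the induction hypothesis on the remaining $T-1$ periods combines with the limiting direction to give a genuine $\epsilon$-arbitrage. I would isolate the one-period closedness as a separate lemma and prove the $T$-period case by induction, feeding the one-period result in at each step; throughout, I would keep the norm $|\cdot|_p$ generic (the argument never uses strict convexity here — that becomes relevant only later for the canonical decomposition), and note the simplification that when $\epsilon > \eps(\P)$ one may replace $\overline K$ by $K$ and avoid the sequential-closure subtleties altogether, as announced before Definition~\ref{def:critical_value}.
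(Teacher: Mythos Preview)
Your treatment of convexity, the cone property, and $C \cap L^0_+ = \{0\}$ is fine and matches the paper. The substantial gap is in the closedness argument, at the blow-up step of your DMW-style induction.

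When $|H^k_t|_p \to \infty$ on a set of positive measure and $(H^k_t \cdot \Delta S_t - \epsilon|H^k_t|_p)_k$ is bounded from below, dividing by $|H^k_t|_p$ and passing to the limit gives a normalized direction $G_t$ with $|G_t|_p = 1$ satisfying $G_t \cdot \Delta S_t - \epsilon|G_t|_p \ge 0$. By absence of strict $\epsilon$-arbitrage this forces \emph{equality}, i.e.\ $G_t \in E_{\epsilon,t}$, which is \emph{not} an $\epsilon$-arbitrage and yields no contradiction to $\NA_\epsilon$. In the classical DMW proof this is exactly the point where one projects $H^k_t$ onto $\{G_t\}^\perp$ and reduces the dimension, using that $G_t\cdot\Delta S_t = 0$ so the integral is unchanged. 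That move fails here: shifting $H^k_t$ along $G_t$ changes the integral by $\epsilon$ times the shift (since $G_t\cdot\Delta S_t = \epsilon|G_t|_p$, not $0$), while the penalty $|H^k_t|_p$ does not split linearly along such a decomposition. There is neither a dimension reduction nor a contradiction available at this step, and your remark that strict convexity ``becomes relevant only later for the canonical decomposition'' misidentifies where the difficulty lies.

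The paper resolves this precisely through the canonical decomposition (Lemma~\ref{lem:H_bar} and Lemma~\ref{lem:decomp}): under no strict $\epsilon$-arbitrage and $p>1$, $E_{\epsilon,t}$ is the ray generated by a single $\bar H_t$, and every $H_t \in \bar{\mathcal H}_t$ splits as $a_t\bar H_t + G_t + \tilde G_t$ with $G_t \in F_{\epsilon,t}^\perp \cap E_{0,t}^\perp$. One then introduces the shifted increment $\Delta\bar S_t := \mathbbm 1_{\{\bar H_t\neq 0\}}(\Delta S_t - \epsilon\bar H_t^\ast)$, which satisfies \emph{classical} no-arbitrage; this is what forces boundedness of the $(G^k_t)_k$ (Lemma~\ref{lem:1_step_eps_arbitrage}), after which the scalar piece $a^k_t - |H^k_t|_p$ is controlled separately (Corollary~\ref{cor:1_step_boundedness_canonical_decomp}). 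Crucially, blow-up of $a^k_t$ along $\bar H_t$ is \emph{allowed}, and this is exactly what produces $\overline K \supsetneq K$, c.f.\ Example~\ref{ex:nostrictarb_but_arb} and the representation \eqref{eq:Kbar_equiv1}. All of this is packaged in Theorem~\ref{thm:closure}; once that theorem is in hand, closedness of $C$ follows in a few lines (take $(X^k)\to X$ in $C$, dominate by $(Y^k)\subset\overline K$, apply Theorem~\ref{thm:closure} to get $Y\in\overline K$ with $Y\ge\liminf Y^k\ge X$).
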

{
The proof is postponed to Section~\ref{sec:NA_multistep}.  
}

\subsection{Fundamental theorem of asset pricing }\label{sect:ftap}
We relate the concept of $\epsilon$-arbitrage introduced above to measures which are close to being martingale measures in the following sense:    
\begin{definition}[$\epsilon$-martingale and $\epsilon$-martingale measure]
\label{def:eps_mart}
An adapted process $X$ on a filtered probability space $(\Omega, (\F_t)_{t = 0}^T, \F, \Q)$ is called an $\epsilon$-martingale under $\Q$ if $X_T \in L^1(\Q)$ and, for $t = 1,\ldots,T$,
\begin{equation}
\label{eq:def_eps_mart}
\left|
\E_\Q[X_t | \F_{t - 1}] - X_{t - 1}
\right|_q
\le
\epsilon {\quad\Q\text{-a.s.}}
\end{equation}
We call a measure $\Q\sim\P$ an $\epsilon$-martingale measure if the discounted asset price $S$ is an $\epsilon$-martingale under $\Q$.
\end{definition}
The above concept of $\epsilon$-martingale measure is the analogous, but in a model-dependent setting, to that of $\eps$-approximating martingale measure introduced in \cite{GuOb19} in a model-free framework.

\begin{remark}
Note that, for fixed $t \in \{1,\ldots,T\}$, the condition given in \eqref{eq:def_eps_mart} can be tested by considering bounded strategies in $\mathcal H_t$.
Indeed, first remark that \eqref{eq:def_eps_mart}, $X_T \in L^1(\Q)$, and the tower property, imply that $X_t \in L^1(\Q)$ for all $t$. Second, as $|\cdot|_p$ and $|\cdot|_q$ are dual norms, given $\Delta X_t \in L^1(\Q)$, \eqref{eq:def_eps_mart} is equivalent to demanding for all $H_t \in \mathcal H_t$ with $|H_t|_p \leq 1$ that
\[
\E_\Q[H_t \cdot \Delta X_t - \epsilon |H_t|_p ] \le 0.
\]
In particular, given $X_T \in L^1(\Q)$, \eqref{eq:def_eps_mart} holds for all $t \in \{1,\ldots, T\}$ if and only if, for all $H \in \mathcal H$ with $\|H\|_p\le 1$,
\[
\E_\Q[(H \bullet X)_T - \epsilon \|H\|_p] \le 0.
\]
\hfill$\Diamond$
\end{remark}

In Example~\ref{ex:nsaem} below we show how the absence of strict $\epsilon$-arbitrage in the market does not guarantee the existence of an $\epsilon$-martingale measure. On the other hand, we will see in Remark~\ref{rem:eps_mart_arb} that the reverse is true even in a stronger form, that is, the existence of such a measure excludes the possibility of $\epsilon$-arbitrage opportunities.

\begin{example}\label{ex:nsaem}
We build on Example \ref{ex:nostrictarb_but_arb} and set $\P$ to be the uniform distribution on $\Omega$.
The set of equivalent measures $\{ \Q \colon \Q \sim \P \}$ coincides with $\{ \Q_\alpha := \alpha \delta_{\omega_1} + (1 - \alpha) \delta_{\omega_2} \colon \alpha \in (0,1) \}$.
For $\alpha \in (0,1)$ and $p \in (1,\infty)$ we compute \eqref{eq:def_eps_mart}, which is, as the filtration is trivial at time 0, given by
\[
\left|\E_{\Q_\alpha} [S_1 | \F_0 ] - S_0\right|_q = 
\left|\E_{\Q_\alpha}[S_1] \right|_q = 
|(\epsilon,0) \alpha + (\epsilon,1) (1 - \alpha)|_q > \epsilon.
\]
This situation is represented in Figure \ref{fig:no_eps_mart} below.
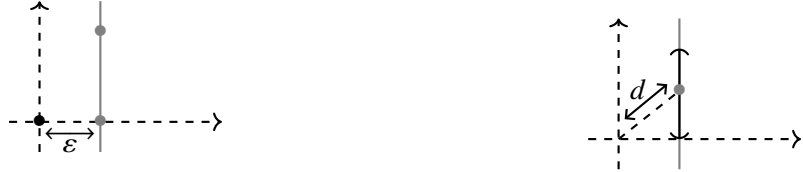
\begin{figure}[h]
\centering
\begin{subfigure}{0.45\textwidth}
\centering
\begin{tikzpicture}[scale=0.8]
\draw[thick,dashed,->] (-0.5,0)--(3,0);
\draw[thick,dashed,->] (0,-0.5)--(0,2);
\draw[thick,gray] (1,-0.5) -- (1,2);
\node at (0,0) {$\bullet$};
\node at (1,0) {\color{gray}$\bullet$};
\node at (1,1.5) {\color{gray}$\bullet$};
\node at (0.5,-0.2) {$\longleftrightarrow$};
\node at (0.5,-0.4) {$\eps$};
\end{tikzpicture}
\caption{Evolution of two assets: black bullet = position at time $0$, gray bullets = possible positions at time $1$.}
\label{fig:no_eps_mart.left}
\end{subfigure}
\qquad
\begin{subfigure}{0.45\textwidth}
\centering
\begin{tikzpicture}[scale=0.8]
\draw[thick,dashed,->] (-0.5,0)--(3,0);
\draw[thick,dashed,->] (0,-0.5)--(0,2);
\draw[thick,gray] (1,-0.5) -- (1,2);
\draw[(-),thick,black] (1,0) -- (1,1.5);
\draw[dashed,thick] (0,0)--(1,0.8);
\node at (1,0.8) {\color{gray}$\bullet$};
\draw[<->,>={Straight Barb[scale=0.8]},thick,black] (0.1,0.33) -- (0.8,0.91) node[midway,xshift=-3pt,yshift=4.5pt] {$d$};
\end{tikzpicture}
\caption{The grey bullet is the mean of $S_1$ under some measure $\Q\sim\P$, $d$ is the distance of this average from $S_0$.}
\label{fig:no_eps_mart.right}
\end{subfigure}
\caption{Figure \ref{fig:no_eps_mart.right} depicts the interval of expectations $\E_\Q[S_1]$, where $\Q$ runs over the set of measures equivalent to $\P$.
We see that any such mean (illustrated by a gray bullet) satisfies that $|\E_\Q[S_1| \F_0 ] - S_0|_q = d > \epsilon$ and, therefore, the asset $S$ violates \eqref{eq:def_eps_mart} under $\Q$.}
\label{fig:no_eps_mart}
\end{figure}
In particular, there exists no $\epsilon$-martingale measure in this market. \hfill$\Diamond$
\end{example}

In order to  show that the existence of an $\eps$-martingale measure excludes the possibility of $\eps$-arbitrage opportunities (in Remark~\ref{rem:eps_mart_arb}), we connect the concept of Doob decomposition to the notion of $\epsilon$-martingale.
In this regard, recall that in the classical no-arbitrage framework, the Doob decomposition of the assets price process $S$ under a measure $\Q$  already characterizes whether $\Q$ is a martingale measure.
That is, $\Q$ is a martingale measure if and only if the predictable part of the Doob decomposition of $S$ vanishes.
The next remark explores an analogous relation in the current framework of $\epsilon$-arbitrage.
\begin{remark}[Doob decomposition]
\label{rem:Doob_decomposition}
Let $X$ be an adapted process on the filtered probability space $(\Omega, (\F_t)_{t = 0}^T, \F, \Q)$ and denote by $(A,M)$ its Doob decomposition, with $A_0 = 0$, $A$ predictable, and $M$ a $\Q$-martingale.
Then we have
\[
\E_\Q [ X_{t + 1} | \F_t ] - X_t = \E_\Q [ \Delta X_{t+1} | \F_t ] = \E_\Q [\Delta A_{t+1} | \F_t] = A_{t + 1 } - A_t,
\]
which shows that $X$ is an $\epsilon$-martingale under $\Q$ if and only if $|\Delta A_t|_q \le \epsilon$ for every $t \in \{1,\ldots, T\}$.
To put this differently,
$X$ is an $\epsilon$-martingale under $\Q$ if and only if there exists a predictable process $A$ with $|\Delta A_t|_q \le \epsilon$  for every $t \in \{1,\ldots, T\}$ such that $X - A$ is a martingale under $\Q$.
\hfill$\Diamond$
\end{remark}

\begin{remark}[$\epsilon$-martingales and $\epsilon$-arbitrage]
\label{rem:eps_mart_arb}
Let $\Q\sim\P$ be such that $S$ is an $\epsilon$-martingale under $\Q$ with Doob decomposition $(A,M)$.
Since $M$ is a martingale under $\Q$ we obtain by classical no-arbitrage theory that, for all $H \in \mathcal H$,
\begin{equation}
\label{eq:rem.eps_mart_strict_arb.1}
(H \bullet M)_T \ge 0 \implies (H\bullet M)_T = 0.
\end{equation}
On the other hand, as $S = A + M$ and $|\Delta A_t|_q \le \epsilon$ for $t = 1,\ldots, T$ we also have, 
for all $H \in \mathcal H$,
\begin{equation}
\label{eq:rem.eps_mart_strict_arb.2}
(H\bullet M)_T = (H \bullet S)_T - (H \bullet A)_T
\ge
(H \bullet S)_T - \epsilon \|H\|_p.
\end{equation}
Let $(H^k)_{k \in \N}$ be a sequence of strategies in $\mathcal H$ such that $Y := \liminf_{k \to \infty} (H^k \bullet S)_T - \epsilon \|H^k\|_p$ is non-negative.
By \eqref{eq:rem.eps_mart_strict_arb.2} we have that $((H^k \bullet M)_T)_{k \in \N}$ is almost surely bounded from below, thus, we may invoke \cite[Proposition 6.9.1]{DeSc06} to get $H \in \mathcal H$ with
\[
(H \bullet M)_T \ge \liminf_{k \to \infty} (H^k \bullet M)_T \ge Y \ge 0.
\]
Hence, \eqref{eq:rem.eps_mart_strict_arb.1} yields that $(H \bullet M)_T = 0$ and therefore the absence of $\epsilon$-arbitrage.\hfill $\Diamond$
\end{remark}

Building on Theorem~\ref{cor:C_closed} we can obtain a version of the Fundamental Theorem of Asset Pricing (FTAP) in the present setting. In the sequel we denote 
\[\mathcal M_\epsilon(\P):= \left\{\Q\text{ is an $\epsilon$-martingale measure and } \frac{d\Q}{d\P}\in L^\infty(\Omega,\F_T,\P) \right\}.\]
 
\begin{theorem}[FTAP]\label{thm:ftap}
The market satisfies $\NA_\epsilon$ if and only if  $\mathcal M_\epsilon(\P)\neq \emptyset$.
\end{theorem}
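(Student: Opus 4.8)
The plan is to deduce the FTAP from the Hahn--Banach/Kreps--Yan separation argument applied to the closed convex cone $C$ supplied by Theorem~\ref{cor:C_closed}, with the $\epsilon$-martingale property read off via the dual characterization of $\epsilon$-martingales noted in the Remark after Definition~\ref{def:eps_mart}. The ``if'' direction is essentially Remark~\ref{rem:eps_mart_arb}: if $\mathcal M_\epsilon(\P)\neq\emptyset$, pick $\Q$ in it; then $S=A+M$ with $A$ predictable, $|\Delta A_t|_q\le\epsilon$, and $M$ a true $\Q$-martingale, and the $\liminf$-stability argument of that remark (invoking \cite[Proposition~6.9.1]{DeSc06}) rules out any $\epsilon$-arbitrage. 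The boundedness of $d\Q/d\P$ is not needed here, so this direction is immediate.

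For the ``only if'' direction, assume $\NA_\epsilon$. By Theorem~\ref{cor:C_closed}, $C$ is a convex cone, closed in probability, with $C\cap L^0_+=\{0\}$. I would now run the classical Kreps--Yan procedure (as in \cite[Chapter~5]{DeSc06}): first reduce to a dominating probability so that we may work in $L^1$, or more directly, for each $A\in\F_T$ with $\P[A]>0$ exhibit via Hahn--Banach separation in $L^1(\P)$ a measure $\Q_A\sim$ (locally) $\P$ with bounded density such that $\E_{\Q_A}[X]\le 0$ for all bounded $X\in C$ and $\E_{\Q_A}[\mathbbm 1_A]>0$; then exhaust $\Omega$ by a countable family and take a convex combination to obtain a single $\Q\sim\P$ with $d\Q/d\P\in L^\infty$ and $\E_\Q[X]\le 0$ for every bounded $X\in C$. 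The only genuinely non-classical point is translating this last inequality into the $\epsilon$-martingale property: since for bounded $H\in\mathcal H$ the random variable $(H\bullet S)_T-\epsilon\|H\|_p$ lies in $K\subseteq C$ and is bounded, we get $\E_\Q[(H\bullet S)_T-\epsilon\|H\|_p]\le 0$ for all bounded strategies $H$, which by the Remark following Definition~\ref{def:eps_mart} (duality of $|\cdot|_p$ and $|\cdot|_q$, applied stepwise via $H=H_t\mathbbm 1_{\{t\}}$) is exactly $|\E_\Q[\Delta S_t\mid\F_{t-1}]|_q\le\epsilon$ for every $t$. One must also check integrability, i.e.\ $S_T\in L^1(\Q)$: this follows because $d\Q/d\P\in L^\infty$ provided $S_T\in L^1(\P)$, and in the general case one uses the standard trick of passing to an equivalent measure under which $S$ is already integrable before separating, then noting the constructed $\Q$ inherits a bounded density relative to that one.

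\textbf{Main obstacle.} The heart of the matter has in fact already been isolated and dispatched by Theorem~\ref{cor:C_closed}: establishing the closedness of $C$ under convergence in probability under $\NA_\epsilon$ is the analogue of the ``closedness of the set of superhedgeable claims'' that is the crux of every discrete-time FTAP proof, and here it is complicated by the presence of the sequential closure $\overline K$ and the non-linear penalty $\epsilon\|H\|_p$. Granting that theorem, the remaining work is a faithful transcription of the classical Dalang--Morton--Willinger / Kreps--Yan machinery, and the single new ingredient — reinterpreting $\E_\Q[X]\le 0$ on $C$ as the pointwise $\epsilon$-martingale inequality — is routine given the duality remark. I would therefore expect the write-up to be short, spending most of its length on the separation/exhaustion bookkeeping and on the integrability reduction, and citing \cite{DeSc06} for the parts that are verbatim classical.
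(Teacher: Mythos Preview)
Your proposal is correct and follows essentially the same route as the paper: the ``if'' direction is exactly Remark~\ref{rem:eps_mart_arb}, and for the ``only if'' direction the paper likewise first passes to an equivalent probability under which $S$ is integrable, then applies the Kreps--Yan theorem to the closed convex cone $C^1:=C\cap L^1$ furnished by Theorem~\ref{cor:C_closed}, and finally reads off $|\E_\Q[\Delta S_t\mid\F_{t-1}]|_q\le\epsilon$ from $\E_\Q[X]\le 0$ for $X\in C^1$ via the $p$--$q$ duality. The only cosmetic difference is that the paper invokes Kreps--Yan as a black box (citing \cite{FoSc16}) rather than spelling out the exhaustion argument, and it performs the integrability reduction upfront so that the test functions $\mathbbm 1_A(H_t\cdot\Delta S_t-\epsilon)$ with $|H_t|_p=1$ and $A\in\F_{t-1}$ are automatically in $C^1$; you should do the same, since with bounded $H$ alone the integral $(H\bullet S)_T$ need not be bounded.
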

{
The proof is postponed to Section~\ref{sect:proof:ftap}.  
}

A simple consequence of this result is the alternative representation of $\epsilon(\P)$:
\begin{corollary} \label{corol:crit}
The critical value, given in Definition~\ref{def:critical_value}, satisfies
\begin{equation}\label{eq:critm}
\epsilon(\P)=\inf\{\epsilon\geq 0 : \mathcal M_\epsilon(\P)\neq\emptyset \}.
\end{equation}
\end{corollary}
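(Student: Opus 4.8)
The plan is to establish the two inequalities between $\epsilon(\P)$ and the right-hand side of \eqref{eq:critm}, which I abbreviate by $\tilde\epsilon := \inf\{\epsilon \ge 0 \colon \mathcal M_\epsilon(\P) \neq \emptyset\}$. Both directions are short once the FTAP (Theorem~\ref{thm:ftap}) and Theorem~\ref{thm:epscritval} are in hand; the heart of the matter is just to pass correctly between ``no strict $\epsilon$-arbitrage'' and ``no $\epsilon$-arbitrage''.

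For $\tilde\epsilon \le \epsilon(\P)$, I would take an arbitrary $\epsilon > \epsilon(\P)$. By Theorem~\ref{thm:epscritval} the market then has no $\epsilon$-arbitrage, i.e.\ $\NA_\epsilon$ holds, so Theorem~\ref{thm:ftap} yields $\mathcal M_\epsilon(\P) \neq \emptyset$ and hence $\tilde\epsilon \le \epsilon$. Taking the infimum over all such $\epsilon$ gives $\tilde\epsilon \le \epsilon(\P)$.

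For the reverse inequality $\epsilon(\P) \le \tilde\epsilon$, I would take any $\epsilon \ge 0$ with $\mathcal M_\epsilon(\P)\neq\emptyset$ and show $\epsilon(\P) \le \epsilon$. If $\epsilon > 0$, then Remark~\ref{rem:eps_mart_arb}, applied to any $\Q \in \mathcal M_\epsilon(\P)$, shows there is no $\epsilon$-arbitrage, in particular no strict $\epsilon$-arbitrage, so $\epsilon$ lies in the set appearing in Definition~\ref{def:critical_value} and thus $\epsilon(\P)\le\epsilon$. If $\epsilon = 0$, then any $\Q \in \mathcal M_0(\P)$ is a genuine martingale measure with bounded density and therefore belongs to $\mathcal M_{\epsilon'}(\P)$ for every $\epsilon' > 0$; the previous case then gives $\epsilon(\P)\le\epsilon'$ for all $\epsilon'>0$, i.e.\ $\epsilon(\P) = 0 = \epsilon$. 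Taking the infimum over all admissible $\epsilon$ yields $\epsilon(\P)\le\tilde\epsilon$, which together with the first step proves \eqref{eq:critm}.

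The argument is essentially bookkeeping, and there is no serious obstacle; the only points requiring a little care are that ``no strict $\epsilon$-arbitrage'' does not by itself produce an $\epsilon$-martingale measure (cf.\ Example~\ref{ex:nsaem}), so one cannot compare the two sets directly and must route through the sequential notion of $\epsilon$-arbitrage via Theorem~\ref{thm:epscritval}, and that the boundary value $\epsilon = 0$ is admitted in \eqref{eq:critm} but excluded from Definition~\ref{def:critical_value}, which is why the case $\epsilon=0$ is treated separately above.
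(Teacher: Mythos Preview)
Your proof is correct and follows the same approach as the paper, which also just combines Theorem~\ref{thm:ftap} with Theorem~\ref{thm:epscritval}; the paper's argument is a one-liner observing that replacing ``strict $\epsilon$-arbitrage'' by ``$\epsilon$-arbitrage'' in Definition~\ref{def:critical_value} (via Theorem~\ref{thm:epscritval}) and then applying the FTAP yields \eqref{eq:critm}. Your version is simply more explicit about the two inequalities and the harmless $\epsilon=0$ boundary case, but the substance is identical.
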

\begin{proof}
This follows by Theorem~\ref{thm:ftap} and the fact that, by Proposition~\ref{thm:epscritval},  one can replace  `strict $\epsilon$-arbitrage' with `$\epsilon$-arbitrage' in Definition~\ref{def:critical_value}.
\end{proof}

\begin{remark}\label{rem.epsp2}
We note that the infimum in \eqref{eq:critm} need not be attained, that is, we can have  $\mathcal M_{\epsilon(\P)}(\P)=\emptyset$, 
{
see Example~\ref{ex:nsaem}
and Remark~\ref{rem.epsp1}}.
\hfill$\Diamond$
\end{remark}

{
\begin{remark} 
Note that the critical value satisfies
\begin{equation}\label{epsp3}
\eps(\P)=\inf_{\Q}\max_{t=0,1,\ldots,T-1}\esssup_\P\left|\E_\Q[S_{t+1}-S_t|\F_t]\right|,
\end{equation}
where the infimum is taken over measures $\Q$ equivalent to $\P$ having  bounded density w.r.t.\ $\P$.
Indeed, let us denote by $\hat\eps$ the rhs of \eqref{epsp3}, and 
consider any $\eps$ s.t. $\mathcal{M}_\eps(\P)\neq\emptyset$, say with $\Q\in\mathcal{M}_\eps(\P)$. 
Then Theorem~\ref{thm:ftap} implies that $\sup_{t=0,1,\ldots,T-1}\left|\E_\Q[S_{t+1}-S_t|\F_t]\right|\leq\eps$ a.s., so that $\hat\eps\leq\eps$. This in turn gives $\hat\eps\leq\eps(\P)$. 
On the other hand, for any $\eps<\eps(\P)$, consider $\delta>0$ s.t.\ $\eps<\eps(\P)-\delta$.
Then $\mathcal{M}_{\eps+\delta}(\P)=\emptyset$, which, by definition of $\mathcal{M}_{\eps+\delta}(\P)$,
implies that for every $\Q\sim\P$  with bounded density w.r.t. $\P$ there is a $t\in\{0,1,\ldots,T-1\}$ s.t. $\left|\E_\Q[S_{t+1}-S_t|\F_t]\right|>\eps+\delta$ with strictly positive probability, so that $\max_{t=0,1,\ldots,T-1}\esssup_\P\left|\E_\Q[S_{t+1}-S_t|\F_t]\right| >\epsilon+\delta$. 
This yields $\hat\eps\geq \eps+\delta>\eps$, and thus $\hat\eps=\eps(\P)$.

\hfill$\Diamond$
\end{remark}
}

\subsection{Canonical decomposition of trading strategies}\label{sec:canonic_subsec}
In order to obtain the previous theorems we need to understand much better what are the consequences of $\NA_\epsilon$.
For this reason we introduce
\begin{align*}
E_{\epsilon, t} :=& \left\{ H_t \in \mathcal H_t \colon H_t \cdot \Delta S_t = \epsilon |H_t|_p  \right\}, t=1,\ldots,T,\quad
&E_\epsilon := \left\{ H \in \mathcal H \colon H_t \in E_{\epsilon,t} \text{ for } t = 1,\ldots, T \right\},\\
E_{0, t} :=& \left\{ H_t \in \mathcal H_t \colon H_t \cdot \Delta S_t = 0  \right\}, t=1,\ldots,T,\quad
&E_0 :=\left\{ H \in \mathcal H \colon H_t \in E_{0,t}
\text{ for } t = 1,\ldots, T \right\}.
\end{align*}
As $p$- and $q$-norms satisfy a dual relation, for $x \in \R^d$, we denote by
\begin{equation}
\label{eq:def.dual_vector}
x^\ast := \arg \max \left\{ y \cdot x \colon |y|_q = 1 \right\}
\end{equation}
a measurable selection of dual vectors w.r.t.\ the $p$-norm, which can be explicitly defined by
$x^\ast_i = \sgn(x_i) |x_i|^{p - 1} / |x|^{p-1}_p$ for $i \in \{1,\ldots,d\}$\footnote{The sign function is given by $\sgn(x)=-1$ for $x<0$, $\sgn(0)=0$, and $\sgn(x)=1$ for $x>0$.}. {Note that $x\cdot x^\ast=|x|_p$.}
It will prove equally important to consider the sets
\begin{align*}
F_{\epsilon,t} :=& \left\{ X_t \in \mathcal H_t \colon X_t \cdot H_t = |X_t|_q |H_t|_p,\,\forall H_t \in E_{\epsilon,t} \right\}, \quad
&F_\epsilon := \left\{ X \in \mathcal H \colon X_t \in F_{\epsilon,t} \mbox{ for }t = 1,\ldots,T\right\}
\\
F_{\epsilon,t}^\perp :=& \left\{ H_t \in \mathcal H_t \colon X_t \cdot H_t = 0,\,\forall X_t \in F_{\epsilon,t} \right\}, \quad
&F_\epsilon^\perp := \left\{ H \in \mathcal H \colon H_t \in F_{\epsilon,t}^\perp \mbox{ for }t = 1,\ldots,T \right\},
\end{align*}
where the definitions above of $F_{\epsilon,t}^\perp$ and $F_{\epsilon}^\perp$ are only for $p > 1$.
As the case $p = 1$ behaves differently, we give the definition of $F_{\epsilon,t}^\perp$ when $p = 1$ after Lemma \ref{lem:H_bar} below.

We illustrate a first connection between the absence of $\epsilon$-arbitrage and these sets in the following simple lemma.

\begin{lemma}\label{lem:E_conv_cone}
Under the absence of strict $\epsilon$-arbitrage, the sets $ E_{\epsilon,t}$, $t = 1,\ldots, T$, are convex cones.     
\end{lemma}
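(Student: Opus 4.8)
The plan is to check separately the two properties that, together with $0\in E_{\epsilon,t}$, characterise a convex cone: stability under multiplication by nonnegative scalars, and stability under addition. The first is purely algebraic and uses no hypothesis; the hypothesis (absence of strict $\epsilon$-arbitrage) is needed only for the second.

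\emph{The cone part.} Fix $t\in\{1,\ldots,T\}$. Clearly $0\in E_{\epsilon,t}$, since $0\cdot\Delta S_t=0=\epsilon|0|_p$. If $H_t\in E_{\epsilon,t}$ and $\lambda\ge 0$ (more generally, $\lambda$ a nonnegative $\F_{t-1}$-measurable random variable), then by linearity of $H_t\mapsto H_t\cdot\Delta S_t$ and positive homogeneity of $|\cdot|_p$,
\[
(\lambda H_t)\cdot\Delta S_t=\lambda\,(H_t\cdot\Delta S_t)=\lambda\,\epsilon|H_t|_p=\epsilon|\lambda H_t|_p,
\]
so $\lambda H_t\in E_{\epsilon,t}$.

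\emph{Stability under addition.} Let $H_t,G_t\in E_{\epsilon,t}$ and set $K_t:=H_t+G_t\in\mathcal H_t$. The triangle inequality for $|\cdot|_p$ gives, pointwise,
\[
K_t\cdot\Delta S_t=H_t\cdot\Delta S_t+G_t\cdot\Delta S_t=\epsilon|H_t|_p+\epsilon|G_t|_p\ \ge\ \epsilon|K_t|_p,
\]
so $K_t\cdot\Delta S_t-\epsilon|K_t|_p\ge 0$ a.s. Suppose, for contradiction, that equality fails on a set of positive probability, i.e.\ that
\[
A:=\bigl\{\,|H_t|_p+|G_t|_p>|K_t|_p\,\bigr\}
\]
has $\P(A)>0$. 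Since $H_t,G_t$ are $\F_{t-1}$-measurable, $A\in\F_{t-1}$, hence $\bar H_t:=K_t\mathbbm 1_A\in\mathcal H_t$. Define $\bar H\in\mathcal H$ by putting this as the $t$-th component and $0$ elsewhere. Then $(\bar H\bullet S)_T=\mathbbm 1_A\,(K_t\cdot\Delta S_t)$ and $\|\bar H\|_p=\mathbbm 1_A\,|K_t|_p$, so
\[
(\bar H\bullet S)_T-\epsilon\|\bar H\|_p=\mathbbm 1_A\bigl(K_t\cdot\Delta S_t-\epsilon|K_t|_p\bigr),
\]
which is $\ge 0$ a.s.\ and, on $A$, equals $\epsilon(|H_t|_p+|G_t|_p)-\epsilon|K_t|_p>0$. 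Thus $\bar H$ is a strict $\epsilon$-arbitrage, contradicting the hypothesis. Hence $K_t\cdot\Delta S_t=\epsilon|K_t|_p$ a.s., i.e.\ $H_t+G_t\in E_{\epsilon,t}$.

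\emph{Conclusion.} A subset of a vector space containing the origin, closed under addition and under multiplication by nonnegative scalars, is a convex cone: for $\alpha\in[0,1]$, $\alpha H_t+(1-\alpha)G_t$ is obtained by scaling and then adding elements of $E_{\epsilon,t}$. Therefore each $E_{\epsilon,t}$ is a convex cone. The only place the absence of strict $\epsilon$-arbitrage is used is the localisation onto the $\F_{t-1}$-measurable set $A$, which converts a pointwise defect in superadditivity of $|\cdot|_p$ along $H_t,G_t$ into an honest strict $\epsilon$-arbitrage; I do not anticipate any other difficulty.
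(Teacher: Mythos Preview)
Your proof is correct and follows essentially the same approach as the paper's: both use the triangle inequality to get $(H_t+G_t)\cdot\Delta S_t\ge\epsilon|H_t+G_t|_p$ and then invoke the absence of strict $\epsilon$-arbitrage to force equality. Your version is simply more explicit (you spell out the extension to a full trading strategy and localise to the $\F_{t-1}$-set $A$), though the localisation is in fact unnecessary---$K_t$ itself, extended by zero, already yields a strict $\epsilon$-arbitrage whenever $\P(A)>0$.
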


\begin{proof} Fix any $t \in \{1,\ldots, T\}$.
Clearly $H_t\in  E_{\epsilon,t}$ implies $c H_t\in  E_{\epsilon,t}$ if $c$ is a positive constant. On the other hand, if $H^1_t,H^2_t\in  E_{\epsilon,t}$ then
\begin{equation}
\label{eq:conv_cone}
(H^1_t+H^2_t)\cdot \Delta S_t = \epsilon\{|H^1_t|_p+ |H^2_t|_p\}\geq \epsilon |H^1_t+H^2_t|_p.
\end{equation}
Then, by the absence of strict $\epsilon$-arbitrage, this must be an equality, and thus $H^1_t+H^2_t \in E_{\epsilon,t}$.
\end{proof}

However this connection runs much deeper. For instance, in Lemma~\ref{lem:H_bar} below we establish that, provided there is no strict $\epsilon$-arbitrage and $p > 1$, the set $E_{\epsilon,t}$ is essentially at most one-dimensional.

\begin{lemma}\label{lem:H_bar}
Assume the absence of strict $\epsilon$-arbitrage and let $t = 1,\ldots, T$.
Then there is $\bar H_t \in E_{\epsilon,t}$ with $|\bar H_t|_p \in \{0,1\}$ such that $\bar H_t^\ast \in F_{\epsilon,t}$ 
and
\begin{align*}
    \bar H_t &= \arg\max \{ \P[ H_t \neq 0] \colon H_t \in E_{\epsilon,t} \} &&\mbox{if }p>1, \\
    \bar H_{t{,i}}
    &= \arg\max \{ \P[H_{t,i} \neq 0] \colon H_t \in E_{\epsilon,t} \} && \mbox{if }p = 1, \,  i =1,\ldots,d.
\end{align*}
Moreover, when $p > 1$ we have
\begin{align*}
    E_{\epsilon,t} =\{ a_t \bar H_t \colon a_t \in L^0(\Omega,\F_{t-1},\P;\R_+) \},\quad
    F_{\epsilon,t}^\perp = \left\{ H_t \in \mathcal H_t \colon H_t \cdot \bar H_t^\ast = 0, \{ H_t = 0 \} \supseteq \{ \bar H_t = 0 \} \right\}.
\end{align*}
\end{lemma}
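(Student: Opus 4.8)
The plan is to construct $\bar H_t$ by exhausting supports inside the convex cone $E_{\epsilon,t}$, and then to read off the descriptions of $E_{\epsilon,t}$, $F_{\epsilon,t}$, $F_{\epsilon,t}^\perp$ from the equality case of the $p$-norm triangle inequality. The only inputs I will use are: Lemma~\ref{lem:E_conv_cone}, which gives that $H^1_t+H^2_t\in E_{\epsilon,t}$ whenever $H^1_t,H^2_t\in E_{\epsilon,t}$ and hence $|H^1_t+H^2_t|_p=|H^1_t|_p+|H^2_t|_p$ almost surely; the elementary facts that $E_{\epsilon,t}$ is stable under multiplication by nonnegative $\F_{t-1}$-measurable scalars and under a.s.\ pointwise limits (the identity $H_t\cdot\Delta S_t=\epsilon|H_t|_p$ and $\F_{t-1}$-measurability both pass to these operations); and the equality cases of the triangle inequality, namely that for $p>1$ strict convexity forces $x$ and $y$ with $|x+y|_p=|x|_p+|y|_p$ to be nonnegative multiples of one another (so $x^\ast=y^\ast$ where both are nonzero), whereas for $p=1$ this only forces $\sgn x_i=\sgn y_i$ on $\{x_i\ne0\}\cap\{y_i\ne0\}$ for each $i$.

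\textbf{Constructing $\bar H_t$ and the arg max property.} For $p>1$ I would set $s:=\sup\{\P[H_t\ne0]\colon H_t\in E_{\epsilon,t}\}$, pick $(H^n_t)_n\subset E_{\epsilon,t}$ with $\P[H^n_t\ne0]\to s$, and rescale each by $\mathbbm{1}_{\{H^n_t\ne0\}}/|H^n_t|_p$ to arrange $|H^n_t|_p\in\{0,1\}$. Writing $G^n_t:=H^1_t+\dots+H^n_t\in E_{\epsilon,t}$, the equality case shows that on each $\{H^k_t\ne0\}$ all the $H^j_t$ coincide with one common unit vector, so $G^n_t/|G^n_t|_p$ is, for $n\ge k$, constantly equal to that vector there, while $G^n_t\equiv0$ off $\bigcup_k\{H^k_t\ne0\}$; thus the normalized partial sums converge a.s.\ to some $\bar H_t\in E_{\epsilon,t}$ with $|\bar H_t|_p\in\{0,1\}$ and $\{\bar H_t\ne0\}=\bigcup_k\{H^k_t\ne0\}$ up to a null set, whence $\P[\bar H_t\ne0]=s$. (If $E_{\epsilon,t}=\{0\}$, take $\bar H_t=0$.) For $p=1$ I would run this coordinatewise: construct, for each $i$, a maximizer $\bar H^i_t$ of $\P[H_{t,i}\ne0]$ over $E_{\epsilon,t}$ in the same fashion (using the coordinatewise equality case), and set $\bar H_t:=\sum_{i=1}^d\bar H^i_t$, renormalized to $|\bar H_t|_1\in\{0,1\}$; since all $\bar H^j_t$ lie in $E_{\epsilon,t}$, on $\{\bar H^i_{t,i}\ne0\}$ each $\bar H^j_{t,i}$ either vanishes or has the sign of $\bar H^i_{t,i}$, so coordinate $i$ suffers no cancellation and $\P[\bar H_{t,i}\ne0]$ remains maximal.

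\textbf{The algebraic identifications.} For $p>1$ and $H_t\in E_{\epsilon,t}$, maximality of the support rules out $\P[\{H_t\ne0\}\setminus\{\bar H_t\ne0\}]>0$ (else $\bar H_t+H_t\mathbbm{1}_{\{\bar H_t=0\}}\in E_{\epsilon,t}$ would have strictly larger support), so $\{H_t\ne0\}\subseteq\{\bar H_t\ne0\}$; applying the equality case to $H_t+\bar H_t\in E_{\epsilon,t}$ then gives $H_t=|H_t|_p\bar H_t$ on $\{\bar H_t\ne0\}$, i.e.\ $H_t=a_t\bar H_t$ with $a_t:=|H_t|_p\in L^0(\Omega,\F_{t-1},\P;\R^+)$, which is the claimed $E_{\epsilon,t}=\{a_t\bar H_t\}$. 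From $x\cdot x^\ast=|x|_p$ and $|\bar H_t^\ast|_q=\mathbbm{1}_{\{\bar H_t\ne0\}}$ one gets $\bar H_t^\ast\cdot(a_t\bar H_t)=a_t|\bar H_t|_p=|\bar H_t^\ast|_q|a_t\bar H_t|_p$, so $\bar H_t^\ast\in F_{\epsilon,t}$; in the $p=1$ case the same membership holds because, by maximality, any $K\in E_{\epsilon,t}$ has $\{K_i\ne0\}\subseteq\{\bar H_{t,i}\ne0\}$ with $\sgn K_i=\sgn\bar H_{t,i}$ there, so $\bar H_t^\ast\cdot K=\sum_i\sgn(\bar H_{t,i})K_i=|K|_1$ on $\{\bar H_t\ne0\}$ and both sides vanish off it. Finally, substituting $E_{\epsilon,t}=\{a_t\bar H_t\}$ into the definition of $F_{\epsilon,t}$ (and taking $a_t\equiv1$) shows $X_t\in F_{\epsilon,t}$ iff $X_t\cdot\bar H_t=|X_t|_q|\bar H_t|_p$, which by strict convexity ($p>1$) means $X_t=|X_t|_q\bar H_t^\ast$ on $\{\bar H_t\ne0\}$ and is vacuous on $\{\bar H_t=0\}$; then demanding $X_t\cdot H_t=0$ for all such $X_t$ — testing with $X_t=\bar H_t^\ast\mathbbm{1}_{\{\bar H_t\ne0\}}$ and with $X_t=g\,\mathbbm{1}_{\{\bar H_t=0\}}$ for arbitrary $g\in\mathcal H_t$ — yields exactly $F_{\epsilon,t}^\perp=\{H_t\in\mathcal H_t\colon H_t\cdot\bar H_t^\ast=0,\ \{H_t=0\}\supseteq\{\bar H_t=0\}\}$.

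\textbf{Main obstacle.} I expect the exhaustion step to be the crux: one has to verify carefully that the normalized partial sums really do converge almost surely (this is exactly where strict convexity, resp.\ the coordinatewise sign condition, is essential, as it forces a consistent ``direction'' along the sequence), that the limit lies in $E_{\epsilon,t}$, and that it attains $s$; there is also some bookkeeping with measurability of the limiting direction, the convention $0^\ast:=0$, and the $p=1$ coordinatewise argument. Once $\bar H_t$ is in place, the remaining identifications are routine applications of the equality cases.
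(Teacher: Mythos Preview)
Your proposal is correct and follows essentially the same route as the paper; the only substantive difference is in how the exhaustion step is executed. The paper sidesteps the convergence discussion entirely by taking disjoint sets $A^k := \{H^k_t \neq 0,\ H^j_t = 0\text{ for }j<k\}$ (respectively $A^k_i := \{H^k_{t,i} \neq 0,\ H^j_{t,i} = 0\text{ for }j<k\}$ when $p=1$) and setting $\bar H_t := \sum_k \mathbbm 1_{A^k} H^k_t/|H^k_t|_p$, so the sum is locally finite and convergence is automatic. Your normalized-partial-sums variant is fine for $p>1$ --- strict convexity forces all nonzero $H^j_t(\omega)$ to coincide, so $G^n_t/|G^n_t|_p$ stabilizes --- but it does \emph{not} transfer ``in the same fashion'' to $p=1$: there the equality case only pins down signs coordinatewise, and the other coordinates of $G^n_t/|G^n_t|_1$ can genuinely oscillate (think of a maximizing sequence alternating between $e_1$ and $e_2$ in growing blocks), so the limit need not exist. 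The disjoint-set construction fixes this at no cost, and your subsequent aggregation $\bar H_t := \sum_i \bar H^i_t$ and the algebraic identifications of $E_{\epsilon,t}$, $F_{\epsilon,t}$, $F_{\epsilon,t}^\perp$ then match the paper's argument.
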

{
The proof is postponed to Section~\ref{sect:proofs.candec}. 
}

Now, we can introduce $F_{\epsilon,t}^\perp$ for the case $p = 1$:
\begin{align*}
F_{\epsilon,t}^\perp := \left\{ H_t \in \mathcal H_t \colon \bar H_t^\ast \cdot H_t = 0, \, \{H_{t,i} \neq 0 \} \subseteq \{\bar H_{t,i} \neq 0 \}\,  \forall i \in \{1,\ldots,d\} \right\}.
\end{align*}
Since $\bar H_t$, that is given in Lemma \ref{lem:H_bar} above, and in particular the set $\{ \bar H_t \neq 0 \}$ will play a special role, cf.\ Lemma \ref{lem:1_step_H_bar_complement}, we introduce the notation
\begin{align*}
\bar{\mathcal H}_t :=& 
\begin{cases}
    \left\{ H_t \in \mathcal H_t \colon \{H_t \neq 0 \} \subseteq \{ \bar H_t \neq 0\} \right\} & p > 1, \\[0.1cm]
    \left\{ H_t \in \mathcal H_t \colon  \{H_{t,i} \neq 0 \} \subseteq \{ \bar H_{t,i} \neq 0 \}\, \forall i \in \{1,\ldots,d\} \right\} & p = 1,
\end{cases}
\\
\bar{\mathcal H} :=& \left\{ H \in \mathcal H \colon H_t \in \bar{\mathcal H}_t\, \mbox{ for }t=1,\ldots, T\right\}.
\end{align*}

In Lemma~\ref{lem:decomp} below we show how trading strategies admit a canonical  decomposition in terms of the sets defined above. This decomposition will be crucial in the proofs of our main results.

\begin{lemma}
\label{lem:decomp}
Assume the absence of strict $\epsilon$-arbitrage.
Then any $H_t \in \bar{\mathcal H}_t$ can be uniquely decomposed as
\begin{equation}
    \label{eq:canonical_decomp}
    H_t = a_t \bar H_t + G_t + \tilde G_t,
\end{equation}
where $a_t \in L^0(\Omega,\F_{t-1},\P;\R)$, $G_t \in F_{\epsilon,t}^\perp \cap E_{0,t}^\perp$, and $\tilde G_t \in F_{\epsilon,t}^\perp \cap E_{0,t}$.
In particular, on $\{ \bar H_t \neq 0 \}$ we have
\begin{equation}
    \label{eq:p-norm_lower_bound}
    a_t = \bar H_t^\ast \cdot H_t \le |H_t|_p\quad {\P\text{-a.s.}}
\end{equation}
In case $p = 2$, the decomposition in \eqref{eq:canonical_decomp} is orthogonal.
\end{lemma}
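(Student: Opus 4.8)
I would fix $t\in\{1,\dots,T\}$ and work at the level of single-period strategies, since a strategy active only at time $t$ is admissible and hence the absence of strict $\epsilon$-arbitrage in the market entails the corresponding single-period statement at $t$. I would obtain the decomposition by first peeling off the component of $H_t$ along $\bar H_t$ — leaving a remainder in $F_{\epsilon,t}^\perp$ — and then splitting $F_{\epsilon,t}^\perp$ further as $(F_{\epsilon,t}^\perp\cap E_{0,t}^\perp)\oplus(F_{\epsilon,t}^\perp\cap E_{0,t})$. The step I expect to be the main obstacle, and exactly what makes this last splitting stay inside $F_{\epsilon,t}^\perp$, is the identity $\bar H_t^\ast\in E_{0,t}^\perp$ (for $p>1$); equivalently, $K\cdot\bar H_t^\ast=0$ for every $K\in E_{0,t}$.

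To prove it, I would first show that the absence of strict $\epsilon$-arbitrage forces $|\bar H_t+\lambda K|_p\ge|\bar H_t|_p$ a.s.\ for all $K\in E_{0,t}$ and all $\lambda\in L^0(\Omega,\F_{t-1},\P;\R)$. Indeed, were this false, the $\F_{t-1}$-event $A:=\{\,|\bar H_t+\lambda K|_p<|\bar H_t|_p\,\}$ would have positive probability and, because $|\bar H_t|_p\in\{0,1\}$ by Lemma~\ref{lem:H_bar}, would satisfy $A\subseteq\{\bar H_t\neq0\}$ with $|\bar H_t|_p=1$ on $A$; the strategy $H$ with $H_t:=(\bar H_t+\lambda K)\mathbf{1}_A$ and $H_s:=0$ for $s\neq t$ would then satisfy, using $\bar H_t\cdot\Delta S_t=\epsilon|\bar H_t|_p$ (as $\bar H_t\in E_{\epsilon,t}$) and $K\cdot\Delta S_t=0$ (as $K\in E_{0,t}$), the relation $(H\bullet S)_T-\epsilon\|H\|_p=\epsilon(1-|\bar H_t+\lambda K|_p)\mathbf{1}_A\ge0$ a.s.\ with strict inequality on $A$ — a strict $\epsilon$-arbitrage, contradiction. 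Taking $\lambda$ over a countable dense subset of $\R$ and using continuity of $\lambda\mapsto|\bar H_t+\lambda K|_p$, we deduce that a.s.\ $\lambda=0$ minimizes this convex map; since $p>1$, $|\cdot|_p$ is differentiable on $\R^d\setminus\{0\}$ with gradient $x\mapsto x^\ast$, so on $\{\bar H_t\neq0\}$ the first-order condition reads $\bar H_t^\ast\cdot K=0$ (and trivially on $\{\bar H_t=0\}$, where $\bar H_t^\ast$ may be taken to vanish). As $K\in E_{0,t}$ was arbitrary, $\bar H_t^\ast\in E_{0,t}^\perp$.

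Given $H_t\in\bar{\mathcal H}_t$, set $a_t:=\bar H_t^\ast\cdot H_t$. Using $\bar H_t^\ast\cdot\bar H_t=|\bar H_t|_p\in\{0,1\}$, the fact that $H_t$ vanishes off $\{\bar H_t\neq0\}$, and the explicit form of $F_{\epsilon,t}^\perp$ from Lemma~\ref{lem:H_bar}, a one-line check shows $H_t-a_t\bar H_t\in F_{\epsilon,t}^\perp$. Since $F_{\epsilon,t}^\perp$, $E_{0,t}$ and $E_{0,t}^\perp$ are $L^0(\Omega,\F_{t-1},\P;\R)$-submodules of $L^0(\Omega,\F_{t-1},\P;\R^d)$, they correspond to $\F_{t-1}$-measurable random subspaces of $\R^d$; let $\Pi_t$ be the induced $\F_{t-1}$-measurable orthogonal projection onto the random subspace associated with $E_{0,t}^\perp$, so that $\mathrm{Id}-\Pi_t$ projects onto that of $E_{0,t}$. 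Put $G_t:=\Pi_t(H_t-a_t\bar H_t)$ and $\tilde G_t:=(H_t-a_t\bar H_t)-G_t$; then $G_t\in E_{0,t}^\perp$ and $\tilde G_t\in E_{0,t}$ by construction. Moreover $\Pi_t\bar H_t^\ast=\bar H_t^\ast$ by the previous paragraph, so self-adjointness of $\Pi_t$ gives $G_t\cdot\bar H_t^\ast=(H_t-a_t\bar H_t)\cdot\bar H_t^\ast=0$, whence $\tilde G_t\cdot\bar H_t^\ast=0$ too, while the support conditions in the definition of $F_{\epsilon,t}^\perp$ pass to $G_t$ and $\tilde G_t$ because $H_t-a_t\bar H_t$ vanishes off $\{\bar H_t\neq0\}$. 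Hence $G_t,\tilde G_t\in F_{\epsilon,t}^\perp$, which gives existence.

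For uniqueness, if $a_t\bar H_t+G_t+\tilde G_t=a_t'\bar H_t+G_t'+\tilde G_t'$ with the stated memberships, then pairing the difference with $\bar H_t^\ast$ and using that $G_t,G_t',\tilde G_t,\tilde G_t'$ are orthogonal to $\bar H_t^\ast$ while $\bar H_t\cdot\bar H_t^\ast=1$ on $\{\bar H_t\neq0\}$ yields $a_t=a_t'$ there; then $(G_t-G_t')+(\tilde G_t-\tilde G_t')=0$, with one term in $E_{0,t}^\perp$ and the other in $E_{0,t}$, forces both to vanish since $E_{0,t}^\perp\cap E_{0,t}=\{0\}$. The identity $a_t=\bar H_t^\ast\cdot H_t$ on $\{\bar H_t\neq0\}$ is then immediate, and $a_t\le|\bar H_t^\ast|_q|H_t|_p=|H_t|_p$ by H\"older (with $|\bar H_t^\ast|_q=1$ there). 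When $p=2$ one has $\bar H_t^\ast=\bar H_t$ on $\{\bar H_t\neq0\}$, so $a_t\bar H_t\perp F_{\epsilon,t}^\perp$ and $G_t\in E_{0,t}^\perp\perp E_{0,t}\ni\tilde G_t$, i.e.\ the three summands are mutually orthogonal. Finally, $p=1$ requires a parallel but genuinely different argument, because $|\cdot|_1$ is only one-sidedly differentiable and the first-order step then yields only $\bar H_t^\ast\cdot K=0$ for $K\in E_{0,t}\cap\bar{\mathcal H}_t$; one works entirely inside $\bar{\mathcal H}_t$, using the orthogonal projection onto the submodule $E_{0,t}\cap\bar{\mathcal H}_t$ and its complement there, and checks every step componentwise via the coordinatewise descriptions of $\bar H_t$, $\bar{\mathcal H}_t$ and $F_{\epsilon,t}^\perp$ in Lemma~\ref{lem:H_bar}.
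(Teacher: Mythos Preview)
Your proof is correct and follows the same two–stage architecture as the paper's: first set $a_t:=\bar H_t^\ast\cdot H_t$ so that $H_t-a_t\bar H_t\in F_{\epsilon,t}^\perp$, then split this remainder by an $\F_{t-1}$-measurable orthogonal projection into an $E_{0,t}$-part and its complement. The paper invokes \cite[Lemma 6.2.1]{DeSc06} for a projection $P_{0,t}$ onto $F_{\epsilon,t}^\perp\cap E_{0,t}$ and then asserts $G_t=(\id-P_{0,t})(H_t-a_t\bar H_t)\in F_{\epsilon,t}^\perp\cap E_{0,t}^\perp$; you instead project onto $E_{0,t}^\perp$ and verify that the projection keeps you inside $F_{\epsilon,t}^\perp$.

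The one substantive addition in your argument is the auxiliary lemma $\bar H_t^\ast\in E_{0,t}^\perp$ (for $p>1$), proved by a neat strict-$\epsilon$-arbitrage contradiction. This is not stated or proved in the paper, but it is precisely what is needed to pass from ``$G_t\perp(F_{\epsilon,t}^\perp\cap E_{0,t})$'' to ``$G_t\in E_{0,t}^\perp$'' in the paper's version, and equivalently it is what makes your projection $\Pi_t$ onto $E_{0,t}^\perp$ preserve $F_{\epsilon,t}^\perp$ via $\Pi_t\bar H_t^\ast=\bar H_t^\ast$. So your route is a mild but genuine strengthening: same skeleton, but with the compatibility $\bar H_t^\ast\perp E_{0,t}$ made explicit. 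Your handling of uniqueness, of \eqref{eq:p-norm_lower_bound}, of the $p=2$ orthogonality, and your remark that $p=1$ needs the coordinatewise description from Lemma~\ref{lem:H_bar}, all match the paper.
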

{
The proof is postponed to Section~\ref{sect:proofs.candec}.  
}

The way of representing $H_t \in \h_t$ as
in \eqref{eq:canonical_decomp} will be referred to as the \emph{canonical decomposition}. 

\begin{remark}
Figure~\ref{fig:dec} is an illustration of the canonical decomposition of elements of $\mathcal H_t$ that is provided in \eqref{eq:canonical_decomp}.
There, $F_{\epsilon,t}^\perp$ can be interpreted as a proper hyperplane which can be further orthogonally decomposed into the spaces $F_{\epsilon,t}^\perp \cap E_{0,t}$ and $F_{\epsilon,t}^\perp \cap E_{0,t}^\perp$.
\begin{figure}[h]
\centering
\begin{tikzpicture}[scale=0.8]
\draw[thick,dashed,->] (-0.3,0,0) -- (2.5,0,0); %node[anchor=north east]{$x$};
\draw[thick,dashed,->] (0,-0.3,0) -- (0,2,0); %node[anchor=north west]{$y$};
\draw[thick,dashed,->] (0,0,-0.5) -- (0,0,3); %node[anchor=south]{$z$};
\node at (2.65,-0.4,0) {$\bar H_t$};
\draw[thick,->,blue] (1,0,0) -- (1,1,0);
\draw[thick,->,red] (1,0,0) -- (1,0,2);
\node at (0.5,0.1,0) {$\longleftrightarrow$};
\node at (0.5,0.3,0) {$\eps$};
\node at (2.2,1,0) {\color{blue}$F_{\epsilon,t}^\perp \cap E_{0,t}^\perp$};
\node[xshift=1.5em] at (1.2,-0.5,1) {\color{red}$F_{\epsilon,t}^\perp \cap E_{0,t}$};
\draw (1,2,3) -- (1,2,-3) -- (1,-2,-3) -- (1,-2,3) -- cycle;
\node at (1.2,-2.5,0) {$F_{\epsilon,t}^\perp$};
\end{tikzpicture}
\caption{Illustration of the canonical decomposition in \eqref{eq:canonical_decomp}.}
\label{fig:dec}
\end{figure}
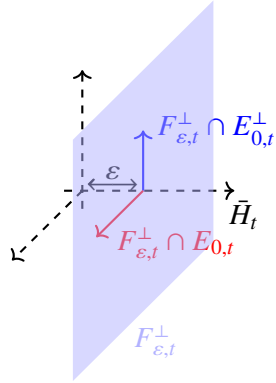
\hfill$\Diamond$
\end{remark}

\begin{lemma}
\label{lem:eps_mart_means}
Let $\Q$ be an $\epsilon$-martingale measure and $t\in \{1,\ldots,T\}$.
Then $\E_\Q[\Delta S_t | \F_{t - 1}] \in F_{\epsilon,t}$ and
\[
H_t \cdot \E_\Q[\Delta S_t | \F_{t - 1}] = 0 \quad {\Q\text{-a.s.\ }} \mbox{for all }H_t \in F_{\epsilon,t}^\perp.
\]
\end{lemma}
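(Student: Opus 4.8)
The plan is to reduce the whole statement to a single object, the conditional drift $Z_t := \E_\Q[\Delta S_t \mid \F_{t-1}]$, by first establishing $Z_t \in F_{\epsilon,t}$; from there the displayed identity will be immediate for $p>1$ (where $F_{\epsilon,t}^\perp$ is by definition the annihilator of $F_{\epsilon,t}$) and a one-line computation for $p=1$. As preliminaries I would note that, since $S_T\in L^1(\Q)$, the tower property and \eqref{eq:def_eps_mart} force $S_t\in L^1(\Q)$ for all $t$ (as in the remark following Definition~\ref{def:eps_mart}), so $\Delta S_t\in L^1(\Q)$ and $Z_t$ is a well-defined, a.s.\ finite, $\F_{t-1}$-measurable random vector; since $\Q\sim\P$ we have $Z_t\in\mathcal H_t$, and that $\Q$ is an $\epsilon$-martingale measure says precisely $|Z_t|_q\le\epsilon$ a.s.

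The core step is the identity $H_t\cdot Z_t = \epsilon\,|H_t|_p$ for every $H_t\in E_{\epsilon,t}$. To prove it I would fix such an $H_t$, so $H_t\cdot\Delta S_t = \epsilon|H_t|_p\ge 0$, and take $\Q$-conditional expectations given $\F_{t-1}$. Since $H_t$ (hence $|H_t|_p$) is $\F_{t-1}$-measurable but possibly unbounded, I would pass through the truncations $H_t^{(n)} := H_t\,\mathbbm{1}_{\{|H_t|_p\le n\}}$, which are bounded, $\F_{t-1}$-measurable and still lie in $E_{\epsilon,t}$; for these the pull-out property gives $H_t^{(n)}\cdot Z_t = \E_\Q[H_t^{(n)}\cdot\Delta S_t\mid\F_{t-1}] = \epsilon|H_t^{(n)}|_p$, and letting $n\to\infty$ yields the claim. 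Combined with H\"older's inequality and $|Z_t|_q\le\epsilon$,
\[
\epsilon\,|H_t|_p \;=\; H_t\cdot Z_t \;\le\; |H_t|_p\,|Z_t|_q \;\le\; \epsilon\,|H_t|_p,
\]
so all inequalities are equalities; in particular $H_t\cdot Z_t = |H_t|_p|Z_t|_q$ for all $H_t\in E_{\epsilon,t}$, i.e.\ $Z_t\in F_{\epsilon,t}$.

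For the orthogonality assertion with $p>1$ there is nothing left: $Z_t\in F_{\epsilon,t}$ and $H_t\in F_{\epsilon,t}^\perp$ give $H_t\cdot Z_t=0$ directly from the definition of $F_{\epsilon,t}^\perp$. For $p=1$ (so $q=\infty$ and $\bar H_t^\ast = \sgn(\bar H_t)$ componentwise) I would instead use Lemma~\ref{lem:H_bar}: applying the equality case of H\"older to $Z_t\cdot\bar H_t = |Z_t|_\infty|\bar H_t|_1$ (valid since $\bar H_t\in E_{\epsilon,t}$ and $|\bar H_t|_1\in\{0,1\}$) shows that on $\{\bar H_t\ne 0\}$ one has $Z_{t,i} = |Z_t|_\infty\,\bar H_{t,i}^\ast$ for every index $i$ with $\bar H_{t,i}\ne 0$. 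Then, given $H_t\in F_{\epsilon,t}^\perp$, that is, $\bar H_t^\ast\cdot H_t = 0$ and $\{H_{t,i}\ne 0\}\subseteq\{\bar H_{t,i}\ne 0\}$ for all $i$, the support condition makes $H_t\cdot Z_t$ vanish on $\{\bar H_t=0\}$, while on $\{\bar H_t\ne 0\}$ only indices $i$ with $\bar H_{t,i}\ne 0$ contribute and there $\sum_i H_{t,i}Z_{t,i} = |Z_t|_\infty\,(\bar H_t^\ast\cdot H_t) = 0$.

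I do not expect a genuine obstacle. The only points requiring care are the integrability/truncation argument used to pull $H_t$ out of the conditional expectation (needed because admissible strategies are merely in $L^0$), and the bookkeeping for $p=1$, where one cannot invoke ``$F_{\epsilon,t}^\perp$ = annihilator of $F_{\epsilon,t}$'' and must instead lean on the explicit description of $\bar H_t$ from Lemma~\ref{lem:H_bar}.
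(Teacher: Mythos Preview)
Your argument is correct and follows the same route as the paper: for $H_t\in E_{\epsilon,t}$ compute $H_t\cdot\E_\Q[\Delta S_t\mid\F_{t-1}]=\epsilon|H_t|_p$ and sandwich with H\"older and $|Z_t|_q\le\epsilon$ to obtain $Z_t\in F_{\epsilon,t}$, from which the displayed orthogonality follows. You are in fact more careful than the paper---you justify the pull-out via truncation and spell out the $p=1$ orthogonality---while the paper only adds one line you omit, namely invoking Remark~\ref{rem:eps_mart_arb} upfront to secure the absence of strict $\epsilon$-arbitrage that Lemma~\ref{lem:H_bar} (and the very definition of $F_{\epsilon,t}^\perp$ for $p=1$) requires.
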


\begin{proof}
Since $\Q$ is an $\epsilon$-martingale measure, there is by Remark \ref{rem:eps_mart_arb} absence of strict $\epsilon$-arbitrage.
For any $H_t \in E_{\epsilon,t}$ we have by Hölder's inequality and the fact that $\Q$ is an $\epsilon$-martingale
\[
|\E_\Q[\Delta S_t | \F_{t - 1}]|_q |H_t|_p \ge H_t \cdot \E_\Q[\Delta S_t | \F_{t - 1}] = \epsilon |H_t|_p \ge |\E_\Q[\Delta S_t | \F_{t - 1}]|_q |H_t|_p,
\]
which means that $\E_\Q[\Delta S_t | \F_{t - 1}] \in F_{\epsilon,t}$.
\end{proof}

\subsection{Characterization of no \texorpdfstring{$\epsilon$}{}-arbitrage}
Now we introduce a notion of no $\epsilon$-arbitrage that initially may appear weaker than it really is. 
\begin{definition}\label{def:NA'}
We say a market satisfies $\NA_\epsilon'$
if the following two conditions hold:
\begin{enumerate}[label = (\arabic*)]
\item \label{it:def_NA'_strict} there is no strict $\epsilon$-arbitrage;
\item \label{it:def_NA'_NA} whenever $G \in F_\epsilon^\perp$ is s.t.  $(G\bullet S)_T \ge 0$ {$\P$-a.s.}, then  $(G \bullet S)_T = 0$ {$\P$-a.s}.
\end{enumerate}
\end{definition}
A market satisfies $\NA_\epsilon'$ if, beside not admitting strict $\eps$-arbitrage, it does not admit strict arbitrage (in the usual sense) for a subfamily of strategies (those in $F_\epsilon^\perp$). 
Indeed, the $\eps$-costs related to those strategies can be compensated by implementing a mixed strategy that is obtained by adding a strategy in $E_\eps$. See also argument after Theorem~\ref{thm:duality}.
\begin{example}
We revisit Example \ref{ex:nostrictarb_but_arb} which is again depicted down below in Figure \ref{fig:epsarb_na'.left}.
Here we find that $\bar H_1 \equiv \textrm{e}_1$, so that $E_\eps\equiv\{\alpha \textrm{e}_1\colon \alpha \in \R_+ \}$, and $F_\epsilon^\perp \equiv \{ \alpha \textrm{e}_2 \colon \alpha \in \R \}$.
For $G \equiv (0,1)$, then $(G \bullet S)_1(\omega) = \mathbbm 1_{\{\omega_2\}}(\omega)$, which violates property \ref{it:def_NA'_NA} of Definition \ref{def:NA'}.
\begin{figure}[h]
\centering
\begin{subfigure}{0.25\textwidth}
\centering
\begin{tikzpicture}[scale=0.8]
\draw[thick,dashed,->] (-0.5,0)--(3,0);
\draw[thick,dashed,->] (0,-0.5)--(0,2);
\draw[thick,gray] (1,-0.5) -- (1,2);
\node at (0,0) {$\bullet$};
\node at (1,0) {\color{gray}$\bullet$};
\node at (1,1.5) {\color{gray}$\bullet$};
%\node at (0.5,-0.2) {$\longleftrightarrow$};
%\node at (0.5,-0.4) {$\eps$};
\node at (-0.3,1.8) {e${}_2$};
\node at (2.8,-0.3) {e${}_1$};
\end{tikzpicture}
\caption{}
\label{fig:epsarb_na'.left}
\end{subfigure}
\qquad
\begin{subfigure}{0.25\textwidth}
\centering
\begin{tikzpicture}[scale=0.8]
\draw[thick,dashed,->] (-0.5,0)--(3,0);
\draw[thick,dashed,->] (0,-0.5)--(0,2);
\draw[thick,gray] (1.1,-0.5) -- (0.6,2);
\draw[thick,dashed] (0,0) -- (1,0.2);
\node at (0,0) {$\bullet$};
\node at (1,0) {\color{gray}$\bullet$};
\node at (0.7,1.5) {\color{gray}$\bullet$};
\draw[<->,>={Straight Barb[scale=0.8]},thick,black] (0,0.2) -- (0.9,0.38) node[midway,xshift=0pt,yshift=6pt] {$d$};
\node at (-0.3,1.8) {e${}_2$};
\node at (2.8,-0.3) {e${}_1$};
\end{tikzpicture}
\caption{}
\label{fig:epsarb_na'.middle}
\end{subfigure}
\qquad
\begin{subfigure}{0.25\textwidth}
\centering
\begin{tikzpicture}[scale=0.8]
\draw[thick,dashed,->] (-0.5,0)--(3,0);
\draw[thick,dashed,->] (0,-0.5)--(0,2);
\draw[thick,gray] (1,-0.5) -- (1,2);
\node at (0,0) {$\bullet$};
\node at (1,-0.3) {\color{gray}$\bullet$};
\node at (1,1.5) {\color{gray}$\bullet$};
% \draw[fill=white,radius=2.4pt] (1,-0.3) circle;
%\node at (0.5,-0.2) {$\longleftrightarrow$};
%\node at (0.5,-0.4) {$\eps$};
\node at (-0.3,1.8) {e${}_2$};
\node at (2.8,-0.3) {e${}_1$};
\end{tikzpicture}
\caption{}
\label{fig:epsarb_na'.right}
\end{subfigure}
\caption{A market that satisfies property \ref{it:def_NA'_strict} but not \ref{it:def_NA'_NA} of Definition~\ref{def:NA'} (Figure \ref{fig:epsarb_na'.left}), and two ways to recover property \ref{it:def_NA'_NA} (Figures \ref{fig:epsarb_na'.middle} and \ref{fig:epsarb_na'.right}).}
\label{fig:epsarb_na'}
\end{figure}
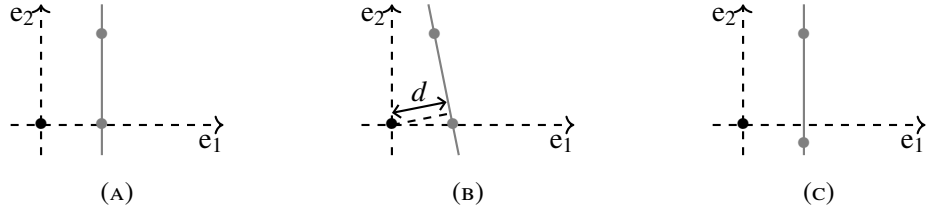
To recover such property, one essentially has two options, which are illustrated in Figure \ref{fig:epsarb_na'.middle} and Figure \ref{fig:epsarb_na'.right}, respectively:
In Figure \ref{fig:epsarb_na'.middle} we move the value of $S_1(\omega_2)$ in the direction of $-\textrm{e}_{1}$, therefore obtaining that $\bar H_1 = 0$, and property \ref{it:def_NA'_NA} of Definition \ref{def:NA'} is trivially satisfied.
In Figure \ref{fig:epsarb_na'.right} we move the value of $S_1(\omega_1)$ in the direction of $-\textrm{e}_2$. 
Again, we find that $F_\epsilon^\perp \equiv \{ \alpha \textrm{e}_2 \colon \alpha \in \R \}$.
Since the second asset, $S^2$, admits no arbitrage in the classical sense, we have for any $G \in F_\epsilon^\perp$ that
\[
(G \bullet S)_1 = (G^2 \bullet S^2)_1 \ge 0 \implies
(G^2 \bullet S^2)_1 = (G \bullet S)_1 = 0,
\]
which is property \ref{it:def_NA'_NA} of Definition \ref{def:NA'}.\hfill$\Diamond$
\end{example} 

By a similar limiting argument as that employed in Example \ref{ex:nostrictarb_but_arb}, one can easily prove that condition $\NA_\epsilon'$ is weaker than $\NA_\epsilon$. To wit, we employ here and throughout the elementary observation:

\begin{remark}\label{lem:square_root_asympt}
For every $x, y \in \mathbb R^d$, the map 
\[
\R \ni k \mapsto k|x|_p - |kx + y|_p
\]
is continuous and by the de L'H\^opital rule
\[
\lim_{k \to \infty} k|x|_p - |kx + y|_p =
\begin{cases}
    -x^\ast \cdot y, & p > 1,
    \\
    -x^\ast \cdot y - \sum_{i = 1, x_i = 0}^d |y_i|, & p = 1,
\end{cases}
\]
where we recall that the definition of $x^\ast$ is given in \eqref{eq:def.dual_vector}.
\hfill$\Diamond$
\end{remark}

\begin{proposition}
\label{cor:NAeps_equivalence}
A market satisfies $\NA_\epsilon$ if and only if it satisfies $\NA_\epsilon'$.
\end{proposition}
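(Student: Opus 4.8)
By Lemma~\ref{lem:NA_eps=>A} it remains to show that $\NA_\epsilon'$ implies $\NA_\epsilon$. So assume the market satisfies $\NA_\epsilon'$ and suppose, for contradiction, that there is an $\epsilon$-arbitrage, i.e.\ a sequence $(H^k)_{k\in\N}\subset\mathcal H$ with $Y:=\liminf_{k\to\infty}\{(H^k\bullet S)_T-\epsilon\|H^k\|_p\}\ge 0$ and $\P[Y>0]>0$. The first step is a reduction: using the canonical decomposition of Lemma~\ref{lem:decomp}, I would argue that it suffices to work with strategies in $\bar{\mathcal H}$, i.e.\ supported on $\{\bar H_t\ne 0\}$. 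Indeed, writing each $H^k_t=a^k_t\bar H_t+G^k_t+\tilde G^k_t+R^k_t$, where $R^k_t$ is the ``orthogonal'' part living off $\{\bar H_t\ne0\}$ (in the $p>1$ case this is the component with $\{R^k_t\ne0\}\subseteq\{\bar H_t=0\}$), one checks by strict $\epsilon$-arbitrage-freeness that $R^k$ alone can only produce a non-positive payoff $(R^k\bullet S)_T-\epsilon\|R^k\|_p\le 0$; subtracting it can only improve the liminf, so we may assume $H^k\in\bar{\mathcal H}$ for all $k$.

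**Extracting a limiting strategy.** The heart of the argument is a time-stepwise extraction in the spirit of the classical Schachermayer/Kabanov argument (as invoked in Remark~\ref{rem:eps_mart_arb} via \cite[Proposition 6.9.1]{DeSc06}). I would proceed backwards or forwards in $t$ and distinguish two cases on each step according to whether $(|H^k_t|_p)_k$ is a.s.\ bounded on a given atom of $\F_{t-1}$ (after the usual splitting of $\Omega$ into the set where it is bounded and the set where a subsequence tends to $\infty$). On the bounded part we extract a convergent subsequence $H^k_t\to H^\infty_t$. On the unbounded part, renormalizing $\hat H^k_t:=H^k_t/|H^k_t|_p$ and passing to a limit $\hat H^\infty_t$ with $|\hat H^\infty_t|_p=1$, the liminf condition forces $\hat H^\infty_t\cdot\Delta S_t-\epsilon|\hat H^\infty_t|_p\ge 0$ on that set, hence $\hat H^\infty_t\in E_{\epsilon,t}$ there; but then $\hat H^\infty_t$ is a multiple of $\bar H_t$ by Lemma~\ref{lem:H_bar}, and one can \emph{subtract off} the corresponding multiple of $\bar H_t$ from $H^k$ without decreasing the liminf (exactly as in Lemma~\ref{lem:NA_eps=>A}, since $k|\bar H_t|_p-|k\bar H_t+y|_p\to-\bar H_t^\ast\cdot y$). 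Iterating this reduction, after finitely many steps one reduces to the bounded case and obtains a genuine strategy $H^\infty\in\mathcal H$ with $(H^\infty\bullet S)_T-\epsilon\|H^\infty\|_p\ge Y\ge 0$, i.e.\ $Y$ is dominated by an element of $K$ itself.

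**Closing the argument.** Now decompose $H^\infty$ canonically, $H^\infty_t=a_t\bar H_t+G_t+\tilde G_t$ with $a_t\ge 0$ forced by $\essinf$-considerations (otherwise one would have strict $\epsilon$-arbitrage from the $\bar H_t$ piece). Since $\tilde G_t\in E_{0,t}$ contributes nothing to the integral and $a_t\bar H_t$ contributes exactly $\epsilon\,a_t|\bar H_t|_p$ which is absorbed by the cost term via \eqref{eq:p-norm_lower_bound}, one is left with $(G\bullet S)_T\ge (H^\infty\bullet S)_T-\epsilon\|H^\infty\|_p\ge Y\ge 0$ for $G=(G_t)\in F_\epsilon^\perp$. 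By condition~\ref{it:def_NA'_NA} of $\NA_\epsilon'$ this forces $(G\bullet S)_T=0$, hence $Y=0$ a.s., contradicting $\P[Y>0]>0$. The main obstacle I expect is the bookkeeping in the stepwise extraction: one must keep track of the $\F_{t-1}$-measurable partition of $\Omega$, make measurable subsequence selections (randomized stopping/subsequence arguments), and verify that subtracting the $E_{\epsilon,t}$-components at each stage genuinely does not decrease $\liminf_k\{(H^k\bullet S)_T-\epsilon\|H^k\|_p\}$ — this last point relies crucially on the convexity/one-dimensionality of $E_{\epsilon,t}$ from Lemmas~\ref{lem:E_conv_cone} and~\ref{lem:H_bar} and on the asymptotics in Remark~\ref{lem:square_root_asympt}. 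The $p=1$ case needs the componentwise version of all these statements but is otherwise analogous.
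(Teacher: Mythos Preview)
Your overall strategy is in the same spirit as the paper's, which derives the backward implication from the closure Theorem~\ref{thm:closure} (whose proof is the multi-step inductive extraction you are sketching). However, several steps in your outline contain genuine gaps.

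\textbf{Step 1 is not correct as stated.} You claim that, by absence of strict $\epsilon$-arbitrage, the off-$\bar H$ part $R^k$ ``can only produce a non-positive payoff $(R^k\bullet S)_T-\epsilon\|R^k\|_p\le 0$''. This is false: absence of strict $\epsilon$-arbitrage only gives the implication $\ge 0\Rightarrow =0$, not pointwise non-positivity. Consequently, ``subtracting $R^k$ can only improve the liminf'' is unjustified. The paper does not remove $R^k$; instead, it \emph{proves} that on $\{\bar H_t=0\}$ the sequence $(H^k_t)_k$ is bounded (Lemma~\ref{lem:1_step_H_bar_complement} in the one-step case, Step~1 of the proof of Theorem~\ref{thm:closure} in the multi-step case), and this already requires the inductive hypothesis on later times to dispose of the cross-time contributions.

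\textbf{Step 2 misses the multi-step interaction.} Your assertion that normalizing at time $t$ forces $\hat H^\infty_t\cdot\Delta S_t-\epsilon|\hat H^\infty_t|_p\ge 0$ is only valid in a single period. In general, normalization yields
\[
\hat H^\infty_t\cdot\Delta S_t-\epsilon|\hat H^\infty_t|_p+\liminf_k\frac{1}{|H^k_t|_p\vee 1}\sum_{s\neq t}\big(H^k_s\cdot\Delta S_s-\epsilon|H^k_s|_p\big)\ge 0,
\]
and the second term need not vanish. The paper handles this by induction in $t$: the residual sum lies in $\tilde K_{t+1}$, and the inductive hypothesis (specifically the $\liminf$-domination and \eqref{eq:thm_closure.NAeps_inductive}) is used to kill it. Your ``subtract off a multiple of $\bar H_t$'' idea is not equivalent: even when the normalized limit lies in $E_{\epsilon,t}$, the residual $H^k_t-a^k_t\bar H_t$ can remain unbounded; the paper instead shows boundedness of the $G^k_t$ component using $\NA_\epsilon'$ (Step~2 of the proof of Theorem~\ref{thm:closure}), and then of $a^k_t-|H^k_t|_p$ (Step~3).

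\textbf{The target of the extraction is $\overline K$, not $K$.} You claim to obtain $H^\infty\in\mathcal H$ with $(H^\infty\bullet S)_T-\epsilon\|H^\infty\|_p\ge Y$, i.e.\ a dominating element of $K$. The paper produces instead an element of $\overline K$, of the form $(H\bullet S)_T-\epsilon\|H\|_p+(G\bullet S)_T$ with $G\in F_\epsilon^\perp$; recall the example after Theorem~\ref{thm:closure} showing $K\subsetneq\overline K$ in general. Your Step~3 argument would still go through with this extra $(G\bullet S)_T$ term (combine it with the $G$-component of the canonical decomposition and invoke condition~\ref{it:def_NA'_NA}), but the claim as written is too strong.

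In short, the right skeleton is there, but the argument as written does not close: you need to set up the backward-in-time induction properly, replace the ``subtract $R^k$'' step by a boundedness proof on $\{\bar H_t=0\}$, and accept that the dominating element lives in $\overline K$ rather than $K$. Once all of this is done you have essentially reproved Theorem~\ref{thm:closure}, after which the proposition is immediate.
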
   

\begin{proof} { 
Assume $\NA_\epsilon$.}
Let $G \in F_\epsilon^\perp$ and define, for $k \in \N$ and $t \in \{1,\ldots, T\}$, $H^k_t := k \bar H_t + G_t$.
By Remark \ref{lem:square_root_asympt} we have
\[k|\bar H_t|_p-|H^k_t|_p \to - \bar H_t^\ast \cdot G_t = 0 ,\]
as $k\to\infty$. 
Hence we can compute
\[
\lim_{k \to \infty} (H^k \bullet S)_T - \epsilon \| H^k \|_p = (G \bullet S)_T + \epsilon \lim_{k \to \infty} (k{ \|\bar H\|_p} - \|H^k\|_p) = (G \bullet S)_T,
\]
from which we deduce that $\NA_\epsilon$ is stronger than $\NA_\epsilon'$.

{ The proof of the reverse implication is more involved, and postponed to Section~\ref{sec:NA_multistep}.}
\end{proof}

What is remarkable here is that while $\NA_\epsilon$ is of asymptotic nature (as it involves limiting procedures), condition $\NA_\epsilon'$ is non-asymptotic. Moreover, $\NA_\epsilon'$ only involves strict arbitrages either in  the classical sense or in the sense of strict $\epsilon$-arbitrages. At a practical level, one of the advantages of condition $\NA_\epsilon'$ is that it will allows us to explore closure properties as in the next result.

\begin{theorem}
\label{thm:closure}
Assume $\NA_\epsilon'$.
Then the closure of $K := \{ (H\bullet S)_T - \epsilon \|H\|_p \colon H \in \mathcal H\}$ w.r.t.\ convergence in probability is given by
\begin{align}
\overline{K} &= \left\{ (H \bullet S)_T - \epsilon \| H \|_p + (G \bullet S)_T \colon H \in \mathcal H, G \in F_\epsilon^\perp, H_t^\ast \cdot G_t = 0\ \forall t=1,\ldots,T \right\}\label{eq:Kbar_equiv1}
\\
&=
\Big\{ (H\bullet S)_T - \epsilon \|H\|_p + \sum_{t = 1}^{T} \mathbbm 1_{\{ H_t = 0 \}} G_t \cdot \Delta S_t \colon H \in \mathcal H, G_t \in F_{\epsilon,t}^\perp\ \forall t =1,\ldots,T  \Big\}.\label{eq:Kbar_equiv2}
\end{align}
Moreover, for a sequence $(Y^k)_{k \in \N}$ in $\overline{K}$ that is a.s.\ bounded from below there is $Y \in \overline{K}$ with $Y \ge \liminf_{k \to \infty} Y^k$.
In particular, all sequences $(Y^k)_{k \in \N}$ in $\overline{K}$ satisfy
\begin{equation*}
\liminf_{k \to \infty} Y^k \ge 0{\quad\P\text{-a.s.}} \implies \liminf_{k \to \infty} Y^k = 0{\quad\P\text{-a.s}}.
\end{equation*}
\end{theorem}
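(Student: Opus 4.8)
\emph{Proof plan.}
I would set $\mathcal K$ equal to the set on the right-hand side of \eqref{eq:Kbar_equiv1}, first check that it also equals the set in \eqref{eq:Kbar_equiv2}, then establish the two inclusions $\mathcal K\subseteq\overline K$ and $\overline K\subseteq\mathcal K$, and finally read off the liminf-stability statement and its corollary. The identity of the two descriptions is a bookkeeping step based on Lemmas~\ref{lem:H_bar} and \ref{lem:decomp}: on the event $\{H_t\ne 0\}$ the constraint $H_t^\ast\cdot G_t=0$ lets one fold $G_t$ back into the $E_{\epsilon,t}$- and $F_{\epsilon,t}^\perp$-components of the canonical decomposition of $H_t$ (the $a_t\bar H_t$ term contributes equally to $(H\bullet S)_T$ and to $\epsilon\|H\|_p$), while on $\{H_t=0\}$ one has $H_t^\ast=0$ and $G_t$ is free in $F_{\epsilon,t}^\perp$, which is exactly the term kept in \eqref{eq:Kbar_equiv2}; the converse passage is symmetric. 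After this identification $K\subseteq\mathcal K$ trivially (take $G=0$), so it remains to prove $\mathcal K=\overline K$.

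For $\mathcal K\subseteq\overline K$ I would use the L'H\^opital-type limiting trick behind Lemma~\ref{lem:NA_eps=>A}. Given a representative in the form \eqref{eq:Kbar_equiv2}, set $H^k_t:=H_t+\mathbbm 1_{\{H_t=0\}}(k\bar H_t+G_t)\in\mathcal H_t$. Since $\bar H_t\in E_{\epsilon,t}$, the time-$t$ summand of $(H^k\bullet S)_T-\epsilon\|H^k\|_p$ equals $H_t\cdot\Delta S_t-\epsilon|H_t|_p$ on $\{H_t\ne 0\}$ and equals $G_t\cdot\Delta S_t+\epsilon\big(k|\bar H_t|_p-|k\bar H_t+G_t|_p\big)$ on $\{H_t=0\}$. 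Because $\bar H_t^\ast\in F_{\epsilon,t}$ (Lemma~\ref{lem:H_bar}) we have $\bar H_t^\ast\cdot G_t=0$, and the support constraint built into $F_{\epsilon,t}^\perp$ (for both $p>1$ and $p=1$) both handles the event $\{\bar H_t=0\}$ and kills the extra term in the $p=1$ clause of Remark~\ref{lem:square_root_asympt}; hence $k|\bar H_t|_p-|k\bar H_t+G_t|_p\to 0$ pointwise. Letting $k\to\infty$ shows $(H^k\bullet S)_T-\epsilon\|H^k\|_p\to$ the chosen representative almost surely, hence in probability, so $\mathcal K\subseteq\overline K$; in particular every element of $\mathcal K$ is an a.s.\ limit of elements of $K$.

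The inclusion $\overline K\subseteq\mathcal K$ is the heart of the matter, and I would prove it together with the stronger liminf-stability claim, by induction on the number of periods $T$, following the classical closedness technique for $C$ (cf.\ \cite{DeSc06} and the argument behind Theorem~\ref{cor:C_closed}). For an a.s.\ bounded-below sequence $(Y^k)$ in $\overline K$, write $Y^k=(H^k\bullet S)_T-\epsilon\|H^k\|_p+(G^k\bullet S)_T$ as in \eqref{eq:Kbar_equiv1}; at the first period split $H^k_1$ into its $\bar{\mathcal H}_1$-part, to which Lemma~\ref{lem:decomp} applies giving $H^k_1=a^k_1\bar H_1+G^k_1+\tilde G^k_1$, and its complementary part, treated by classical no-arbitrage on $\{\bar H_1=0\}$ (here one invokes condition~\ref{it:def_NA'_NA} of $\NA_\epsilon'$, not merely $\NA_\epsilon$; cf.\ Lemma~\ref{lem:1_step_H_bar_complement}). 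Then one runs the usual dichotomy: on the event where $a^k_1$ and the bounded components $G^k_1,\tilde G^k_1$ are a.s.\ bounded, a measurable Bolzano--Weierstrass extraction yields a subsequence converging a.s.\ to admissible limits and reduces the problem to the tail market $t\ge 2$ via the induction hypothesis; on the event where some coefficient diverges, one normalizes, passes to a limit direction, and uses $\NA_\epsilon'$ together with the closure just obtained to argue that this direction can be subtracted off without changing the limit, again reducing the situation. The $E_\epsilon$-directions $a^k_t\bar H_t$ are precisely the ones permitted to blow up — their gain cancels their cost — which is why the limit acquires the extra $(G\bullet S)_T$ term and lands in $\mathcal K$. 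This produces $Y\in\overline K$ with $Y\ge\liminf_k Y^k$, and the special case $Y^k\to Y$ gives $\overline K\subseteq\mathcal K$. This step is the main obstacle: organizing the induction, correctly handling the part of $H^k_t$ outside $\bar{\mathcal H}_t$, and carrying out the measurable selection/normalization in the unbounded case; it is exactly here that the non-asymptotic condition $\NA_\epsilon'$ (equivalent to $\NA_\epsilon$ by Proposition~\ref{cor:NAeps_equivalence}) is indispensable, as its pointwise statement about strategies in $F_\epsilon^\perp$ is what prevents a limiting direction from producing a classical arbitrage.

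Finally, the displayed consequence: if $\liminf_k Y^k\ge 0$, the previous step gives $Y\in\overline K=\mathcal K$ with $Y\ge\liminf_k Y^k\ge 0$, and since every element of $\mathcal K$ is an a.s.\ limit of some $(H^k\bullet S)_T-\epsilon\|H^k\|_p$, having $Y\ge 0$ with $\P[Y>0]>0$ would exhibit an $\epsilon$-arbitrage, contradicting $\NA_\epsilon$ (which holds by Proposition~\ref{cor:NAeps_equivalence}); hence $Y=0$ and therefore $\liminf_k Y^k=0$.
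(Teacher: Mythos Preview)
Your overall strategy---the two inclusions, the L'H\^opital approximation for $\mathcal K\subseteq\overline K$, and the induction on the number of periods using the canonical decomposition for $\overline K\subseteq\mathcal K$---matches the paper's. There is, however, a genuine circularity in your final paragraph: you conclude $Y=0$ from $Y\ge 0$ by producing an $\epsilon$-arbitrage and invoking $\NA_\epsilon$, which you obtain from Proposition~\ref{cor:NAeps_equivalence}. But the backward implication of that proposition ($\NA_\epsilon'\Rightarrow\NA_\epsilon$) is \emph{proved via} Theorem~\ref{thm:closure}, so you cannot cite it here. The paper instead carries the implication ``$Z\ge 0\Rightarrow Z=0$ for $Z\in\tilde K_t$'' as part of the induction hypothesis (their property~(c)); this is essential, because the boundedness arguments in the inductive step (their Steps~1--3) repeatedly apply it to the tail terms in order to force normalized limit directions to vanish. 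Your plan treats the ``In particular'' clause as a corollary read off at the end, which both relies on the circular reference and leaves the induction itself incomplete. The non-circular route is Lemma~\ref{lem:1_step_propA_NA} at the one-step level, fed into the inductive structure.

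Two smaller points. First, when you write ``for an a.s.\ bounded-below sequence $(Y^k)$ in $\overline K$, write $Y^k$ as in \eqref{eq:Kbar_equiv1}'', you are presupposing $\overline K=\mathcal K$; the clean order is to prove liminf-domination for sequences in $\mathcal K$ (which contains $K$), deduce that $\mathcal K$ is closed, and only then identify $\overline K=\mathcal K$. Second, ``subtracted off'' misdescribes the mechanism: because of the nonlinear term $\epsilon\|H\|_p$ one cannot subtract a null direction as in the classical Dalang--Morton--Willinger argument. The paper instead shows that the components $G^k_t$ and $a^k_t-|H^k_t|_p$ of the canonical decomposition are bounded (Corollary~\ref{cor:1_step_boundedness_canonical_decomp}) and then passes to a convergent measurable subsequence via Lemma~\ref{lem:1_step_closure_simple}; your intuition that only the $a^k_t\bar H_t$ direction may blow up is correct, but it is exploited through these boundedness results rather than by subtraction.
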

{
The proof is postponed to Section~\ref{sec:NA_multistep}. 
}

With the following example we show that in general $K$ is a proper subset of $\overline{K}$; see Theorem~\ref{thm:K=barK} for sufficient conditions ensuring $K = \overline{K}$.
\begin{example}\label{ex:KbarK}
Let $p > 1$ and consider a single-period market as in Figure \ref{fig:epsarb_na'.right} with $\Omega = \{ \omega_1, \omega_2 \}$, $\P$ the uniform distribution on $\Omega$ and
    \[
        S_0 := (0,0) \quad\mbox{and}\quad
        S_1(\omega) :=
        \begin{cases}
            (\epsilon,1) & \omega = \omega_1, \\
            (\epsilon,-1) & \omega = \omega_2.
        \end{cases}
    \]
Note that we have $\eps=\eps(\P)$ and that
   $K = \{ h \cdot S_1 - \epsilon |h|_p \colon h \in \R^2 \}$ and $\overline{K} = K \cup \{ h \cdot (S_1 - (\epsilon,0)) \colon h \in \R^2 \}$.
    Let $h' = (0,x) \in \R^2$ with $x \neq 0$.
    If $h' \cdot (S_1 - (\epsilon,0))$ were contained in $K$ then there would be $h \in \R^2$ with
    \begin{align*}
        \epsilon h_1 + h_2 S_{1}^{2} - \epsilon |h|_p = h \cdot S_1 - \epsilon |h|_p = x S_1^{2} .
    \end{align*}
    This can only be the case when $h_1 = |h|_p$, which implies $h_2=0$, hence $x S_{1}^{2} = 0$, which contradicts $x \neq 0$. 
    This means that $h' \cdot (S_1 - (\epsilon,0))\in\overline{K}\setminus K$, implying that $K\subsetneq\overline{K}$.
    \hfill $\Diamond$ 
\end{example}

We close this part by observing that one can localize both strict $\epsilon$-arbitrages and $\NA_\epsilon'$. This is important as it motivates the proof strategy behind our main results, wherein we first settle the one-step setting and then move on to the multi-step one, see Section~\ref{sect:proofs}.

\begin{remark}[Bounded strategies and localization]
\label{rem:localization}
It suffices to check the absence of strict $\epsilon$-arbitrage and also $\NA_\epsilon'$ only for bounded strategies. 
Indeed if $H \in \mathcal H$ is such that $(H \bullet S)_T -\epsilon\|H\|_p \ge 0$, then there is $t$ and $\tilde H_t\in\mathcal H_t$ with $|\tilde H_t|_p \leq 1$
and such that $Y:=\tilde H_t\cdot\Delta S_t-\epsilon|\tilde H_t|_p\geq 0$ and $\P(Y>0)>0$. 
Similarly we obtain that $\NA_\epsilon'$ is equivalent to the following statements, for $t \in \{1,\ldots,T\}$:
\begin{enumerate}[label = (\roman*)]
\item all $H_t \in \mathcal H$ with $|H_t|_p \le 1$ satisfy
\[
H_t \cdot \Delta S_t - \epsilon |H_t|_p \ge 0 \implies H_t \cdot \Delta S_t - \epsilon |H_t|_p = 0;
\]
\item all $G_t \in F_{\epsilon,t}^\perp$ with $|G_t|_p \le 1$ satisfy
\[
G_t \cdot \Delta S_t \ge 0 \implies G_t \cdot \Delta S_t = 0.
\]
\end{enumerate}
We provide a proof of these facts in Section~\ref{sec:Nae_one_step}.\hfill$\Diamond$
\end{remark}

\subsection{Duality and price ranges}
Armed with a FTAP, the natural follow-up question concerns (super-)hedging and pricing of derivatives. We explore this in this section.

\begin{theorem}[Pricing-hedging duality]
\label{thm:duality}
Let $\Psi \in L^1(\Omega,\F_T,\P)$ be the payoff of a claim. 
Under $\NA_\epsilon$ we have
\begin{multline}\label{eq:dual}
\sup \left\{ \E_\Q[\Psi] \colon \Q \in \mathcal M_\epsilon(\P)\right\} \\
=
\inf \left\{ x \colon \exists H \in \mathcal H, G \in F_\epsilon^\perp \text{ s.t.\ } x + (H \bullet S)_T + (G \bullet S)_T \ge \Psi + \epsilon \|H\|_p {\;\P\text{-a.s.}} \right\}.
\end{multline}
\end{theorem}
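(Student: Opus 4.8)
The plan is to prove the two inequalities separately, with the "$\le$" direction (weak duality) being the easy one and the "$\ge$" direction requiring a Hahn–Banach / separation argument built on the closedness result in Theorem~\ref{thm:closure}. First, for weak duality: suppose $x, H \in \mathcal H, G \in F_\epsilon^\perp$ are such that $x + (H\bullet S)_T + (G\bullet S)_T \ge \Psi + \epsilon\|H\|_p$. Fix any $\Q \in \mathcal M_\epsilon(\P)$. Taking $\E_\Q$, it suffices to show $\E_\Q[(H\bullet S)_T - \epsilon\|H\|_p] \le 0$ and $\E_\Q[(G\bullet S)_T] \le 0$ (in fact $=0$ for the latter). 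The first is exactly the dual characterization of $\epsilon$-martingale measures recorded in the Remark after Definition~\ref{def:eps_mart} (applied componentwise in $t$, or directly). The second follows from Lemma~\ref{lem:eps_mart_means}: since $G_t \in F_{\epsilon,t}^\perp$ and $\E_\Q[\Delta S_t \mid \F_{t-1}] \in F_{\epsilon,t}$ with $G_t \cdot \E_\Q[\Delta S_t\mid\F_{t-1}] = 0$, conditioning gives $\E_\Q[(G\bullet S)_T] = \sum_t \E_\Q[G_t \cdot \E_\Q[\Delta S_t\mid\F_{t-1}]] = 0$ (one should check integrability, which holds since $d\Q/d\P \in L^\infty$ and $S_T \in L^1$; a truncation/localization in $G$ may be needed, then monotone passage). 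Hence $x \ge \E_\Q[\Psi]$, giving "$\ge$" after taking inf over the right and sup over the left.

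For the nontrivial direction "$\le$": let $m$ denote the RHS infimum. The goal is to produce, for every $\delta > 0$, a measure $\Q \in \mathcal M_\epsilon(\P)$ with $\E_\Q[\Psi] \ge m - \delta$; equivalently, to show that $\Psi - m - \delta$ does not lie in the set $\mathcal C := \{(H\bullet S)_T - \epsilon\|H\|_p + (G\bullet S)_T : H \in \mathcal H, G \in F_\epsilon^\perp\} - L^0_+$, or rather in a suitable closed version thereof, and then separate. The key structural input is Theorem~\ref{thm:closure}: under $\NA_\epsilon$ (equivalently $\NA_\epsilon'$ by Proposition~\ref{cor:NAeps_equivalence}), the set $\overline{K}$ has the representation \eqref{eq:Kbar_equiv1}, so that $\overline{K} = \{(H\bullet S)_T - \epsilon\|H\|_p + (G\bullet S)_T : H\in\mathcal H, G\in F_\epsilon^\perp, H_t^\ast\cdot G_t=0\}$ and $C = \{X : \exists Y \in \overline{K}, X \le Y\}$ is closed in probability and satisfies $C \cap L^0_+ = \{0\}$ (Theorem~\ref{cor:C_closed}). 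I would then consider $C - m' := \{Y - m' : Y \in C\}$ where I want to show $\Psi \notin C_{m}$ for $m$ strictly below the true superhedging price; more precisely, by definition of $m$, for any $m' < m$ the claim $\Psi - m'$ is \emph{not} dominated by any element of $K + \{(G\bullet S)_T\}$, and I must upgrade this to: $\Psi - m'$ is not in the closed set $C$ shifted by $m'$. Here one uses that the superhedging infimum over $\overline{K}$-type strategies equals the one over $K$-type strategies — this is where the extra generality of allowing $G \in F_\epsilon^\perp$ in the primal problem \eqref{eq:dual} exactly matches the closure $\overline K$, so that $\inf\{x : x + \overline{K} \ni \text{something} \ge \Psi\} = m$ and the infimum is \emph{attained} (the set $\{x : \Psi - x \in C\}$ is closed by closedness of $C$). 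Thus $\Psi - m \in C$ but $\Psi - m' \notin C$ for $m' < m$; by Hahn–Banach separation of the point $\Psi - m'$ from the closed convex cone $C$ in $L^0$ (using the $C \cap L^0_+ = \{0\}$ property and a standard exhaustion/Kreps–Yan-type argument to get a \emph{strictly positive} bounded separating functional $Z$), normalize $\Q := Z/\E_\P[Z] \cdot \P$. Then $\E_\Q[Y] \le 0$ for all $Y \in C$, in particular $\E_\Q[(H\bullet X)_T - \epsilon\|H\|_p] \le 0$ for all $H$, which by the Remark after Definition~\ref{def:eps_mart} means $\Q$ is an $\epsilon$-martingale measure, and $Z \in L^\infty$ with $Z > 0$ gives $d\Q/d\P \in L^\infty$ and $\Q \sim \P$, i.e. $\Q \in \mathcal M_\epsilon(\P)$; finally $\E_\Q[\Psi - m'] \le 0$ no longer helps directly, so instead one argues $\E_\Q[\Psi] \ge m'$ wait — one must be careful about the direction. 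I would instead separate $\Psi - m'$ (which lies strictly outside $C$) to get $\E_\Q[\Psi - m'] > \sup_{Y\in C}\E_\Q[Y] = 0$, hence $\E_\Q[\Psi] > m'$; letting $m' \uparrow m$ yields $\sup_{\Q\in\mathcal M_\epsilon(\P)}\E_\Q[\Psi] \ge m$, completing the proof.

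One technical point to handle carefully: the separating functional from Hahn–Banach in $L^0$ (or rather, to get a functional in $L^1$ pairing, one works in $L^\infty$ with a bounded claim first and then removes the boundedness assumption on $\Psi$ by a truncation argument $\Psi \wedge n$, monitoring that $\mathcal M_\epsilon(\P)$ is unchanged and passing to the limit via monotone convergence, exactly as in the classical superhedging theorem in \cite{DeSc06}). The Kreps–Yan separation requires $C$ to be closed in probability and to be a cone with $C \cap L^0_+ = \{0\}$, both guaranteed by Theorem~\ref{cor:C_closed}; and it requires $C$ to contain $-L^\infty_+$ (it contains $-L^0_+$ by construction), so the argument goes through. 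The main obstacle, then, is not any single step but rather assembling the matching between the primal problem's feasible set (strategies $H \in \mathcal H$ together with $G \in F_\epsilon^\perp$) and the closed set $\overline K$ from Theorem~\ref{thm:closure}: one must verify that the superhedging price defined with $\overline K$ coincides with $m$ and is attained, so that $\Psi - m' \notin C$ for $m' < m$ and the separation can be invoked. This is precisely the reason $F_\epsilon^\perp$-strategies appear in the statement \eqref{eq:dual}, and it is the conceptual heart of why the pricing-hedging duality closes without a duality gap in this $\epsilon$-relaxed framework.
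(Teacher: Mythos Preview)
Your approach is essentially correct and relies on the same structural ingredients as the paper (closedness of $C$ from Theorem~\ref{cor:C_closed}, the representation of $\overline K$ from Theorem~\ref{thm:closure}, and Lemma~\ref{lem:eps_mart_means} for weak duality). The main difference is the direction in which you run the separation argument.

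The paper does not separate $\Psi - m'$ from $C$ for each $m' < m$ to manufacture measures. Instead it applies the bipolar theorem to $C^1 = C \cap L^1$ once, obtaining
\[
C^1 = \left\{ Y \in L^1(\Omega,\F_T,\P) : \sup_{\Q \in \mathcal M_\epsilon(\P)} \E_\Q[Y] \le 0 \right\},
\]
and then observes that, by the very definition of $l := \sup_{\Q \in \mathcal M_\epsilon(\P)} \E_\Q[\Psi]$, one has $\Psi - l \in C^1$. This directly produces a superhedging pair $(H,G)$ at price $l$, giving $\text{RHS} \le l$ in one stroke. Your route---separate $\Psi - m'$ from $C$ for $m' < m$, obtain $\Q_{m'}$ with $\E_{\Q_{m'}}[\Psi] > m'$, let $m' \uparrow m$---is the contrapositive of this and is equally valid, but less economical: you need a new separation for each $m'$, and you do not obtain attainment of the infimum on the primal side as a byproduct.

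One genuine point to tighten: your Hahn--Banach separation of $\Psi - m'$ from $C^1$ yields a functional $Z \in L^\infty$ with $Z \ge 0$ and $\E[Z(\Psi - m')] > 0$, but not $Z > 0$ a.s. The Kreps--Yan exhaustion you invoke gives a strictly positive $Z_0$ separating $C^1$ from $L^1_+$, but not one that separates the \emph{specific} point $\Psi - m'$. The fix is to take $\Q_0 \in \mathcal M_\epsilon(\P)$ (nonempty by Theorem~\ref{thm:ftap}) and form $Z_\lambda := \lambda \, d\Q_0/d\P + (1-\lambda) Z$ for small $\lambda > 0$: this is strictly positive, still annihilates $C^1$, and for $\lambda$ small enough still satisfies $\E[Z_\lambda(\Psi - m')] > 0$. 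The resulting $\Q_\lambda$ lies in $\mathcal M_\epsilon(\P)$ by the same argument as in the proof of Theorem~\ref{thm:ftap}. This is standard, but your phrase ``exhaustion/Kreps--Yan-type argument'' conflates two distinct steps.
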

{
The proof is postponed to Section~\ref{sect:proof:ftap}.  
}

How the LHS of the duality relates to the price of $\Psi$ will be clarified in Theorem~\ref{thm:price_range} 
below, see Remark~\ref{rem.ph_duality}. 
{ A simpler version of this duality holds for a specific range of values of $\eps$ and $p$, see Theorem~\ref{thm:simple_duality} below.}
The RHS of the duality can be seen as the smallest initial price to set a self-financing strategy that allows to super-replicate $\Psi$ while covering the costs related to the portfolio. Here the costs are intended as in the interpretation given after Definition~\ref{def:arb}. The fact that there are no costs related to a strategy $G \in F_\epsilon^\perp$ is due to the fact that these can be compensated (rather, reduced to an amount as small as wanted).
Indeed, for any $G \in F_\epsilon^\perp$, there are strategies $(H^k)_k\subseteq \h$ s.t.
\[
(G \bullet S)_T=\lim_{k \to \infty} (H^k \bullet S)_T - \epsilon \| H^k \|_p,
\]
see the proof of {Proposition~
\ref{cor:NAeps_equivalence}}.

\begin{definition}[$\epsilon$-fair price]
Assume the market satisfies $\NA_\epsilon$ and let $\Psi \in L^0(\Omega,\F_T,\P)$.
We say that $\psi \in \R$ is an $\epsilon$-fair price of $\Psi$ if having $\Psi$ at price $\psi$ in the market does not introduce $\epsilon$-arbitrage, that is, if for all sequences $(H^k)_{k \in \N}$ of trading strategies and $(a^k)_{k \in \N}$ of weights in $\R$ we have
\begin{equation}
\label{eq:def_eps_fair_price}
\liminf_{k \to \infty} \left\{ (H^k \bullet S)_T + a^k(\Psi - \psi) - \epsilon \left( \|H^k\|_p + |a^k| \right) \right\} =: Y \ge 0 {\;\P\text{-a.s.}}
\implies
Y = 0{\;\P\text{-a.s}}.
\end{equation}
\end{definition}
Note that the strategies in \eqref{eq:def_eps_fair_price} are of semi-static nature, that is, dynamic in $S$ (as the holdings in the assets can be modified at every time $t=0,1,\ldots,T-1$) and static in the option $\Psi$ (where we have buy-and-hold strategies, as the holdings are fixed at time $0$). The different nature of the holdings in $S$ and $\Psi$ is reflected in the cost as well.

\begin{theorem}[$\epsilon$-fair price range]
\label{thm:price_range}
Assume the market satisfies $\NA_\epsilon$.
Let $\Psi \in L^1(\Omega,\F_T,\P)$ and $\Q \in \mathcal M_\epsilon(\P)$.
Then the interval $[\E_\Q(\Psi) - \epsilon, \E_\Q(\Psi) + \epsilon]$ consists of $\epsilon$-fair prices for $\Psi$.

Viceversa, if $p > 1$ {or $d=1$}, and $\psi \in \R$ is an $\epsilon$-fair price for $\Psi$, then there is $\Q \in \mathcal M_\epsilon(\P)$ with $\psi \in [\E_\Q (\Psi) - \epsilon, \E_\Q (\Psi) + \epsilon]$.
Hence, in this case the set of $\epsilon$-fair prices equals $\cup_{\Q\in\mathcal M_\epsilon(\P) } [\E_\Q (\Psi) - \epsilon, \E_\Q (\Psi) + \epsilon]$.
\end{theorem}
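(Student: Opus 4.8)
The statement has two directions. For the first (inclusion $\supseteq$), I would fix $\Q \in \mathcal M_\epsilon(\P)$ and $\psi \in [\E_\Q(\Psi)-\epsilon, \E_\Q(\Psi)+\epsilon]$ and show directly that $\psi$ is an $\epsilon$-fair price. Given a sequence of semi-static strategies $(H^k, a^k)$ with $Y := \liminf_k \{ (H^k\bullet S)_T + a^k(\Psi-\psi) - \epsilon(\|H^k\|_p + |a^k|)\} \ge 0$, the key is to pass to the $\Q$-expectation (or rather bound things termwise). Using that $S$ is an $\epsilon$-martingale under $\Q$ (equivalently, via the Doob decomposition $S = A + M$ of Remark~\ref{rem:Doob_decomposition} with $|\Delta A_t|_q \le \epsilon$), one has $(H^k\bullet S)_T - \epsilon\|H^k\|_p \le (H^k\bullet M)_T$, and also $a^k(\Psi-\psi) - \epsilon|a^k| \le a^k(\Psi - \E_\Q\Psi)$ whenever $|\psi - \E_\Q\Psi| \le \epsilon$ (by choosing the sign correctly: $a^k(\Psi - \psi) - \epsilon|a^k| = a^k(\Psi - \E_\Q\Psi) + a^k(\E_\Q\Psi - \psi) - \epsilon|a^k| \le a^k(\Psi - \E_\Q\Psi)$). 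Hence $Y \le \liminf_k \{(H^k\bullet M)_T + a^k(\Psi - \E_\Q\Psi)\}$. Now $(H^k\bullet M)_T + a^k(\Psi - \E_\Q\Psi)$ is a sequence of stochastic integrals (in the enlarged market consisting of $M$ together with the static asset $\Psi - \E_\Q\Psi$, which is a $\Q$-martingale increment) that is bounded below by $Y_- $ type quantities; invoking the classical closure/no-arbitrage machinery (as in \cite[Proposition 6.9.1]{DeSc06}, exactly as in Remark~\ref{rem:eps_mart_arb}) gives a limiting strategy whose terminal payoff dominates $Y \ge 0$ and is a $\Q$-martingale starting at $0$, forcing it to be $0$; hence $Y = 0$. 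The frozen term $\E_\Q$ is finite since $\Psi \in L^1(\Q)$ (as $d\Q/d\P \in L^\infty$ and $\Psi \in L^1(\P)$).

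\textbf{Second direction.} Suppose $p > 1$ and $\psi$ is an $\epsilon$-fair price; I must produce $\Q \in \mathcal M_\epsilon(\P)$ with $|\psi - \E_\Q\Psi| \le \epsilon$. The natural route is a separating-hyperplane / duality argument. Consider the enlarged market with the option $\Psi$ as an additional statically-traded asset; the hypothesis that $\psi$ introduces no $\epsilon$-arbitrage says precisely that the set
\[
\widetilde K := \{ (H\bullet S)_T + a(\Psi - \psi) - \epsilon(\|H\|_p + |a|) \colon H \in \mathcal H,\ a \in \R \}
\]
and its probability-closure $\overline{\widetilde K}$ contain no nonnegative nonzero element. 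This is the $\NA_\epsilon$ condition for the enlarged market, so I would apply the FTAP (Theorem~\ref{thm:ftap}) — or rather its proof — to the enlarged market to obtain an $\epsilon$-martingale measure $\Q$ for $(S, \Psi)$ viewed as a process (with $\Psi$ revealed only at time $T$; formally, append a time step or treat $\Psi$ as a one-period asset bought at time $0$ and sold at time $T$). The $\epsilon$-martingale condition for the $\Psi$-coordinate then reads $|\E_\Q[\Psi] - \psi|_q \le \epsilon$, i.e.\ $|\E_\Q\Psi - \psi| \le \epsilon$, and restricting to the $S$-coordinates gives $\Q \in \mathcal M_\epsilon(\P)$. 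One subtlety: the norm used for the combined strategy $(H, a)$ is $\|H\|_p + |a|$, which is the $p$-type norm built from the pieces but mixes an $\ell^1$-sum across the $S$-block and the $\Psi$-block; I need to check that the FTAP and the dual-norm characterization in the Remark after Definition~\ref{def:eps_mart} still apply to this product structure — they do, because the dual of a sum-norm decouples coordinatewise, so the $\epsilon$-martingale inequality splits into $|\E_\Q[\Delta S_t|\F_{t-1}]|_q \le \epsilon$ for each $t$ and $|\E_\Q\Psi - \psi| \le \epsilon$ separately. Finally the last sentence of the theorem is just the combination of the two directions.

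\textbf{Main obstacle.} The genuine difficulty is the second direction: making rigorous the reduction "$\psi$ is an $\epsilon$-fair price $\iff$ the enlarged market satisfies $\NA_\epsilon$" and then legitimately invoking Theorem~\ref{thm:ftap} for that enlarged market. The enlargement is not quite a standard multi-period market because $\Psi$ is a non-traded payoff only realized at $T$ and the strategy in $\Psi$ is buy-and-hold; one must encode it as an admissible strategy in a bona fide market (e.g.\ introduce a fictitious asset $\Psi_0 = \psi$, $\Psi_1 = \cdots = \Psi_{T-1} = \psi$, $\Psi_T = \Psi$, predictably traded, which forces the holding to be $\F_0$-measurable and constant in effect only through the cost structure — here care is needed so that intermediate trading in $\Psi$ is free or neutral). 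Alternatively, and perhaps more cleanly, I would avoid re-deriving the FTAP and instead argue directly at the level of the closed convex cone $C$ of Theorem~\ref{cor:C_closed} for the enlarged market: the no-$\epsilon$-arbitrage hypothesis gives $C \cap L^0_+ = \{0\}$ and closedness in probability, then a Hahn–Banach/Yan-type separation in $L^1$ produces the density $d\Q/d\P \in L^\infty$ with the right $\epsilon$-martingale inequalities, reading off the $\Psi$-inequality as $|\E_\Q\Psi - \psi| \le \epsilon$. Either way, the role of $p > 1$ enters (as flagged elsewhere in the paper) through strict convexity of the $p$-norm, which is what is needed to get the clean dual description $E_{\epsilon,t} = \{a_t \bar H_t\}$ and hence the exact $\epsilon$-martingale characterization; for $p = 1$ the extra terms in Remark~\ref{lem:square_root_asympt} would obstruct the converse, which is why the theorem only claims the reverse inclusion for $p > 1$.
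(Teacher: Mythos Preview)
Your Part~1 plan (Doob decomposition $S=A+M$ and reduction to an enlarged $\Q$-martingale) is a legitimate alternative to the paper's argument and is close in spirit to Remark~\ref{rem:eps_mart_arb}. The paper instead splits on whether $(a^k)$ is bounded. If bounded, it passes to a limit $a$, uses Theorem~\ref{thm:closure} to produce $Y\in\overline K$ dominating the $\liminf$, and concludes by taking $\Q$-expectations. If unbounded, it normalises to obtain $Y\in\overline K$ with $Y=-\sgn(a)(\Psi-\psi)+\epsilon$, represents $Y$ via $(H,G)$, passes to the compensated process $\tilde S$ with $\Delta\tilde S_t=\Delta S_t-\E_\Q[\Delta S_t\mid\F_{t-1}]$ (a genuine $\Q$-martingale), and shows $\liminf_k((H^k-|a^k|(G+H))\bullet\tilde S)_T\ge0$, whence $=0$. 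Your route is shorter but needs a small fix: to embed $a^k(\Psi-\E_\Q\Psi)$ as an integral against the martingale $N_t=\E_\Q[\Psi\mid\F_t]$ you need $N_0=\E_\Q\Psi$, which only holds when $\F_0$ is trivial; otherwise insert a virtual time $-1$ with trivial $\sigma$-field, or revert to the paper's case split.

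Your Part~2 has a genuine gap. You propose to apply the FTAP (or its Kreps--Yan core) to the enlarged market with cost $\epsilon(\|H\|_p+|a|)$, arguing that ``the dual of a sum-norm decouples coordinatewise''. That duality remark is correct at the level of reading off the $\epsilon$-martingale inequalities once a separating $\Q$ is in hand, but it is \emph{not} what drives the FTAP here. Theorem~\ref{thm:ftap} rests on the closedness Theorem~\ref{cor:C_closed}, which in turn rests on the canonical decomposition (Lemmas~\ref{lem:H_bar} and~\ref{lem:decomp}); the latter uses strict convexity of $|\cdot|_p$ to make $E_{\epsilon,t}$ essentially one-dimensional. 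The mixed norm $|(x,b)|:=|x|_p+|b|$ on $\R^d\times\R$ is \emph{not} strictly convex (e.g.\ $(e_1,0)$ and $(0,1)$ saturate the triangle inequality without being collinear), so the decomposition and the closure argument do not transfer to the enlarged market. Your fallback --- ``establish closedness of the cone and then separate'' --- is precisely the hard step you have not addressed. The paper proceeds by a dichotomy: if every admissible sequence has bounded $(a^k)$, closedness of $C_{\Psi,\psi}$ follows from Theorem~\ref{thm:closure} and Kreps--Yan yields $\Q$; if some sequence has $|a^k|\to\infty$, one normalises to get $\sgn(a)(\Psi-\psi)-\epsilon=-((G+H)\bullet S)_T+\epsilon\|H\|_p$ and then, crucially using $p>1$ to decompose $\tilde H_t=\hat H_t+\alpha_t H_t$ with $\hat H_t\cdot H_t^\ast=0$, builds an auxiliary process $\hat S$ (with $\Delta\hat S_t=\Delta S_t-\Delta\zeta_t-\epsilon H_t^\ast$) that satisfies \emph{classical} NA, so the classical FTAP produces the desired $\Q$. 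This second case is where the real work lies, and your plan does not reach it.
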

{
The proof is postponed to Section~\ref{sect:proof:ftap}.  
}

\begin{remark}
We remark that the proof provided here for the reverse implication fails for $p = 1$.
This can be explained by the different decompositions that are required in order to deal with $p =1$ compared to $p > 1$:
Given the absence of $\epsilon$-arbitrage, one has for any $\epsilon$-martingale measure $\Q$ that
\[
    \E_\Q[\Delta S_t | \F_{t-1}] = \bar H_t^\ast \quad \mbox{on }\{ \bar H_t \neq 0 \}.
\]
Therefore, the dual vector to $\bar H_t$ determines $\E_\Q[\Delta S_t | \F_{t-1}]$ on $\{ \bar H_t \neq 0 \}$ and the shifted process $\tilde S$ with increments
\[
    \Delta \tilde S_t := \mathbbm 1_{\{ H_t \neq 0 \} } (\Delta S_t - H_t^\ast)
\]
admits no arbitrage in the classical sense.
In general, when $p = 1$ the process $\Delta \tilde S_t$ might allow for arbitrage opportunities as can be seen for the process in Example \ref{ex:nsaem} (that has no $\epsilon$-arbitrage when $p = 1$).

When $p = 1$, a possible avenue to explore the reverse implication is by analogous reasoning as in the proof of the fundamental theorem of asset pricing, that is Theorem \ref{thm:ftap}.
To that end, one can follow the approach from Sections~\ref{sec:Nae_one_step} and \ref{sec:NA_multistep} but one has to use more technical decompositions than the one detailed in Lemma \ref{lem:decomp}.
Since this is rather cumbersome and does not provide the reader with additional insights, we opted to not explore further the case
$p = 1$.\hfill $\Diamond$
\end{remark}

\begin{remark}\label{rem.ph_duality}
From Theorem~\ref{thm:price_range}, the supremum over the $\epsilon$-fair prices of a derivative $\Psi$, denoted by $\overline{p}(\Psi)$, is given by 
\[
\overline{p}(\Psi)=\sup \left\{ \E_\Q[\Psi] \colon \Q \in \mathcal M_\epsilon(\P)\right\}+\eps.
\]
Then Theorem~\ref{thm:duality} implies that
\begin{equation}\label{better_duality}
\overline{p}(\Psi)=\inf \left\{ x \colon \exists H \in \mathcal H, G \in F_\epsilon^\perp \text{ s.t.\ } x + (H \bullet S)_T + (G \bullet S)_T \ge \Psi + \epsilon (\|H\|_p+1) \right\}.
\end{equation}
The RHS in \eqref{better_duality} has the interpretation of the smallest initial price to set a self-financing strategy that allows to super-replicate $\Psi$ while covering the costs related to the whole portfolio, consisting of the dynamic strategies $H$ and $G$ (where again $G$ does not carry any cost), and the static strategy consisting in being short one unit of the derivative $\Psi$ (originating the cost $\eps$). \hfill $\Diamond$
\end{remark}

Unlike in the classical setting, we do not have in Theorem \ref{thm:price_range} that the set of $\epsilon$-fair prices for a non-replicable claim is always open.
We provide a simple example of this phenomenon below.

\begin{example}
\label{ex:price_range}
Consider the following single-period market with two possible events, $\Omega = \{\omega_1, \omega_2\}$, and a single asset $S = S^1$, where
\[
S_0 = 0 
\quad \mbox{and} \quad 
S_1(\omega) = 
    \begin{cases}
        \frac12 \epsilon & \omega = \omega_1, \\
        \frac32 \epsilon & \omega = \omega_2.
    \end{cases}
\]
Consider the filtration generated by $S$.
Clearly, then the set of $\epsilon$-martingale measures is given by $\{ \alpha \delta_{\{\omega_1\}} + (1 - \alpha) \delta_{\{\omega_2\}} \colon \alpha \in [\frac12,1) \}$.
We want to compute the $\epsilon$-fair price range of the following option:
\[
\Psi(\omega) :=
    \begin{cases}
        0 & \omega = \omega_1, \\
        1 & \omega = \omega_2.
    \end{cases}
\]
Since \[
\left\{ \E_\Q[\Psi] \colon \Q \mbox{ is an  $\epsilon$-martingale measure}\right\} = (0,1/2],
\]
by Theorem~\ref{thm:price_range} we conclude that the $\epsilon$-fair price range is given by the half-open interval $(-\epsilon,1/2+\epsilon]$.
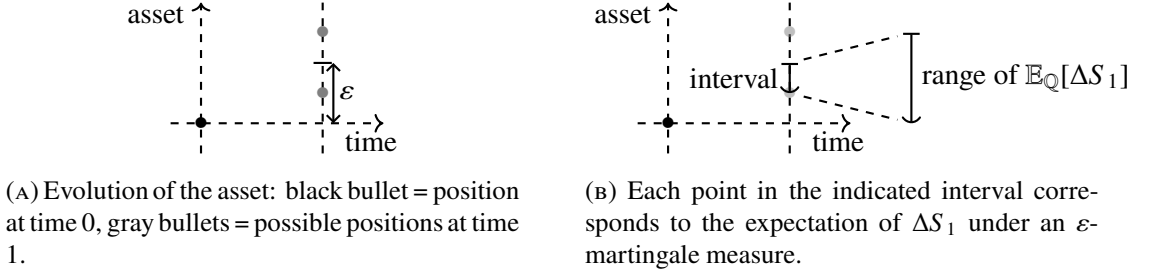
\begin{figure}[h]
\centering
\begin{subfigure}{0.45\textwidth}
\centering
\begin{tikzpicture}[scale=0.8]
\draw[thick,dashed,->] (-0.5,0)--(3,0);
\draw[thick,dashed,->] (0,-0.5)--(0,2);
\draw[thick,dashed] (2,-0.5) -- (2,2);
\node at (0,0) {$\bullet$};
\node at (2,0.5) {\color{gray}$\bullet$};
\node at (2,1.5) {\color{gray}$\bullet$};
\node[xshift=-10pt] at (-0.3,1.8) {asset};
\node at (2.8,-0.3) {time};
\draw[<->,>={Straight Barb[scale=0.8]},thick,black,xshift=5pt]
(2,0) -- (2,1) node [black,midway,xshift=5pt] {$\epsilon$};
\draw[thick] (1.85,1) -- (2.15,1);
\end{tikzpicture}
\caption{Evolution of the asset: black bullet = position at time $0$, gray bullets = possible positions at time $1$.}
\label{fig:price_range.left}
\end{subfigure}
\qquad
\begin{subfigure}{0.45\textwidth}
\centering
\begin{tikzpicture}[scale=0.8]
\draw[thick,dashed,->] (-0.5,0)--(3,0);
\draw[thick,dashed,->] (0,-0.5)--(0,2);
\draw[thick,dashed] (2,-0.5) -- (2,2);
\node at (0,0) {$\bullet$};
\node[opacity=0.5] at (2,0.5) {\color{gray}$\bullet$};
\node[opacity=0.5] at (2,1.5) {\color{gray}$\bullet$};
\node[xshift=-10pt] at (-0.3,1.8) {asset};
\node at (2.8,-0.3) {time};
\draw[(-|,thick,black] (2,0.5) -- (2,1) node[midway,left] {interval};
\draw[thick,dashed] ($($(2,0.5)!.2!(4,0)$)!5pt!(2,0.5)$)--($($(4,0)!.2!(2,0.5)$)!5pt!(4,0)$);
\draw[thick,dashed] ($($(2,1)!.2!(4,1.5)$)!5pt!(2,1)$)--($($(4,1.5)!.2!(2,1)$)!5pt!(4,1.5)$);
%\draw ($($(a)!.5!(b)$)!1cm!(a)$) --  ($($(a)!.5!(b)$)!1cm!(b)$);
\draw[(-|,thick,black] (4,0) -- (4,1.5) node[midway,right,align=left] {range of $
\E_\Q[\Delta S_1] $};
\end{tikzpicture}
\caption{Each point in the indicated interval corresponds to the expectation of $\Delta S_1$ under an $\epsilon$-martingale measure.}
\label{fig:price_range_right}
\end{subfigure}
\caption{Illustration of Example \ref{ex:price_range}.}
\label{fig:price_range}
\end{figure}
\hfill$\Diamond$
\end{example}

\subsection{When strict \texorpdfstring{$\eps$}{}-arbitrage is enough}\label{sect:crit}
In this section we will see how the results presented above take a simpler form when $\eps$ is strictly greater than the critical value $\epsilon(\P)$ or when $p=1$. Indeed, we will prove that, in either case, one does not need to consider sequences of strategies as in \eqref{eq:def_epsilon_arbitrage} (for $\eps$-arbitrage), but one can directly work with the concept of strict $\eps$-arbitrage.

We start by showing that, when $\epsilon$ is strictly bigger than $\epsilon(\P)$, then the sets $E_{\epsilon,t}$ and $F_{\epsilon,t}^\perp$ are both trivial.
This will significantly facilitate the proofs and influence the hedging and continuity results.
\begin{lemma}\label{lem:critical_value}
Let $\epsilon > \epsilon(\P)$.
Then for all $H \in \mathcal H$
\begin{equation*}\label{eq:lem.critical_value.assertion}
(H\bullet S)_T - \epsilon \|H\|_p = 0 \implies  H \equiv 0.
\end{equation*}
In particular, for all $t=1,\ldots,T$ we have $\bar H_t\equiv 0$, so that $E_{\epsilon,t}=\{0\} = F_{\epsilon,t}^\perp$.
\end{lemma}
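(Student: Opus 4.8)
The plan is to argue by contradiction: suppose $H \in \mathcal H$ satisfies $(H\bullet S)_T - \epsilon \|H\|_p = 0$ but $H \not\equiv 0$. Since $\epsilon > \epsilon(\P)$, there exists $\epsilon' \in (\epsilon(\P), \epsilon)$, and by Definition~\ref{def:critical_value} the market admits no strict $\epsilon'$-arbitrage. The idea is to use $H$ to manufacture a strict $\epsilon'$-arbitrage, contradicting this. Concretely, on the event $\{\|H\|_p \neq 0\}$ we have, using $\epsilon' < \epsilon$, that $(H \bullet S)_T - \epsilon' \|H\|_p = (\epsilon - \epsilon')\|H\|_p > 0$, while on $\{\|H\|_p = 0\}$ we have $(H \bullet S)_T = 0$ and hence $(H \bullet S)_T - \epsilon' \|H\|_p = 0 \ge 0$. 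Thus $(H \bullet S)_T - \epsilon' \|H\|_p \ge 0$ everywhere, with strict inequality on $\{\|H\|_p \neq 0\}$, which has positive probability because $H \not\equiv 0$. This exhibits a strict $\epsilon'$-arbitrage, the desired contradiction. Hence $H \equiv 0$.

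Next I would derive the "in particular" claim. Fix $t \in \{1,\ldots,T\}$ and take any $H_t \in E_{\epsilon,t}$; extend it to a strategy in $\mathcal H$ by setting all other time components to zero. Then $(H \bullet S)_T - \epsilon \|H\|_p = H_t \cdot \Delta S_t - \epsilon |H_t|_p = 0$ by definition of $E_{\epsilon,t}$. The first part gives $H_t \equiv 0$, so $E_{\epsilon,t} = \{0\}$. By Lemma~\ref{lem:H_bar}, $\bar H_t$ is the $\arg\max$ over $E_{\epsilon,t}$ (of $\P[H_t \neq 0]$ for $p>1$, or the componentwise version for $p=1$), so $\bar H_t \equiv 0$. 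Finally, with $\bar H_t \equiv 0$, the set $\{\bar H_t \neq 0\}$ is null, so the defining constraints of $F_{\epsilon,t}^\perp$ (for $p>1$: $\{H_t = 0\} \supseteq \{\bar H_t = 0\}$, i.e. $H_t = 0$ a.s.; for $p=1$: $\{H_{t,i}\neq 0\}\subseteq\{\bar H_{t,i}\neq 0\}$, again forcing $H_t = 0$) give $F_{\epsilon,t}^\perp = \{0\}$.

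The argument is essentially routine; the only point requiring minor care is that we are free to pick the intermediate value $\epsilon'$ strictly between $\epsilon(\P)$ and $\epsilon$, which is possible precisely because $\epsilon > \epsilon(\P)$ and the infimum in Definition~\ref{def:critical_value} is over an interval, so every $\epsilon'$ above $\epsilon(\P)$ indeed yields no strict $\epsilon'$-arbitrage. The only genuine subtlety — deducing $\bar H_t \equiv 0$ and the structure of $F_{\epsilon,t}^\perp$ — is entirely bookkeeping once $E_{\epsilon,t} = \{0\}$ is in hand, via the characterizations already established in Lemma~\ref{lem:H_bar} and the definition of $F_{\epsilon,t}^\perp$ in both the $p>1$ and $p=1$ cases.
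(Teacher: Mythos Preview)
Your proof is correct and follows essentially the same contradiction argument as the paper: pick $\epsilon' \in (\epsilon(\P),\epsilon)$ and observe that $(H\bullet S)_T - \epsilon'\|H\|_p = (\epsilon-\epsilon')\|H\|_p \ge 0$ with strict inequality on $\{H \neq 0\}$, yielding a strict $\epsilon'$-arbitrage. The paper leaves the ``in particular'' clause implicit, whereas you spell out the bookkeeping via $E_{\epsilon,t}=\{0\}$, $\bar H_t \in E_{\epsilon,t}$, and the characterizations of $F_{\epsilon,t}^\perp$; this is entirely in line with the paper's framework and adds no new ideas.
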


\begin{proof}
Let $0 \not\equiv H \in \mathcal H$ with $(H\bullet S)_T - \epsilon \|H\|_p = 0$.
For $\epsilon' \in (\epsilon(\P),\epsilon)$, we have
\[
(H\bullet S)_T - \epsilon' \|H\|_p \ge (H\bullet S)_T - \epsilon \|H\|_p = 0,
\]
where the inequality is strict on the set $\{ H \neq 0 \}$ which has, by assumption, positive probability.
Hence, $H$ is a strict $\epsilon'$-arbitrage which is a contradiction to our choice of $\epsilon'$.
\end{proof}

The next results establishes that when $p = 1$ the property $\NA'_\epsilon$ and the absence of strict $\epsilon$-arbitrage coincide.

\begin{lemma}\label{lem:p=1.strict_G}
Assume the absence of strict $\epsilon$-arbitrage.
Let $p = 1$ and $G \in F_\epsilon^\perp$.
Then there is $H \in \mathcal H$ such that
\begin{equation*}
    \label{eq:p=1.GtoH}
    (G \bullet S)_T = (H \bullet S)_T - \epsilon \|H \|_1.
\end{equation*}
\end{lemma}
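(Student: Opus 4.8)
The goal is, given $G \in F_\epsilon^\perp$ when $p = 1$, to produce a single strategy $H \in \mathcal H$ with $(G\bullet S)_T = (H\bullet S)_T - \epsilon\|H\|_1$. The key feature of the $1$-norm that makes this possible (and that distinguishes it from $p > 1$) is that in Remark~\ref{lem:square_root_asympt} the limit $\lim_{k\to\infty} k|x|_1 - |kx + y|_1$ is \emph{attained} for finite $k$: as soon as $k$ is large enough that $\sgn(kx_i + y_i) = \sgn(x_i)$ on every coordinate where $x_i \neq 0$, we have the exact identity $k|x|_1 - |kx+y|_1 = -x^\ast\cdot y - \sum_{i : x_i = 0} |y_i|$. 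So the limiting argument in the proof of Lemma~\ref{lem:NA_eps=>A}, which for $p > 1$ only yields an approximating sequence, collapses to an equality.

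\textbf{Step 1.} Work one time step at a time and then concatenate, using the localization observations of Remark~\ref{rem:localization}; it suffices to handle a single $G_t \in F_{\epsilon,t}^\perp$ and find $H_t \in \mathcal H_t$ with $G_t \cdot \Delta S_t = H_t\cdot\Delta S_t - \epsilon|H_t|_1$, then set $H := (H_t)_{t=1}^T$. By the definition of $F_{\epsilon,t}^\perp$ for $p = 1$ we have $\bar H_t^\ast \cdot G_t = 0$ and $\{G_{t,i} \neq 0\} \subseteq \{\bar H_{t,i} \neq 0\}$ for all $i$. In particular the ``extra'' term $\sum_{i : \bar H_{t,i} = 0}|G_{t,i}|$ from Remark~\ref{lem:square_root_asympt} vanishes, because $G_t$ is supported coordinatewise inside $\{\bar H_t \neq 0\}$.

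\textbf{Step 2.} Define $H_t := k_t \bar H_t + G_t$ where $k_t$ is an $\F_{t-1}$-measurable random variable, chosen large enough (pointwise, which is legitimate since everything is $\F_{t-1}$-measurable — e.g.\ $k_t := 1 + |G_t|_1/\min\{|\bar H_{t,i}| : \bar H_{t,i} \neq 0\}$ on $\{\bar H_t \neq 0\}$ and $k_t := 0$ on $\{\bar H_t = 0\}$, where we use $|\bar H_t|_1 \in \{0,1\}$ from Lemma~\ref{lem:H_bar}) so that on every coordinate $i$ with $\bar H_{t,i} \neq 0$ we have $\sgn(k_t \bar H_{t,i} + G_{t,i}) = \sgn(\bar H_{t,i})$, and noting that coordinates with $\bar H_{t,i} = 0$ also have $G_{t,i} = 0$. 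Then $|H_t|_1 = k_t|\bar H_t|_1 - (-\bar H_t^\ast \cdot G_t) = k_t|\bar H_t|_1$ since $\bar H_t^\ast \cdot G_t = 0$. Consequently, using $\bar H_t \in E_{\epsilon,t}$, i.e.\ $\bar H_t \cdot \Delta S_t = \epsilon|\bar H_t|_1$,
\begin{align*}
H_t\cdot\Delta S_t - \epsilon|H_t|_1 = k_t \bar H_t\cdot\Delta S_t + G_t\cdot\Delta S_t - \epsilon k_t|\bar H_t|_1 = G_t\cdot\Delta S_t,
\end{align*}
which is exactly what we want. Summing over $t$ gives $(H\bullet S)_T = (G\bullet S)_T + \epsilon\sum_t(k_t|\bar H_t|_1 - |H_t|_1) = (G\bullet S)_T$, wait — more precisely $(H\bullet S)_T - \epsilon\|H\|_1 = \sum_{t=1}^T (H_t\cdot\Delta S_t - \epsilon|H_t|_1) = \sum_{t=1}^T G_t\cdot\Delta S_t = (G\bullet S)_T$.

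\textbf{Main obstacle.} The only genuinely delicate point is measurability and integrability of the chosen $k_t$: one must check that the pointwise-defined $k_t$ is $\F_{t-1}$-measurable (clear, as it is built from $\F_{t-1}$-measurable quantities via measurable operations, with the caveat of handling $\{\bar H_{t,i} \neq 0\}$ via a measurable selection of the minimal nonzero coordinate) and that $H_t = k_t\bar H_t + G_t$ lies in $\mathcal H_t = L^0(\Omega,\F_{t-1},\P;\R^d)$, which only requires $L^0$, not integrability, so no issue arises. One should also double-check the sign-matching claim coordinatewise: if $\bar H_{t,i} \neq 0$ then $|k_t\bar H_{t,i} + G_{t,i}| = k_t|\bar H_{t,i}| + \sgn(\bar H_{t,i})G_{t,i}$ once $k_t|\bar H_{t,i}| \ge |G_{t,i}|$, and summing over such $i$ (all other coordinates contributing $0$) gives $|H_t|_1 = k_t|\bar H_t|_1 + \bar H_t^\ast\cdot G_t|\bar H_t|_1$; since $\bar H_t^\ast\cdot G_t = 0$ this is $k_t|\bar H_t|_1$, as claimed. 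This is really bookkeeping rather than a conceptual hurdle — the substance of the lemma is the elementary observation that the $1$-norm is piecewise linear, so the asymptotic cancellation in Lemma~\ref{lem:NA_eps=>A} becomes an exact identity.
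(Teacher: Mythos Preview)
Your proposal is correct and follows essentially the same approach as the paper: both construct $H_t := w_t\bar H_t + G_t$ with an $\F_{t-1}$-measurable weight $w_t$ (the paper takes $w_t = 1 + \max_i |G_{t,i}|/|\bar H_{t,i}|$, you take $k_t = 1 + |G_t|_1/\min_i |\bar H_{t,i}|$) large enough that $\sgn(H_{t,i}) = \sgn(\bar H_{t,i})$ coordinatewise, then use $\bar H_t^\ast\cdot G_t = 0$ and $\bar H_t \in E_{\epsilon,t}$ to get the exact identity. The only blemishes are expository --- the stray factor $|\bar H_t|_1$ in your final display (harmless since $|\bar H_t|_1 \in \{0,1\}$) and the informal ``wait'' aside --- but the argument itself matches the paper's.
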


\begin{proof}
Recall that by \eqref{eq:def.dual_vector}, when $p = 1$, the dual vector $\bar H_t^\ast$ is coordinate-wise given by $\bar H^\ast_{t,i} = \sgn(\bar H_{t,i})$.
Let $G \in F_\epsilon^\perp$, that is, for every $t = 1,\ldots, T$,
\[
    G_t \cdot \bar H^\ast_t = \sum_{i = 1}^d G_{t,i} \sgn(\bar H_{t,i}) = 0,
\]
and $\{ G_{t,i} \neq 0\} \subseteq \{ \bar H_{t,i} \neq 0 \}$ for $i = 1,\ldots, d$.
Define the weighting functions $w_t \in L^0(\Omega,\F_{t-1},\P; \R_+)$ by 
\[
    w_t := 1 +
    \max\left\{ \frac{|{ G}_{t,i}|}{|\bar H_{t,i}|} : i\in\{1,\ldots,d\} \text{ s.t. }  \bar H_{t,i} \neq 0\right\}
    \quad \mbox{for }t=1,\ldots,T,
\]
with the convention $\max\emptyset=0$.
Set $H \in \mathcal H$ as the strategy given by $H_t := G_t + w_t \bar H_t$, $t = 1,\ldots,T$, and observe that $\sgn(H_{t,i}) = \sgn(\bar H_{t,i})$.
Therefore, we compute
\begin{align*}
    (H \bullet S)_T - \epsilon \| H\|_1 &=
    (G \bullet S)_T + \epsilon \left( \sum_{t = 1}^T w_t |\bar H_t|_1 - |w_t \bar H_t + G_t|_1 \right) \\
    &=
    (G \bullet S)_T + \epsilon \left( \sum_{t = 1}^T w_t |\bar H_t|_1 - \sum_{i = 1}^d w_t |\bar H_{t,i}| + \sgn(\bar H_{t,i}) G_{t,i}\right) \\
    &= (G \bullet S)_T + \epsilon \sum_{t = 1}^T G_t \cdot \bar H_t^\ast = (G \bullet S)_T. \qedhere
\end{align*}
\end{proof}

\begin{theorem} \label{thm:K=barK}
If $\eps>\eps(\P)$ or $p=1$, then no strict $\eps$-arbitrage is equivalent to no $\eps$-arbitrage, i.e. to $\NA_\eps$.
In either case, $K = \overline{K}$.
\end{theorem}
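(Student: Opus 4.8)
Here I want to prove Theorem~\ref{thm:K=barK}: if $\eps > \eps(\P)$ or $p = 1$, then the absence of strict $\eps$-arbitrage is equivalent to $\NA_\eps$, and moreover $K = \overline{K}$.

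\medskip

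\noindent\textbf{The plan.} The key tool is the explicit representation of $\overline{K}$ in Theorem~\ref{thm:closure}, which is available under $\NA_\epsilon'$. Since $\NA_\epsilon$ always implies $\NA_\epsilon'$ (Lemma~\ref{lem:NA_eps=>A}), and $\NA_\epsilon$ trivially implies the absence of strict $\epsilon$-arbitrage, it suffices to show that under the extra hypotheses the absence of strict $\epsilon$-arbitrage already implies $\NA_\epsilon$, and that $K = \overline K$. I would first dispose of the equality $K = \overline{K}$, because the equivalence of the two no-arbitrage notions will follow from it almost immediately: if $K = \overline{K}$, then an $\epsilon$-arbitrage sequence $(H^k)$ with limit inferior $Y \geq 0$, $\P(Y>0)>0$ produces (via the $\liminf$-attainment part of Theorem~\ref{thm:closure}) an element of $\overline{K} = K$ dominating $Y$, i.e.\ a strict $\epsilon$-arbitrage; contrapositively, no strict $\epsilon$-arbitrage gives $\NA_\epsilon$.

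\medskip

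\noindent\textbf{Proving $K = \overline K$.} The inclusion $K \subseteq \overline K$ is trivial. For the reverse inclusion I use the description \eqref{eq:Kbar_equiv2}: every element of $\overline K$ has the form $(H\bullet S)_T - \epsilon\|H\|_p + \sum_{t=1}^T \mathbbm 1_{\{H_t = 0\}} G_t \cdot \Delta S_t$ with $G_t \in F_{\epsilon,t}^\perp$. In the case $\epsilon > \epsilon(\P)$, Lemma~\ref{lem:critical_value} tells us $F_{\epsilon,t}^\perp = \{0\}$ for every $t$, so the correction terms vanish identically and the representation collapses to $(H\bullet S)_T - \epsilon\|H\|_p \in K$; hence $\overline K \subseteq K$. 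In the case $p = 1$, Lemma~\ref{lem:p=1.strict_G} provides, for any $G \in F_\epsilon^\perp$, a strategy $\hat H \in \mathcal H$ with $(G\bullet S)_T = (\hat H \bullet S)_T - \epsilon\|\hat H\|_1$. Applying this to the $G$ appearing in representation \eqref{eq:Kbar_equiv1} (where one may take $G$ with $H_t^\ast \cdot G_t = 0$ for all $t$), one must combine $H$ and $\hat H$ into a single strategy. The natural candidate is $H' := H + \lambda \hat H$ for a suitable $\F_{t-1}$-measurable scalar weighting, chosen large enough that the signs of the coordinates of $H'_t$ agree with those of the dominant term, so that the $1$-norm splits additively and the $\epsilon$-cost bookkeeping works out; this is exactly the mechanism already used inside the proof of Lemma~\ref{lem:p=1.strict_G}, so I expect to be able to reuse that computation nearly verbatim, concluding $(H\bullet S)_T - \epsilon\|H\|_1 + (G\bullet S)_T = (H'\bullet S)_T - \epsilon\|H'\|_1 \in K$.

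\medskip

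\noindent\textbf{Main obstacle.} The delicate point is the $p=1$ case of $K = \overline K$: one has to verify that adding the two strategies $H$ and the $\hat H$ coming from Lemma~\ref{lem:p=1.strict_G} really does produce an element of $K$ with the right value, i.e.\ that the $1$-norms combine without cross terms spoiling the cost accounting. This requires checking a sign-alignment condition coordinate by coordinate on $\{H_t = 0\}$ versus $\{\bar H_t \neq 0\}$, using $H_t^\ast \cdot G_t = 0$ and the support constraints defining $F_{\epsilon,t}^\perp$; it is routine but is the step most prone to a slip. Everything else — the equivalence of the no-arbitrage notions given $K = \overline K$, and the $\epsilon > \epsilon(\P)$ case — is a short consequence of the cited results.
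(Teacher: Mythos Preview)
There are two gaps. The first is logical: you invoke Theorem~\ref{thm:closure} (both the representation of $\overline K$ and its $\liminf$-attainment) while working under the hypothesis of no strict $\epsilon$-arbitrage, but that theorem requires $\NA_\epsilon'$. You never check that, under $\epsilon > \epsilon(\P)$ or $p=1$, absence of strict $\epsilon$-arbitrage already gives $\NA_\epsilon'$. It does, and easily --- for $\epsilon>\epsilon(\P)$, Lemma~\ref{lem:critical_value} makes $F_{\epsilon,t}^\perp=\{0\}$ so condition~\ref{it:def_NA'_NA} of Definition~\ref{def:NA'} is vacuous; for $p=1$, Lemma~\ref{lem:p=1.strict_G} turns any $G\in F_\epsilon^\perp$ into some $H$ with $(G\bullet S)_T=(H\bullet S)_T-\epsilon\|H\|_1$, so $(G\bullet S)_T\geq 0$ forces equality --- but it has to be said. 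Once you have $\NA_\epsilon'$, Proposition~\ref{cor:NAeps_equivalence} gives $\NA_\epsilon$ directly, which is the route the paper takes; your detour through $K=\overline K$ and the $\liminf$-attainment is then unnecessary for the equivalence.

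The second gap is in your $p=1$ argument for $K=\overline K$ via representation~\eqref{eq:Kbar_equiv1}. With $\hat H_t=G_t+w_t\bar H_t$ as in Lemma~\ref{lem:p=1.strict_G}, setting $H'=H+\lambda\hat H$ gives the $(G\bullet S)_T$ term with coefficient $\lambda$, so you are forced to take $\lambda=1$, not $\lambda$ large. With $\lambda=1$ you then need $|H_t+G_t+w_t\bar H_t|_1=|H_t|_1+w_t|\bar H_t|_1$; sending $w_t\to\infty$ drives the left side minus $w_t|\bar H_t|_1$ to $\bar H_t^\ast\cdot H_t+\sum_{i:\bar H_{t,i}=0}|H_{t,i}|$, which equals $|H_t|_1$ only when $\sgn(H_{t,i})=\sgn(\bar H_{t,i})$ whenever both are nonzero --- and this is not implied by $H_t^\ast\cdot G_t=0$. (One can salvage the step by an intermediate-value argument: $f(w):=|H_t+G_t+w\bar H_t|_1-|H_t|_1-w|\bar H_t|_1$ satisfies $f(0)\geq H_t^\ast\cdot G_t=0$ by the subgradient inequality and $\lim_{w\to\infty}f(w)\leq 0$, so some $w_t\geq 0$ works; but that is not the ``large $\lambda$'' mechanism you describe.) The paper sidesteps all of this by using representation~\eqref{eq:Kbar_equiv2} instead: since the extra term is $\sum_t\mathbbm 1_{\{H_t=0\}}G_t\cdot\Delta S_t$, one simply replaces $H_t$ on $\{H_t=0\}$ by the $\hat H_t$ coming from Lemma~\ref{lem:p=1.strict_G} applied to $\mathbbm 1_{\{H_t=0\}}G_t$, and there is no overlap, hence no sign-alignment issue.
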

\begin{proof}
We first consider the case $p=1$. By Lemma~\ref{lem:p=1.strict_G} we have equivalence between $\NA'_\epsilon$ and the absence of strict $\epsilon$-arbitrage. Thus also equivalence to $\NA_\epsilon$, by Proposition \ref{cor:NAeps_equivalence}. 
Finally, Lemma~\ref{lem:p=1.strict_G} and  Equation~\eqref{eq:Kbar_equiv2} in Theorem~\ref{thm:closure} imply $K = \overline K$.

Next we consider the case $\eps>\eps(\P)$. 
By Lemma~\ref{lem:critical_value} we know that  $F_{\epsilon,t}^\perp =\{0\}$.
This makes condition \ref{it:def_NA'_NA} in Definition~\ref{def:NA'} void, which in turn implies equivalence between absence of strict $\eps$-arbitrage and $\NA'_\eps$ as well as $K = \overline K$, by Theorem~\ref{thm:closure}.
Finally, Proposition~\ref{cor:NAeps_equivalence} yields the desired equivalence.
\end{proof}

In the special cases considered in this subsection, the duality result \eqref{eq:dual} takes a simpler form, as in the classical super-replication theorem.
\begin{theorem}\label{thm:simple_duality}
If $\eps>\eps(\P)$ or $p=1$, then 
under $\NA_\epsilon$ we have
\begin{align*}\label{eq:dual_easy}
\sup \left\{ \E_\Q[\Psi] \colon \Q \in \mathcal M_\epsilon(\P)\right\} =
\inf \left\{ x \colon \exists H \in \mathcal H \text{ s.t.\ } x + (H \bullet S)_T \ge \Psi + \epsilon \|H\|_p\ {\P\text{-a.s.}} \right\}.
\end{align*}
\end{theorem}

\begin{proof}
The result follows directly from Theorem \ref{thm:duality} and Theorem \ref{thm:K=barK} above.
\end{proof}

\section{Continuity results}\label{sect:cont}

The goal of this section is to connect the notion of $\eps$-arbitrage and $\eps$-martingale measure to (adapted) distances, and thereby obtain a notion of stability of arbitrage. 

If we consider the classical notion of arbitrage and a naive choice of distance between models (let us call it $D$ for the time being), then, given a model $\P$ with arbitrage and any $\delta>0$, it should be easy to build a model $\P'$ which is free of arbitrage and s.t.\ $D(\P,\P')<\delta$. This is certainly the case if $D$ is equal to the usual (adapted) Wasserstein distance. Thus, a naive choice of distance $D$ will not relate well to the concept of arbitrage. By generalizing (``quantifying") the notion of arbitrage to that of $\epsilon$-arbitrage, as we have done, and introducing a very special candidate for $D$, we will fix this undesirable fact and get a continuity of sorts for the ``amount of arbitrage" w.r.t.\ this new distance. {We refer the reader to \cite{Vi03} for an introduction to optimal transport and classical Wasserstein distances, and to \cite{BaBeLiZa17,BaBaBeEd19a} for their adapted counterparts.}

Let $\overline{\Omega}:= \R^{d(T+1)}$, which we see as the path space of $\R^d$-valued stochastic processes indexed by $t=0,1,\dots, T$.  We also see $\overline{\Omega}\times \overline{\Omega}$ as the path space of pairs of $\R^d$-valued stochastic processes in $T+1$ time steps. The canonical process on $\overline{\Omega}\times \overline{\Omega}$ is denoted by $(X,Y)$. Hence both $X$ and $Y$ will be used to denote the canonical process on $\overline{\Omega}$. If $Z=\{Z_r\}_{r=0}^T$ is a stochastic process, and $0\leq s<t\leq T$, we denote $Z_{s:t}:=\{Z_s,Z_{s+1},\dots, Z_t \}$ and we convene that $\Delta Z_0:= Z_0$. {We denote by $(\mathcal F^{X,\P}_t)_t$ the $\P$-completion of the natural filtration of $X$, with similar notation for $Y$.}

\begin{definition}\label{def:cpl}
For $\P,\P'$ probability measures on $\overline{\Omega}$, we denote by $\Pi(\P,\P')$ the set of measures  $\pi$ on $\overline{\Omega}\times \overline{\Omega}$ with $X\sim_\pi \P$ and $Y\sim_\pi \P'$. We say that $\pi\in \Pi(\P,\P')$ is causal from $X$ to $Y$ (resp.\ from $Y$ to $X$) if for each $t$ the law of $Y_t$ given $X_{0:T}$ under $\pi$ is $\mathcal F^{X,\P}_{t}$-measurable (resp.\ for each $t$ the law of $X_t$ given $Y_{0:T}$ under $\pi$ is $\mathcal F^{Y,\P'}_{t}$-measurable). We denote by $\Pi^{bc}(\P,\P')$ the set of those $\pi\in \Pi(\P,\P')$ which are causal both from $X$ to $Y$ and from $Y$ to $X$.
\end{definition}

{
In order to help the intuition of the reader: suppose that $\pi^T:= \text{Law}_{\P}(X,T(X))$ where $T:\overline{\Omega}\to\overline{\Omega}$ is such that $T(\P)=\P'$, i.e.\ the law of $T(X)$ under $\P$ is equal to $\P'$. Then  $\pi^T\in \Pi(\P,\P')$. If additionally $T$ is an adapted map, meaning that the t-th coordinate of $T(x)$ only depends on $x_0,\dots,x_t$, then  $\pi^T$ is causal from $X$ to $Y$. Definition \ref{def:cpl} is the logical generalization of these ideas.

We now move on tho the object that will be crucial for our undertakings.}

\begin{definition}\label{def:AWinftyq}
We define the adapted $L^\infty$-distance between $\P$ and $\P'$ as
\[
\textstyle{\AW^\Delta_{\infty,q}(\P,\P'):=\inf_{\pi\in\Pi^{bc}(\P,\P')}\esssup_\pi \sum_{t=0}^{T}|\Delta X_t-\Delta Y_t|_q}.
\]
\end{definition}

{To be precise, $\AW^\Delta_{\infty,q}(\P,\P')$ is only a distance on the space of Borel probability measures on $\overline{\Omega}$ which have a bounded support. The proof of this follows closely the classical case (e.g.\ \cite[Ch.7]{Vi03}) so we omit it. Nevertheless $\AW^\Delta_{\infty,q}(\P,\P')$ is always defined, taking values on the extended positive half-line, and it may even happen that $\AW^\Delta_{\infty,q}(\P,\P')<\infty$ without $\P$ or $\P'$ having a bounded support. The latter happens e.g.\ if one measure is a translation of the other, or if the two measures agree outside of a compact set. More important to our arguments is the existence of optimizers for such a transport problem, and this we establish now:}

\begin{lemma}\label{lem:existence_AW_opt}
The infimum in the definition of $\AW^\Delta_{\infty,q}(\P,\P')$ is attained.
\end{lemma}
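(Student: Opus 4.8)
The plan is to establish compactness of the relevant feasible set together with lower semicontinuity of the objective, so that a standard direct-method argument produces a minimizer. First I would recall that since $\AW^\Delta_{\infty,q}(\P,\P')$ is by definition an infimum over $\pi\in\Pi^{bc}(\P,\P')$ of the essential supremum (under $\pi$) of $\sum_{t=0}^T|\Delta X_t - \Delta Y_t|_q$, it suffices to exhibit, for a minimizing sequence $(\pi_n)_n$, a subsequential limit that is itself bicausal and does not increase the objective. Passing to a subsequence if necessary, one may assume the values $\esssup_{\pi_n}\sum_t|\Delta X_t-\Delta Y_t|_q$ converge to the infimum $L$; in particular there is a uniform bound $R$ on these essential suprema, so each $\pi_n$ is supported in the compact set $\{\,(x,y)\in\overline\Omega\times\overline\Omega : \sum_t|\Delta x_t-\Delta y_t|_q\le R,\ x\in\operatorname{supp}\P,\ y\in\operatorname{supp}\P'\,\}$ --- here one should be a little careful and instead note that the marginals $\P,\P'$ are fixed (hence tight), and the family $\{\pi_n\}$ has these fixed marginals, so it is automatically tight, and by Prokhorov's theorem admits a weakly convergent subsequence $\pi_n\to\pi$.

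Next I would verify that the limit $\pi$ still lies in $\Pi^{bc}(\P,\P')$. Membership in $\Pi(\P,\P')$ is immediate since weak limits preserve marginals. For bicausality, the standard fact (used throughout the adapted-transport literature, e.g.\ \cite{BaBaBeEd19a}) is that the set of bicausal couplings is weakly closed; concretely, causality from $X$ to $Y$ can be written as a countable family of conditions of the form $\E_\pi[f(Y_{0:t})\,g(X_{0:T})] = \E_\pi[f(Y_{0:t})\,\E_\pi[g(X_{0:T})\mid X_{0:t}]]$ for bounded continuous $f,g$ --- or more robustly, $\Pi^{bc}$ is the intersection over $t$ of sets defined by linear constraints that are stable under weak convergence --- so $\pi$ inherits bicausality. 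I would cite the relevant closedness statement rather than reprove it.

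The remaining and genuinely delicate point --- the main obstacle --- is lower semicontinuity of $\pi\mapsto \esssup_\pi \sum_{t=0}^T|\Delta X_t-\Delta Y_t|_q$ under weak convergence. Unlike an integral of a continuous bounded function, the essential supremum is not weakly continuous in general, but it is lower semicontinuous: if $c(x,y):=\sum_t|\Delta x_t-\Delta y_t|_q$ is continuous (indeed it is) and $\pi_n\to\pi$ weakly, then $\operatorname{supp}\pi\subseteq \liminf_n\operatorname{supp}\pi_n$ in the sense that any point of $\operatorname{supp}\pi$ is a limit of points in $\operatorname{supp}\pi_n$, whence $\sup_{\operatorname{supp}\pi} c \le \liminf_n \sup_{\operatorname{supp}\pi_n} c$. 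Since for any probability measure $\esssup_\mu c = \sup_{\operatorname{supp}\mu} c$ when $c$ is continuous, this gives $\esssup_\pi c \le \liminf_n \esssup_{\pi_n} c = L$. Combined with $\pi\in\Pi^{bc}(\P,\P')$ and the definition of the infimum, we get $\esssup_\pi c = L$, i.e.\ $\pi$ is optimal.

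To make the support-inclusion step rigorous I would argue as follows: fix $(x_0,y_0)\in\operatorname{supp}\pi$ and $\delta>0$; the open ball $B$ of radius $\delta$ around $(x_0,y_0)$ has $\pi(B)>0$, so by the portmanteau theorem $\liminf_n \pi_n(B)\ge \pi(B)>0$, hence $\pi_n(B)>0$ for all large $n$, meaning $B$ meets $\operatorname{supp}\pi_n$; thus there is a point of $\operatorname{supp}\pi_n$ within $\delta$ of $(x_0,y_0)$, and by continuity of $c$, $c(x_0,y_0)\le \sup_{\operatorname{supp}\pi_n}c + o(1)$. Taking $\liminf$ in $n$ and then supremum over $(x_0,y_0)\in\operatorname{supp}\pi$ yields the claim. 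This is the only place where one must resist the temptation to treat $\esssup$ like an integral, and it is where I expect to spend the most care.
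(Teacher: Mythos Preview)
Your proof is correct and follows essentially the same approach as the paper: weak compactness of $\Pi^{bc}(\P,\P')$ together with lower semicontinuity of $\pi\mapsto\esssup_\pi c$, yielding a minimizer by the direct method. The only minor difference is in the l.s.c.\ step: the paper applies portmanteau to the closed sublevel sets $\{c\le\lambda\}$ (noting they carry full $\pi_n$-mass, hence full $\pi$-mass), whereas you apply portmanteau to open balls to obtain the support inclusion $\operatorname{supp}\pi\subseteq\liminf_n\operatorname{supp}\pi_n$ and then use $\esssup_\mu c=\sup_{\operatorname{supp}\mu}c$ for continuous $c$ --- both are standard and equivalent here.
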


\begin{proof}
Define $S(\pi):=\esssup_\pi \sum_{t=0}^{T}|\Delta X_t-\Delta Y_t|_q = \inf\{\lambda : \sum_{t=0}^{T}|\Delta X_t-\Delta Y_t|_q\leq \lambda \;\ \pi\text{-a.s.}\} $. Since $\Pi^{bc}(\P,\P')$ is a {weakly compact set by \cite[Lemma A.1]{AcBaJi21}, we only need to prove that $S$ is l.s.c.\ with respect to weak convergence\footnote{We recall that weak convergence of measures refers to convergence of all integrals where the integrands are continuous bounded functions.}.} For this, consider $\pi_n\to\pi$ and assume w.l.g.\ that 
$\liminf_n S(\pi_n)=\lim_n S(\pi_n)=:S<\infty$. If $|\{n:S(\pi_n)\leq S\}|=\infty$, there is nothing to prove since the set $\left\{\sum_{t=0}^{T}|\Delta X_t-\Delta Y_t|_q\leq S\right\}$ is closed and of $\pi_n$-measure 1 for infinitely many $n$'s, hence $S(\pi)\leq S$. So we may assume w.l.o.g.\ $S(\pi_n)>S$ and $S(\pi_n)\searrow S$. For $n\geq \ell$ we have $\pi_n\left(\left\{\sum_{t=0}^{T}|\Delta X_t-\Delta Y_t|_q\leq S(\pi_\ell)\right\}\right)=1$ and so taking limit in $n$ we have $\pi\left(\left\{\sum_{t=0}^{T}|\Delta X_t-\Delta Y_t|_q\leq S(\pi_\ell)\right\}\right)=1$, from where $\pi\left(\left\{\sum_{t=0}^{T}|\Delta X_t-\Delta Y_t|_q\leq S\right\}\right)=1$ follows.
\end{proof}

The next observation is crucial to all proofs in this section:

\begin{remark}
\label{rem:AW_inf}
Take $\Q\in\mathcal M_\epsilon(\P) $ and let $\pi\in\Pi^{bc}(\P,\P')$ be optimal for $\AW^\Delta_{\infty,q}(\P,\P')$, whose existence is guaranteed by Lemma~\ref{lem:existence_AW_opt}. Defining
\[
\hat\pi(dx,dy):= \pi^x(dy) \, \Q(dx),
\]
we have: $\hat\pi$ is causal from $X$ to $Y$ and $\hat\pi\sim\pi$. Indeed, causality is trivial, while on the other hand, as $\Q\sim \P$ and as $\hat\pi$ and $\pi$ share the same conditional kernels, then $\hat\pi\sim\pi$.
\hfill$\Diamond$
\end{remark}

{
 We now can present the connection between adapted $L^\infty$-distances and arbitrage. Towards this end we assume that the stock price process $S$ is the canonical process on $\overline{\Omega}$, i.e.\ either $X$ or $Y$ in the notation of this section, and that the filtration is the canonical one. This canonical setting, usual in every application of optimal transport, is not much of a restriction: if the original filtration is the one given by the stock prices, then one can always transfer the original problem into one in the canonical setting. See Remark \ref{rem:filtered} below for a short discussion on how to bypass the canonical framework and permit very general filtrations.
}

The following proposition establishes that, as long as reference measures are close w.r.t.\ the $\AW^\Delta_\infty$-distance, then the ``amount of arbitrage" that one can get in the two markets is similar. This is expressed in terms of $\eps$-martingale measures, thanks to Theorem~\ref{thm:ftap}.
\begin{proposition}\label{prop:cont}
We have
\begin{equation*}
    \mathcal M_\epsilon(\P)\neq\emptyset \implies \mathcal M_{\epsilon+\AW^\Delta_{\infty,q}(\P,\P')}(\P')\neq\emptyset.
\end{equation*}
\end{proposition}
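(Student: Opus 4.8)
The plan is to take an $\epsilon$-martingale measure $\Q\in\mathcal M_\epsilon(\P)$ and transport it along an optimal bicausal coupling to produce an $(\epsilon+\AW^\Delta_{\infty,q}(\P,\P'))$-martingale measure for $\P'$. Concretely, let $\pi\in\Pi^{bc}(\P,\P')$ be optimal for $\AW^\Delta_{\infty,q}(\P,\P')$ (this exists by Lemma~\ref{lem:existence_AW_opt}), write $\delta := \AW^\Delta_{\infty,q}(\P,\P')$, and form $\hat\pi(dx,dy) := \pi^x(dy)\,\Q(dx)$ as in Remark~\ref{rem:AW_inf}. Then $\hat\pi$ is causal from $X$ to $Y$ and $\hat\pi\sim\pi$, so in particular the $Y$-marginal $\Q'$ of $\hat\pi$ is equivalent to the $Y$-marginal of $\pi$, which is $\P'$. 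The candidate $\epsilon'$-martingale measure on $\overline\Omega$ (equivalently, on the original space carrying $S$ via identification $S_t = Y_t$) is $\Q'$, and the claim reduces to checking that $S=Y$ is an $(\epsilon+\delta)$-martingale under $\Q'$, i.e.\ that $|\E_{\Q'}[\Delta Y_t\mid \mathcal F^Y_{t-1}]|_q \le \epsilon+\delta$ for $t=1,\ldots,T$, together with integrability of $Y_T$ (which follows from $\esssup_{\hat\pi}\sum_t|\Delta X_t-\Delta Y_t|_q<\infty$ and $X_T\in L^1(\Q)$).

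The core estimate goes as follows. Fix $t$ and a bounded $\mathcal F^Y_{t-1}$-measurable test vector $H_t$ with $|H_t|_q^{\ast}\le 1$; it suffices to bound $\E_{\hat\pi}[H_t\cdot \Delta Y_t]$ by $\epsilon+\delta$ (using the dual characterization of the $q$-norm via the $p$-norm, exactly as in the Remark following Definition~\ref{def:eps_mart}). Split $H_t\cdot\Delta Y_t = H_t\cdot\Delta X_t + H_t\cdot(\Delta Y_t-\Delta X_t)$. For the second term, $|H_t\cdot(\Delta Y_t-\Delta X_t)|\le |\Delta X_t-\Delta Y_t|_q \le \sum_{s=0}^T|\Delta X_s-\Delta Y_s|_q \le \delta$ $\hat\pi$-a.s.\ (since $\hat\pi\sim\pi$ and this inequality holds $\pi$-a.s.\ by optimality), so its expectation is at most $\delta$ in absolute value. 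For the first term, the subtle point is that $H_t$ is $\mathcal F^Y_{t-1}$-measurable, not $\mathcal F^X_{t-1}$-measurable; here causality from $X$ to $Y$ must be used, in its \emph{reverse} form, to transfer the conditioning. Precisely: because $\hat\pi$ is bicausal (causal from $X$ to $Y$ and — inheriting this from $\pi$, since the $X\to Y$ kernel of $\hat\pi$ equals that of $\pi$ and the $X$-marginal only changed to an equivalent measure — causal from $Y$ to $X$), one has $\E_{\hat\pi}[\,\cdot \mid \mathcal F^Y_{t-1}\vee \mathcal F^X_{t-1}]$ behaving well, and in particular $\E_{\hat\pi}[\Delta X_t \mid \mathcal F^Y_{0:T}, \mathcal F^X_{0:t-1}]$ can be related to $\E_{\Q}[\Delta X_t\mid \mathcal F^X_{t-1}]$, whose $q$-norm is $\le\epsilon$. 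Conditioning $H_t\cdot\Delta X_t$ first on $\mathcal F^X_{t-1}\vee\sigma(Y_{0:t-1})$ and using that $Y_{0:t-1}$ adds no information about $\Delta X_t$ beyond $\mathcal F^X_{t-1}$ under a bicausal coupling, one gets $\E_{\hat\pi}[H_t\cdot\Delta X_t]\le \epsilon\,\E_{\hat\pi}[|H_t|_p]\le\epsilon$. Adding the two bounds gives $\epsilon+\delta$, as desired.

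\textbf{Main obstacle.} The delicate part is the interchange-of-conditioning argument for the first term: one is testing the $\Q'$-drift of $Y$ with a $Y$-adapted strategy, but the only martingale-type information available sits on the $X$-side under $\Q$. Making this rigorous requires carefully exploiting bicausality of $\hat\pi$ — specifically that under a bicausal coupling, conditioning $\Delta X_t$ on the full future of $Y$ together with the past of $X$ reduces to conditioning on the past of $X$ alone (a standard consequence of the two causality constraints, essentially that $X$ and $Y$ generate a common "adapted" structure). One then pushes the $H_t$-test through a tower-property computation. A clean way to organize this is to first prove the dual statement "$\E_{\hat\pi}[(H\bullet Y)_T - \epsilon\|H\|_p] \le \delta$ for all $\|H\|_p\le 1$ with $H$ predictable w.r.t.\ the $Y$-filtration", by writing $(H\bullet Y)_T = (H\bullet X)_T + \sum_t H_t\cdot(\Delta Y_t-\Delta X_t)$, bounding the first summand using $\Q\in\mathcal M_\epsilon(\P)$ after transferring predictability via causality, and bounding the second by $\delta\|H\|_{\infty\text{-type}}\le\delta$ pathwise; then read off \eqref{eq:def_eps_mart} for $\Q'$ from this dual inequality via the Remark after Definition~\ref{def:eps_mart}. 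Finally one checks $d\Q'/d\P' \in L^\infty$ is not required — $\mathcal M_{\epsilon'}(\P')$ does demand the $L^\infty$ bound, so one extra step is needed: since $\hat\pi\sim\pi$ with bounded density? Here one should instead note that $\pi^x$ and $\Q$ only enter through $\hat\pi = \pi^x\otimes\Q$, and $d\Q'/d\P'$ need not be bounded in general; to land in $\mathcal M_{\epsilon'}(\P')$ one may have to first approximate $\Q$ within $\mathcal M_\epsilon(\P)$ by measures with bounded density (available since $\mathcal M_\epsilon(\P)\ne\emptyset$ is an open-type condition, or by a truncation/re-normalization argument on the density) — this is the one genuinely technical loose end, but it does not affect the main estimate.
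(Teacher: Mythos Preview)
Your overall plan matches the paper's, but there is a real gap in the ``first term'' step, and separately your worry about the $L^\infty$ density is misplaced.

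\textbf{The bicausality claim is wrong.} You assert that $\hat\pi$ inherits causality from $Y$ to $X$ from $\pi$ because only the $X$-marginal was changed to an equivalent measure. This fails in general: disintegrating the other way gives
\[
\frac{d\hat\pi^y}{d\pi^y}(x)\ =\ \frac{d\Q}{d\P}(x)\,\Big/\,\E_\pi\!\left[\frac{d\Q}{d\P}(X)\,\Big|\,Y=y\right],
\]
and since $d\Q/d\P$ depends on the whole path $x_{0:T}$, the conditional law of $X_t$ given $Y_{0:T}$ under $\hat\pi$ need not be $Y_{0:t}$-measurable. So you cannot argue that ``conditioning $\Delta X_t$ on the full future of $Y$ together with the past of $X$ reduces to conditioning on the past of $X$ alone'' under $\hat\pi$. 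Remark~\ref{rem:AW_inf} only delivers causality from $X$ to $Y$ for $\hat\pi$, and that is all you get.

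\textbf{How the paper closes this gap.} Instead of conditioning $\Delta X_t$, the paper conditions the strategy. For $H_t$ a bounded $Y_{0:t-1}$-measurable test vector, set
\[
\hat H_t(X)\ :=\ \E_{\hat\pi}\big[H_t(Y)\,\big|\,X_{0:T}\big].
\]
Causality of $\hat\pi$ from $X$ to $Y$ means exactly that this equals $\E_{\hat\pi}[H_t(Y)\mid X_{0:t-1}]$, hence $\hat H$ is $X$-predictable. Since $\Delta X_t$ is $X$-measurable,
\[
\E_{\hat\pi}\big[H_t(Y)\cdot\Delta X_t\big]
=\E_{\hat\pi}\big[\hat H_t(X)\cdot\Delta X_t\big]
=\E_{\Q}\big[\hat H_t(X)\cdot\Delta X_t\big]
\le \epsilon\,\E_{\Q}\big[|\hat H_t(X)|_p\big]
\le \epsilon\,\E_{\hat\pi}\big[|H_t(Y)|_p\big],
\]
using $\Q\in\mathcal M_\epsilon(\P)$ and Jensen for $|\cdot|_p$. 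Combined with your (correct) pathwise bound $|H_t\cdot(\Delta Y_t-\Delta X_t)|\le |H_t|_p\,|\Delta X_t-\Delta Y_t|_q$ and $\esssup_{\hat\pi}\sum_t|\Delta X_t-\Delta Y_t|_q=\esssup_{\pi}(\cdots)=\delta$ (valid since $\hat\pi\sim\pi$), this gives $\E_{\Q'}[(H\bullet Y)_T-(\epsilon+\delta)\|H\|_p]\le 0$ for all bounded $Y$-predictable $H$, i.e.\ $\Q'$ is an $(\epsilon+\delta)$-martingale measure. No causality from $Y$ to $X$ is ever needed.

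\textbf{The $L^\infty$ issue is a non-issue.} Since $\Q\in\mathcal M_\epsilon(\P)$, by definition $d\Q/d\P\in L^\infty$. The $Y$-marginal of $\hat\pi=(d\Q/d\P)(x)\,\pi(dx,dy)$ satisfies
\[
\frac{d\Q'}{d\P'}(y)\ =\ \E_\pi\!\left[\frac{d\Q}{d\P}(X)\,\Big|\,Y=y\right]\ \le\ \Big\|\frac{d\Q}{d\P}\Big\|_{L^\infty},
\]
so $\Q'\in\mathcal M_{\epsilon+\delta}(\P')$ directly, with no approximation or truncation required.
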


\begin{proof}
Take $\Q\in\mathcal M_\epsilon(\P) $ and $\hat\pi,\pi$ as in Remark~\ref{rem:AW_inf}. Denote
 by $\Q'$ the second marginal of $\hat\pi$. As $\hat\pi\sim\pi$, then in particular $\Q'\sim \P'$. Since $\hat\pi$ is causal from $X$ to $Y$, we have 
 for any $H\in  \mathcal{H}$, that
$\hat H_t(X):=\E_{\hat\pi} [H_t(Y)|X]=\E_{\hat\pi} [H_t(Y)|X_{0:t-1}]$. Thus 
 $\hat H\in\mathcal{H}$. Hence
\begin{align}\notag
\E_{\Q'}\left[(H\bullet Y)_T-\eps\| H(Y)\|_p\right]&=\textstyle{\E_{\hat\pi}\left[\sum_{t=1}^T H_t(Y)\Delta Y_t-\eps\| H(Y)\|\right]}\\ \notag
&= \textstyle{\E_{\hat\pi}\left[\sum_{t=1}^T H_t(Y)\Delta X_t\right]
-\eps\E_{\hat\pi}[\| H(Y)\|_p]
+\E_{\hat\pi}\left[\sum_{t=1}^T H_t(Y)(\Delta Y_t-\Delta X_t)\right]}\\ \notag
&\leq \textstyle{\E_{\Q}\left[\sum_{t=1}^T \hat H_t(X)\Delta X_t\right]
-\eps\E_{\hat\pi}[\| H(Y)\|_p]
+\E_{\hat\pi}\left[\sum_{t=1}^T |H_t(Y)|_p |\Delta Y_t-\Delta X_t|_q\right]}\\ \notag
&\leq \textstyle{\eps\E_{\hat\pi}[\| \hat H(X)\|_p-\| H(Y)\|_p]+\E_{\hat\pi}[\| H(Y)\|_p]\esssup_{\hat\pi}\sum_{t=1}^T |\Delta Y_t-\Delta X_t|_q}\\ \notag
&\leq \textstyle{\E_{\Q'}[\| H\|_p]\esssup_{\pi}\sum_{t=1}^T |\Delta Y_t-\Delta X_t|_q}\\
&\leq \textstyle{\E_{\Q'}[\| H\|_p]\AW^\Delta_{\infty,q}(\P,\P')}. \label{eq:non-sharp-step}
\end{align}
Hence $\E_{\Q'}[(H\bullet Y)_T-(\eps+\AW^\Delta_{\infty,q}(\P,\P') )\| H(Y)\|_p]\leq 0$.
\end{proof}
    
In the next example we show how the continuity result in
Proposition \ref{prop:cont} fails when dropping the bicausality constraint in Definition \ref{def:AWinftyq}. 
\begin{example}Take $T=1=d$, $\epsilon\geq 0$,  and consider $\P_\epsilon$ such that $(X_0,X_1)=(\epsilon,1)$ or $(X_0,X_1)=(-\epsilon,-1)$ with equal probability.
Notice that  $\AW^\Delta_{\infty,q}(\P_0,\P_\epsilon)=2$. On the other hand, if in Definition \ref{def:AWinftyq} we would drop the bicausality constraint on the minimization problem between $\P:=\P_0$ and $\P':=\P_\epsilon$, then the optimal value would be $2\epsilon$. Remark now that $\P_0$ is a martingale law, so $\mathcal M_0(\P_0)\neq\emptyset$. On the other hand, $\P_\epsilon$ admits strict $\delta$-arbitrages for any $\delta<1-\epsilon$, hence $\mathcal M_\delta(\P_\epsilon)=\emptyset$ for such values. In particular, if $\epsilon<1/3$ we have $\mathcal M_{2\epsilon}(\P_\epsilon)=\emptyset$, showing that Proposition \ref{prop:cont} fails if we drop bicausality from Definition \ref{def:AWinftyq}. \hfill$\Diamond$
\end{example}
    
Another way to see the continuity of the ``amount of arbitrage" w.r.t.\ the $\AW^\Delta_{\infty,q}$-distance is via the concept of maximal arbitrage reachable in a given market. See Definition~\ref{def:critical_value} and Corollary~\ref{corol:crit} for a reformulation in terms of approximate martingales.

\begin{proposition}\label{lemma.conteps}
We have
\[
|\epsilon(\P)-\epsilon(\P')|\leq \AW^\Delta_{\infty,q}(\P,\P').
\]
\end{proposition}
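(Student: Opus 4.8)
The plan is to show each of $\epsilon(\P)\le \epsilon(\P') + \AW^\Delta_{\infty,q}(\P,\P')$ and the symmetric inequality; by symmetry of the $\AW^\Delta_{\infty,q}$-distance it suffices to prove one of them. Fix $\delta>0$ and pick $\epsilon$ with $\mathcal M_\epsilon(\P)\neq\emptyset$ and $\epsilon < \epsilon(\P)+\delta$ if $\epsilon(\P)>0$ (and an arbitrary small $\epsilon$ with $\mathcal M_\epsilon(\P)\ne\emptyset$ otherwise, using Corollary~\ref{corol:crit}), so that $\mathcal M_\epsilon(\P)\ne\emptyset$. Then apply Proposition~\ref{prop:cont} to obtain $\mathcal M_{\epsilon+\AW^\Delta_{\infty,q}(\P,\P')}(\P')\ne\emptyset$. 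By Corollary~\ref{corol:crit}, this yields $\epsilon(\P')\le \epsilon+\AW^\Delta_{\infty,q}(\P,\P')< \epsilon(\P)+\delta+\AW^\Delta_{\infty,q}(\P,\P')$. Letting $\delta\downarrow 0$ gives $\epsilon(\P')\le \epsilon(\P)+\AW^\Delta_{\infty,q}(\P,\P')$, and exchanging the roles of $\P$ and $\P'$ (note $\AW^\Delta_{\infty,q}$ is symmetric, since $\Pi^{bc}(\P,\P')$ and $\Pi^{bc}(\P',\P)$ are related by swapping $X$ and $Y$) gives the reverse bound, hence the claim.

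A minor point to handle carefully: Corollary~\ref{corol:crit} expresses $\epsilon(\P)$ as $\inf\{\epsilon\ge 0 : \mathcal M_\epsilon(\P)\neq\emptyset\}$, so the set of admissible $\epsilon$ may be of the form $(\epsilon(\P),\infty)$ or $[\epsilon(\P),\infty)$; in either case one can choose $\epsilon$ strictly below $\epsilon(\P)+\delta$ with $\mathcal M_\epsilon(\P)\ne\emptyset$, which is all that is needed. Also, if $\AW^\Delta_{\infty,q}(\P,\P')=\infty$ the inequality is trivial, so one may assume it is finite; and the case $\epsilon(\P)=0$ requires only that $\mathcal M_\epsilon(\P)\ne\emptyset$ for $\epsilon$ arbitrarily close to $0$, which again follows from Corollary~\ref{corol:crit}.

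I do not anticipate a serious obstacle here: the statement is essentially a direct corollary of Proposition~\ref{prop:cont} combined with the reformulation of the critical value in Corollary~\ref{corol:crit}, together with the symmetry of the adapted $L^\infty$-distance. The only thing requiring a line of justification is that $\AW^\Delta_{\infty,q}$ is symmetric, which is immediate from the definition since relabelling the canonical coordinates $(X,Y)\mapsto(Y,X)$ is a bijection between $\Pi^{bc}(\P,\P')$ and $\Pi^{bc}(\P',\P)$ preserving the cost $\sum_{t=0}^T|\Delta X_t-\Delta Y_t|_q$.
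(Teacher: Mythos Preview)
Your argument is correct and follows the same route as the paper's own proof. The only cosmetic difference is that the paper re-runs the estimate from the proof of Proposition~\ref{prop:cont} with the specific choice $\epsilon=\epsilon(\P)+1/m$ (and then lets $m\to\infty$), whereas you invoke Proposition~\ref{prop:cont} as a black box together with Corollary~\ref{corol:crit}; the two are logically equivalent, and your packaging is arguably cleaner. Your remark on symmetry of $\AW^\Delta_{\infty,q}$ is exactly what the paper uses implicitly to pass from the one-sided bound to the absolute value.
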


\begin{proof}
Take $\pi\in\Pi^{bc}(\P,\P')$ optimal for $\AW^\Delta_{\infty,q}(\P,\P')$.
For any $m\in\N$, fix $\Q_m\in \mathcal M_{\epsilon(\P)+1/m}(\P)$, and define the coupling $\hat\pi_{m}(dx,dy):=\pi^x(dy)\Q_m(dx)$, which is causal from $X$ to $Y$, and denote its second marginal  by $\Q'_{m}$.
As in the proof of Proposition~\ref{prop:cont}, from $\Q_m\sim \P$ we also have $\hat\pi_{m} \sim \pi$.  
Now, for any $H\in\mathcal{H}$, let $H^m_t(X):=\E_{\hat\pi_{m}} [H_t(Y)|X]=\E_{\hat\pi_{m}} [H_t(Y)|X_{0:t-1}]$. Then, similarly to the proof of Proposition~\ref{prop:cont}, we find that
\[
\E_{\Q'_{m}}\left[(H\bullet S)_T\right]
\leq 
\textstyle{\left(\epsilon(\P)+1/m+\AW^\Delta_{\infty,q}(\P,\P')\right)\E_{\Q'_{m}}[\|H\|_p]
}.
\]
Since the above holds for arbitrary $m\in\N$, and $\Q'_m\sim\P'$, we deduce that $\epsilon(\P')\leq \epsilon(\P)+\AW^\Delta_{\infty,q}(\P,\P')$.

\end{proof}

\begin{proposition}\label{prop:stab_M}
Let $\Psi$ be $L$-Lipschitz, in the sense that $|\Psi(y)-\Psi(x)|\leq L\|\Delta y-\Delta x\|_q$,  and $\Psi\in L^1(\Omega,\F,\P)$. Assume further that $p>1$. We have: If $\psi$ is an $\eps$-fair price for $\Psi$ under $\P$, then $\psi$ is also an $(\eps+L\,\AW^\Delta_{\infty,q}(\P,\P'))$-fair price for $\Psi$ under $\P'$.
\end{proposition}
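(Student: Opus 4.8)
The strategy mirrors the proof of Proposition~\ref{prop:cont}, but now we must also transport the semi-static part of the strategy (the buy-and-hold position in $\Psi$) and control the resulting perturbation of the payoff using the Lipschitz property. First I would take an arbitrary $\eps$-fair price $\psi$ for $\Psi$ under $\P$. By Theorem~\ref{thm:price_range} (using $p>1$), there is $\Q \in \mathcal M_\eps(\P)$ with $\psi \in [\E_\Q(\Psi) - \eps, \E_\Q(\Psi) + \eps]$. Let $\pi \in \Pi^{bc}(\P,\P')$ be optimal for $\AW^\Delta_{\infty,q}(\P,\P')$ and form $\hat\pi(dx,dy) := \pi^x(dy)\,\Q(dx)$ as in Remark~\ref{rem:AW_inf}, so that $\hat\pi$ is causal from $X$ to $Y$ and $\hat\pi \sim \pi$; denote by $\Q'$ the second marginal of $\hat\pi$, so $\Q' \sim \P'$.

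Next I would show that $\Q' \in \mathcal M_{\eps + \AW^\Delta_{\infty,q}(\P,\P')}(\P')$, which is exactly the computation in the proof of Proposition~\ref{prop:cont}. Then I would estimate $|\E_{\Q'}(\Psi) - \E_\Q(\Psi)|$: writing $\Psi$ on the $Y$-coordinate and using that under $\hat\pi$ the $X$-marginal is $\Q$ and the $Y$-marginal is $\Q'$, we get
\[
|\E_{\Q'}(\Psi) - \E_\Q(\Psi)| = |\E_{\hat\pi}[\Psi(Y) - \Psi(X)]| \le L\, \E_{\hat\pi}\Big[\sum_{t=0}^T |\Delta Y_t - \Delta X_t|_q\Big] \le L\, \AW^\Delta_{\infty,q}(\P,\P'),
\]
using the $L$-Lipschitz hypothesis and that $\hat\pi \sim \pi$ shares conditional kernels with $\pi$, hence has the same essential supremum of $\sum_t |\Delta X_t - \Delta Y_t|_q$. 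Combining, $\psi \in [\E_\Q(\Psi) - \eps, \E_\Q(\Psi) + \eps] \subseteq [\E_{\Q'}(\Psi) - \eps - L\,\AW^\Delta_{\infty,q}(\P,\P'), \E_{\Q'}(\Psi) + \eps + L\,\AW^\Delta_{\infty,q}(\P,\P')]$. Since $\Q' \in \mathcal M_{\eps + \AW^\Delta_{\infty,q}(\P,\P')}(\P')$, the first (``easy'') direction of Theorem~\ref{thm:price_range} applied under $\P'$ with parameter $\eps' := \eps + \AW^\Delta_{\infty,q}(\P,\P')$ yields that every point of $[\E_{\Q'}(\Psi) - \eps', \E_{\Q'}(\Psi) + \eps']$ is an $\eps'$-fair price for $\Psi$ under $\P'$; in particular $\psi$ is, which is the claim — provided one checks $\eps + L\,\AW^\Delta_{\infty,q} \le \eps' + $ (nothing), i.e.\ one should take $\eps' := \eps + \max(1,L)\,\AW^\Delta_{\infty,q}$...

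Actually, I would set $\delta := \AW^\Delta_{\infty,q}(\P,\P')$ and aim for the claimed constant $\eps + L\delta$ directly, which requires a slightly more careful bookkeeping. The cleanest route: rather than passing through the fair-price characterization twice, argue directly from the definition of $\eps$-fair price under $\P'$. Given a sequence of semi-static strategies $(H^k, a^k)$ under $\P'$ with
\[
Y' := \liminf_{k} \big\{ (H^k \bullet Y)_T + a^k(\Psi(Y) - \psi) - \eps'\big(\|H^k(Y)\|_p + |a^k|\big)\big\} \ge 0 \quad \Q'\text{-a.s.},
\]
where $\eps' = \eps + L\delta$, I would lift each $H^k$ to $\hat H^k_t(X) := \E_{\hat\pi}[H^k_t(Y)\mid X_{0:t-1}] \in \mathcal H$ (keeping the same weights $a^k$), and push the inequality through $\hat\pi$ as in Proposition~\ref{prop:cont}: the $\|H^k(Y)\|_p$-term absorbs the error $\E_{\hat\pi}[\sum_t |H^k_t(Y)|_p|\Delta Y_t - \Delta X_t|_q] \le \delta\,\E_{\hat\pi}[\|H^k(Y)\|_p]$ together with the Jensen gap between $\|\hat H^k(X)\|_p$ and $\|H^k(Y)\|_p$, while the $|a^k|$-term absorbs $|a^k|\,|\Psi(Y) - \Psi(X)| \le L\delta|a^k|$ pointwise (here the Lipschitz hypothesis is used in its strong, pathwise form, so it survives under the liminf and gives a $\hat\pi$-a.s.\ bound). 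This produces a semi-static sequence $(\hat H^k, a^k)$ under $\Q \sim \P$ with $\liminf_k\{(\hat H^k \bullet X)_T + a^k(\Psi(X) - \psi) - \eps(\|\hat H^k(X)\|_p + |a^k|)\} \ge 0$ $\Q$-a.s., hence $\P$-a.s. Since $\psi$ is $\eps$-fair under $\P$, this liminf is $0$ $\P$-a.s., and transporting back (using $\hat\pi \sim \pi$, so $\Q'$-null sets correspond to $\Q$-null sets along the kernel) forces $Y' = 0$ $\Q'$-a.s., i.e.\ $\Q'$-a.s., hence $\P'$-a.s.

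The main obstacle is the bookkeeping in the liminf estimate: one must verify that the pointwise/a.s.\ inequalities are preserved when passing from $Y$-variables to $X$-variables through $\hat\pi$ and then back to $\P'$, and in particular that taking $\liminf_k$ commutes correctly with conditional expectations — this is where the strong (pathwise, $\|\cdot\|_q$-based) Lipschitz assumption and the strong $L^\infty$-nature of $\AW^\Delta_{\infty,q}$ are essential, since only an almost-sure (not merely in-expectation) bound on $\sum_t|\Delta X_t - \Delta Y_t|_q$ survives the liminf. A subtle point is that the conditional expectation $\hat H^k_t$ may not be predictable for the original filtration unless one invokes causality of $\hat\pi$ from $X$ to $Y$, exactly as in Proposition~\ref{prop:cont}; and one must note $a^k$ is deterministic so it requires no lifting. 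Modulo these checks, the argument is a direct transcription of the Proposition~\ref{prop:cont} computation augmented by the single extra term coming from $\Psi$.
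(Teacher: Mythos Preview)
Your first approach---using Theorem~\ref{thm:price_range} (with $p>1$) to extract $\Q\in\M_\eps(\P)$ satisfying $|\psi-\E_\Q[\Psi]|\le\eps$, building $\Q'$ via Remark~\ref{rem:AW_inf}, invoking Proposition~\ref{prop:cont} to place $\Q'$ in $\M_{\eps+\delta}(\P')$ (where $\delta:=\AW^\Delta_{\infty,q}(\P,\P')$), and then bounding $|\E_{\Q'}[\Psi]-\E_\Q[\Psi]|\le L\delta$ via the Lipschitz hypothesis and $\hat\pi\sim\pi$---is exactly the paper's proof. The paper stops at $|\E_{\Q'}[\Psi]-\psi|\le\eps+L\delta$ and implicitly concludes via the first part of Theorem~\ref{thm:price_range} under $\P'$. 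Your observation that this last step requires $\Q'\in\M_{\eps+L\delta}(\P')$, whereas Proposition~\ref{prop:cont} only yields $\Q'\in\M_{\eps+\delta}(\P')$, is sharp: the argument as written delivers the stated constant when $L\ge 1$, and otherwise gives $\eps+\max(1,L)\,\delta$. This is a feature of the paper's own proof as well, not a defect of your reproduction.

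Your second, direct approach has a genuine gap. The conditional-expectation lifting $\hat H^k_t(X):=\E_{\hat\pi}[H^k_t(Y)\mid X_{0:t-1}]$ is precisely the device from the proof of Proposition~\ref{prop:cont}, but there it is applied \emph{after} taking $\E_{\hat\pi}$: the identity $\E_{\hat\pi}[H^k_t(Y)\cdot\Delta X_t]=\E_\Q[\hat H^k_t(X)\cdot\Delta X_t]$ holds in expectation, not $\hat\pi$-almost surely. In your direct argument you need the \emph{pathwise} inequality
\[
(H^k\bullet Y)_T-\eps'\|H^k(Y)\|_p+a^k(\Psi(Y)-\psi)-\eps'|a^k|\ge 0\quad\hat\pi\text{-a.s.}
\]
to transport to a pathwise inequality for $(\hat H^k\bullet X)_T-\eps\|\hat H^k(X)\|_p+\cdots$ under $\Q$; but replacing $H^k_t(Y)\cdot\Delta X_t$ by its conditional expectation given $X_{0:t-1}$ destroys the $\omega$-by-$\omega$ lower bound. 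Even if that step went through, the ``transport back'' (concluding $Y'=0$ $\P'$-a.s.\ from the lifted $\liminf$ vanishing $\P$-a.s.) does not follow either, since $H^k\mapsto\hat H^k$ is not invertible and different realizations of $Y$ over the same $X$ get averaged. Stick with the first route.
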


\begin{proof}
Per Theorem~\ref{thm:price_range} there is $\Q\in\M_\eps(\P)$ s.t.\ $|\psi-\E_\Q[\Psi]|\leq \eps$. By Remark~\ref{rem:AW_inf} and Proposition~\ref{prop:cont}, we can find $\Q'\in\M_{\eps+\AW^\Delta_{\infty,q}(\P,\P')}(\P')$, and further $\Q'$ is the second marginal of a coupling $\hat \pi$ whose first marginal is $\Q$ and s.t.\ $\hat\pi $ is equivalent to the optimizer $\pi$ of $\AW^\Delta_{\infty,q}(\P,\P')$. Thus
\begin{align*}|\E_{\Q'}[\Psi]-\E_{\Q}[\Psi]|=\E_{\hat \pi}[\Psi(Y)-\Psi(X)]\leq L\, \E_{\hat \pi}[\|\Delta Y-\Delta X\|_q]]&\leq L\,\esssup_{\hat \pi}\|\Delta Y-\Delta X\|_q \\ &= L\,\esssup_{\pi}\|\Delta Y-\Delta X\|_q \\&= L\,\AW^\Delta_{\infty,q}(\P,\P'). \end{align*}
 Hence
\[|\E_{\Q'}[\Psi]-\psi|\leq |\E_{\Q'}[\Psi]-\E_{\Q}[\Psi]|+|\E_{\Q}[\Psi]-\psi|\leq \eps+L\,\AW^\Delta_{\infty,q}(\P,\P'). \qedhere \]
\end{proof}

\begin{remark}
Both proofs of Propositions \ref{prop:cont} and \ref{lemma.conteps} can be made sharper if we drop from the definition of $\AW^\Delta_\infty(\P,\P')$ the term  with $t=0$ in the sum defining the cost function. See for instance the ``generous'' bound in \eqref{eq:non-sharp-step}. As a result, the statements of Propositions \ref{prop:cont} and \ref{lemma.conteps} can be made sharper by dropping $t=0$ from $\AW^\Delta_\infty(\P,\P')$. We do not go this way, since  e.g.\  in Proposition~\ref{prop:stab_M} it would make little sense to require the function $\Psi$ not to depend on the initial number of shares. \hfill$\Diamond$
\end{remark}

The following example shows how adapted Wasserstein distances (and hence Wasserstein distances) are not appropriate for arbitrage quantification for costs other than the $L^\infty$ one.
\begin{example}
Consider the simple setting of one asset ($d=1$) and one time step ($T=1$). For any $\delta>0$  arbitrarily small and $M>0$ arbitrarily big, we define $\P$ and $\P'$ as follows: Let $\P\in\mathcal{P}(\overline\Omega)$ be such that the canonical process $X=(X_0,X_1)$ satisfies $\P$-a.s.\ $X_0=0$ and $X_1=M$. Then clearly the market with asset $S:=X$ satisfies $\eps(\P)=M$. On the other hand, let $\P'\in\mathcal{P}(\overline\Omega)$ such that $X_0=0$ $\P'$-a.s., while $\P'(X_1=M)=1-p_\delta$ and $\P'(X_1=-\delta)=p_\delta$, where $p_\delta=\delta/(M+\delta)^r$. Then for $S:=X$ we have $\eps(\P')=0$. If we now consider $(\AW^\Delta_{r,q}(\P,\P'))^r$ defined as in Definition \ref{def:AWinftyq} but with the essential supremum replaced by $\E_\pi[(\dots)^r]$, then $\AW^\Delta_{r,q}(\P,\P')=\delta^{1/r}$. Thus $\P$ and $\P'$ are arbitrarily close for $\AW^\Delta_{r,q}$ and yet $\epsilon(\P)-\epsilon(\P')$ is arbitrarily large. Notice that here $\AW^\Delta_{\infty,q}(\P,\P')=M+\delta$.
~\hfill$\Diamond$
\end{example}

\begin{remark}\label{rem:filtered}
A stochastic process $S$ on $(\Omega,\F,\P)$ induces a measure on $\R^{dT}$ by taking its law under $\P$. The definitions in this section would then be applied to measures $\P,\P'$ being actually such image measures. Doing this, however, ignores the original filtration to which $S$ was supposed to be adapted. It is possible to directly define the concepts in this part (i.e.\ couplings, causality, and adapted distances) directly at the level of stochastic processes on a filtered probability space, without the use of image measures. See \cite{BaBePa21}  for a thorough discussion, albeit not for $L^\infty$ costs. This remark also pertains Section \ref{sec:approx_Laplace} below. \hfill$\Diamond$
\end{remark}

\section{Regularization by the Laplace method}\label{sect:reg}
We continue here with the canonical setting of Section \ref{sect:cont}, with the same notation. 
The $\infty$-Wasserstein distance was introduced in \cite{ChdPJu08}. In mathematical finance this distance has been applied in \cite{GeGu19,GeGu20,ObWi18}. If $\P,\P'\in\mathcal P(\mathbb R^{d(T+1)})$, then their $\infty$-Wasserstein distance is given by
\begin{definition} The $\infty$-Wasserstein distance between $\P$ and $\P'$ is given by

\[
\textstyle{\mathcal W_{\infty,q}(\P,\P'):=\inf_{\pi\in\Pi(\P,\P')}\esssup_\pi \|X- Y\|_q},
\]
and its counterpart in terms of bicausal couplings is
\[
\textstyle{\AW_{\infty,q}(\P,\P'):=\inf_{\pi\in\Pi^{bc}(\P,\P')}\esssup_\pi \sum_{t=0}^{T}| X_t- Y_t|_q}=\inf_{\pi\in\Pi^{bc}(\P,\P')}\esssup_\pi \| X- Y\|_q.
\]
\end{definition}

In this part we will provide a brief analysis of $\AW_{\infty,q}$. We decided to do so for $\AW_{\infty,q}$, rather than for $\AW^\Delta_\infty$ introduced in the previous part, as the former is more closely related to existing objects of interest such as $\mathcal W_{\infty,q}$. Nevertheless, 
we have the bounds\footnote{{This follows from the  inequalities $|\Delta z_t|_q + |\Delta z_{t+1}|_q \leq |z_{t-1}|_q +2|z_{t}|_q + |z_{t+1}|_q $ and $|z_t|_q =|\sum_{s=0}^t \Delta z_s|_q\leq \sum_{s=0}^t |\Delta z_s|_q $.}}
$$\frac{1}{T}\AW_{\infty,q}(\P,\P')\leq \AW^\Delta_{\infty,q}(\P,\P') \leq 2\AW_{\infty,q}(\P,\P'),$$
meaning that all metric properties in this part and in Section \ref{sect:cont} are shared between these two objects. In fact more is true, since everything we do in this part carries over when replacing the processes $X,Y$ by the increment processes $\Delta X,\Delta Y$.

\subsection{Approximation by adapted log-exponential divergences}\label{sec:approx_Laplace}

One way to appreciate the definitions of $\AW_{\infty,q}$ and $\mathcal W_{\infty,q}$, which we will find insightful, is by introducing an exponential version of these object. Observe that $\mathcal W_{\infty,q} = \AW_{\infty,q}$ if $T=1$, so in this sense it suffices to focus our discussion on $\AW_{\infty,q}$, as we will indeed do.

\begin{definition}
We introduce the adapted log-exponential divergence between $\P$ and $\P'$ as follows:
\[
E^q_\lambda(\P,\P'):= \frac{1}{\lambda}\log \inf_{\pi\in\Pi^{bc}(\P,\P')} \E_\pi[\exp(\lambda \|X-Y \|_q)].
\]
\end{definition}
{Remark that $E^q_\lambda(\P,\P')$ is always well-defined, taking values on the extended positive half-line.} We recall the so-called Laplace method, and the Gibbs variational principle, 
both of which will play a role in what follows. We use the notation $H(\tilde\pi |\pi)$ for the relative entropy of $\tilde\pi$ w.r.t. $\pi$, which is given by
\[
\E_{\tilde \pi}[\log(d\tilde\pi / d\pi)]\; \text{if $\tilde \pi\ll\pi$,\; and\, $+\infty $\, otherwise}.
\]

\begin{lemma}\label{lem:LapGib}
\begin{enumerate}
\item 
We have  $$\esssup_\pi \|X- Y\|_q=\lim_{\lambda \to \infty}\frac{1}{\lambda}\log  \E_\pi[\exp(\lambda \|X-Y \|_q)] ;$$
\item we have
$$\frac{1}{\lambda}\log  \E_\pi[\exp(\lambda \|X-Y \|_q) = \sup_{\tilde \pi \sim \pi}\left\{\E_{\tilde\pi}[\|X-Y \|_q] - \frac{1}{\lambda}H(\tilde \pi| \pi)\right\}.$$
\end{enumerate}
\end{lemma}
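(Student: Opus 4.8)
The plan is to prove the two assertions separately, both being classical facts (the Laplace principle and the Gibbs variational principle) applied to the fixed measure $\pi$ and the bounded-below — in fact nonnegative — measurable cost $f := \|X-Y\|_q$.

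For part (1), I would first note that if $f$ is $\pi$-essentially unbounded, then both sides are $+\infty$: the right side is $+\infty$ because for every $M$ the set $\{f \ge M\}$ has positive $\pi$-measure, so $\E_\pi[\exp(\lambda f)] \ge e^{\lambda M}\pi(f\ge M)$ and hence $\tfrac1\lambda \log \E_\pi[\exp(\lambda f)] \ge M + \tfrac1\lambda\log \pi(f\ge M) \to M$ as $\lambda\to\infty$, for every $M$. So assume $s:=\esssup_\pi f < \infty$. The upper bound is immediate from $\exp(\lambda f)\le e^{\lambda s}$ $\pi$-a.s., giving $\tfrac1\lambda\log\E_\pi[\exp(\lambda f)]\le s$ for all $\lambda$. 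For the lower bound, fix $\delta>0$; since $\pi(f > s-\delta)>0$ we get $\E_\pi[\exp(\lambda f)] \ge e^{\lambda(s-\delta)}\pi(f>s-\delta)$, hence $\liminf_{\lambda\to\infty}\tfrac1\lambda\log\E_\pi[\exp(\lambda f)]\ge s-\delta$; letting $\delta\downarrow 0$ finishes the argument. (Monotonicity of $\lambda\mapsto \tfrac1\lambda\log\E_\pi[\exp(\lambda f)]$ in fact makes the $\liminf$ a genuine limit, but this is not needed.)

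For part (2), this is the Gibbs variational principle / Donsker–Varadhan duality for the fixed reference measure $\pi$ and the bounded nonnegative function $f$. For the inequality $\ge$: for any $\tilde\pi\sim\pi$ with $H(\tilde\pi|\pi)<\infty$, writing $g := d\tilde\pi/d\pi$, Jensen's inequality applied to $\log$ under $\tilde\pi$ (or equivalently the elementary bound $ab \le a\log a - a + e^b$) yields $\E_{\tilde\pi}[f] - \tfrac1\lambda H(\tilde\pi|\pi) = \tfrac1\lambda\E_{\tilde\pi}[\log(e^{\lambda f}/g)] \le \tfrac1\lambda\log \E_{\tilde\pi}[e^{\lambda f}/g] = \tfrac1\lambda\log\E_\pi[e^{\lambda f}]$; taking the supremum over $\tilde\pi$ gives $\ge$. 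For the reverse inequality, exhibit the optimizer: since $f$ is bounded, $e^{\lambda f}$ is $\pi$-integrable and bounded away from $0$, so the tilted measure $\tilde\pi^\ast(d\omega):= e^{\lambda f(\omega)}\,\pi(d\omega)/\E_\pi[e^{\lambda f}]$ is well-defined and equivalent to $\pi$. A direct computation gives $H(\tilde\pi^\ast|\pi) = \lambda\E_{\tilde\pi^\ast}[f] - \log\E_\pi[e^{\lambda f}]$, hence $\E_{\tilde\pi^\ast}[f] - \tfrac1\lambda H(\tilde\pi^\ast|\pi) = \tfrac1\lambda\log\E_\pi[e^{\lambda f}]$, so the supremum is attained and equals the left-hand side.

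I do not anticipate a serious obstacle: both parts are standard. The only mild point of care is the restriction of the supremum in (2) to $\tilde\pi\sim\pi$ (rather than $\tilde\pi\ll\pi$): since $f$ is bounded the tilted optimizer is genuinely equivalent to $\pi$, so the supremum over the smaller class $\{\tilde\pi\sim\pi\}$ already attains the value, and a fortiori equals the supremum over $\{\tilde\pi\ll\pi\}$; no approximation argument is needed. One should also make sure the $f$-essentially-unbounded case in (1) is handled, as above, since elsewhere in the paper $\|X-Y\|_q$ need not be bounded, although in applications here it is controlled by $\AW_{\infty,q}(\P,\P')<\infty$ on the optimizing coupling.
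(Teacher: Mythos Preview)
Your treatment of part~(1) is correct and essentially the same as the paper's argument (the paper phrases the lower bound as a contradiction, you write the direct estimate, but these are the same idea).

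For part~(2), however, there is a genuine gap. You repeatedly assert that $f=\|X-Y\|_q$ is bounded (``the bounded nonnegative function $f$'', ``since $f$ is bounded, $e^{\lambda f}$ is $\pi$-integrable''), but nothing in the lemma guarantees this: $\pi$ is an arbitrary probability on $\overline\Omega\times\overline\Omega$ and $\|X-Y\|_q$ may well be $\pi$-essentially unbounded. You yourself flag this for part~(1), but your argument for~(2) relies on boundedness in an essential way. Concretely: if $\E_\pi[e^{\lambda f}]=\infty$ your tilted optimizer $\tilde\pi^\ast$ is not defined, and even when $\E_\pi[e^{\lambda f}]<\infty$ it can happen that $\E_{\tilde\pi^\ast}[f]=\infty$ (take e.g.\ $\pi$ with density $\propto e^{-\lambda x}x^{-2}$ on $[1,\infty)$), in which case $H(\tilde\pi^\ast|\pi)=\infty$ and the plug-in computation yields $\infty-\infty$.

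The paper handles exactly this point: it first proves~(2) for the truncated cost $\|X-Y\|_q\wedge m$, for which your optimizer-construction works verbatim, and then sends $m\to\infty$ using monotone convergence on both sides. This truncation is also what makes the unbounded case of~(1) go through in the paper (they swap the suprema over $m$ and $\lambda$ via the variational formula). So the fix is routine, but it is needed; as written, your proof of~(2) only covers the bounded case.
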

{ Note that item $(2)$ can be seen from the dual formulation of the entropic risk measure, see \cite[Example 4.34]{FoSc16}.}
\begin{proof}
For the second statement, we first apply the Gibbs variational principle to $\|X-Y\|_q\wedge m$ instead of $\|X-Y\|_q$, the former being continuous and bounded, so obtaining 
$$\frac{1}{\lambda}\log  \E_\pi[\exp(\lambda \|X-Y \|_q\wedge m) = \sup_{\tilde \pi \sim \pi}\left\{\E_{\tilde\pi}[\|X-Y \|_q\wedge m] - \frac{1}{\lambda}H(\tilde \pi| \pi)\right\}. $$
Then we send $m\to\infty$ applying upwards monotone convergence on both sides.

For the first claim,  we initially assume that $S(\pi):=\esssup_\pi \|X- Y\|_q$ is finite. Then the convergence 
$$\lim_{\lambda\to \infty}  \frac{1}{\lambda}\log  \E_\pi[\exp(\lambda \|X-Y \|_q)] = S(\pi), $$
is equivalent to
$$0=\lim_{\lambda\to \infty}  \frac{1}{\lambda}\log  \E_\pi[\exp(\lambda\{ \|X-Y \|_q- S(\pi)\})].$$ 
That $0\geq \limsup (\dots)$ is direct, as the exponent is negative, and so we need only derive $0\leq \liminf (\dots)$. If the latter was not the case, then there would be $\varepsilon>0$ so that
$$\E_\pi[\exp(\lambda\{\|X-Y\|_q-(S(\pi)-\varepsilon)\})]\leq 1,$$
for all $\lambda$ large enough. But by definition of $S(\pi)$ a $\delta>0$ must exist so that $p_{\delta}:=\pi(\{\|X-Y\|_q > S(\pi)-\varepsilon +\delta \})>0$. Hence
$$\E_\pi[\exp(\lambda\{\|X-Y\|_q-(S(\pi)-\varepsilon)\})]\geq e^{\lambda \delta}p_{\delta},$$
and this is not bounded by $1$, leading to a contradiction. If now $S(\pi)$ is not finite, we apply the same reasoning to $\|X-Y\|_q\wedge m$, which does have a finite essential supremum, and then take supremum in $m$. Indeed, by  Point (2), we have \[\frac{1}{\lambda}\log  \E_\pi[\exp(\lambda \|X-Y \|_q\wedge m) ]= \sup_{\tilde \pi \sim \pi}\left\{\E_{\tilde\pi}[\|X-Y \|_q\wedge m] - \frac{1}{\lambda}H(\tilde \pi| \pi)\right\},\]  the r.h.s.\ of which is increasing in both $m$ and $\lambda$, 
so we conclude
\begin{align*}\esssup_\pi \|X- Y\|_q &= \sup_m\sup_\lambda\sup_{\tilde \pi \sim \pi}\left\{\E_{\tilde\pi}[\|X-Y \|_q\wedge m] - \frac{1}{\lambda}H(\tilde \pi| \pi)\right\} \\  &=\sup_\lambda\sup_m\sup_{\tilde \pi \sim \pi}\left\{\E_{\tilde\pi}[\|X-Y \|_q\wedge m] - \frac{1}{\lambda}H(\tilde \pi| \pi)\right\} \\
&= \lim_\lambda \frac{1}{\lambda}\log  \E_\pi[\exp(\lambda \|X-Y \|_q)].
\end{align*}
\end{proof}

\begin{lemma} \label{lem:ElambdatoAinfty}
We have 
$$\AW_{\infty,q}(\P,\P')= \lim_{\lambda\to \infty} E^q_\lambda(\P,\P')=\sup_{\lambda\geq 0 } E^q_\lambda(\P,\P').$$
\end{lemma}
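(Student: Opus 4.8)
The plan is to combine the two statements of Lemma~\ref{lem:LapGib} with a compactness/lower-semicontinuity argument analogous to the one in the proof of Lemma~\ref{lem:existence_AW_opt}. First I would record the easy direction: for each fixed $\lambda$ one has, using Lemma~\ref{lem:LapGib}(ii),
\[
E^q_\lambda(\P,\P') = \inf_{\pi\in\Pi^{bc}(\P,\P')}\sup_{\tilde\pi\sim\pi}\Big\{\E_{\tilde\pi}[\|X-Y\|_q]-\tfrac1\lambda H(\tilde\pi|\pi)\Big\}
\le \inf_{\pi\in\Pi^{bc}(\P,\P')}\esssup_\pi\|X-Y\|_q = \AW_{\infty,q}(\P,\P'),
\]
since dropping the (nonnegative) entropy term and taking $\tilde\pi=\pi$ only increases the bracket, and the essential supremum dominates the $\tilde\pi$-expectation for any $\tilde\pi\sim\pi$. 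Moreover $\lambda\mapsto \frac1\lambda\log\E_\pi[\exp(\lambda\|X-Y\|_q)]$ is nondecreasing (a standard convexity fact about the cumulant generating function / by the representation in Lemma~\ref{lem:LapGib}(ii), the supremand is nondecreasing in $\lambda$), so $\lambda\mapsto E^q_\lambda(\P,\P')$ is nondecreasing as an infimum of nondecreasing functions. Hence $\sup_{\lambda\ge0}E^q_\lambda(\P,\P')=\lim_{\lambda\to\infty}E^q_\lambda(\P,\P')\le \AW_{\infty,q}(\P,\P')$.

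For the reverse inequality I would argue as follows. For each $\lambda$, let $\pi_\lambda\in\Pi^{bc}(\P,\P')$ be (nearly) optimal for $E^q_\lambda$, so that $\frac1\lambda\log\E_{\pi_\lambda}[\exp(\lambda\|X-Y\|_q)]\le E^q_\lambda(\P,\P')+\tfrac1\lambda$. By weak compactness of $\Pi^{bc}(\P,\P')$ (used already in Lemma~\ref{lem:existence_AW_opt}), pass to a subsequence $\pi_{\lambda_n}\to\pi_\ast\in\Pi^{bc}(\P,\P')$ with $\lambda_n\to\infty$. Fix $\mu<\infty$. For $n$ large, $\lambda_n\ge\mu$, and by monotonicity in $\lambda$,
\[
\frac1\mu\log\E_{\pi_{\lambda_n}}[\exp(\mu\|X-Y\|_q)]\le \frac1{\lambda_n}\log\E_{\pi_{\lambda_n}}[\exp(\lambda_n\|X-Y\|_q)]\le \lim_{k\to\infty}E^q_{\lambda_k}(\P,\P')+\tfrac1{\lambda_n}.
\]
The functional $\pi\mapsto \frac1\mu\log\E_\pi[\exp(\mu\,\|X-Y\|_q)]$ is lower semicontinuous with respect to weak convergence — here I would invoke the truncation trick as in Lemma~\ref{lem:LapGib}: since $x\mapsto \exp(\mu(\|x-y\|_q\wedge m))$ is bounded and continuous, $\pi\mapsto \E_\pi[\exp(\mu(\|X-Y\|_q\wedge m))]$ is continuous, and monotone convergence in $m$ upgrades this to lower semicontinuity of the untruncated version. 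Letting $n\to\infty$ gives $\frac1\mu\log\E_{\pi_\ast}[\exp(\mu\|X-Y\|_q)]\le \lim_kE^q_{\lambda_k}(\P,\P')$. Now let $\mu\to\infty$ and apply Lemma~\ref{lem:LapGib}(i) to $\pi_\ast$:
\[
\AW_{\infty,q}(\P,\P')\le \esssup_{\pi_\ast}\|X-Y\|_q = \lim_{\mu\to\infty}\frac1\mu\log\E_{\pi_\ast}[\exp(\mu\|X-Y\|_q)]\le \lim_{k\to\infty}E^q_{\lambda_k}(\P,\P'),
\]
which together with the first part yields equality.

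The step I expect to be the main obstacle is the lower semicontinuity of the entropic functional $\pi\mapsto \frac1\mu\log\E_\pi[\exp(\mu\|X-Y\|_q)]$ under weak convergence when the cost is unbounded: one must rule out mass escaping to infinity. The truncation argument handles this cleanly (boundedness of $\pi\mapsto\E_\pi[\exp(\mu(\|X-Y\|_q\wedge m))]$ is automatic, and the supremum over $m$ of lower semicontinuous functions is lower semicontinuous), but care is needed to make sure the interchange of the $\limsup$ in $n$ with the $\sup$ in $m$ is justified — exactly as in the computation at the end of the proof of Lemma~\ref{lem:LapGib}. A secondary point worth double-checking is the monotonicity of $\lambda\mapsto\frac1\lambda\log\E_\pi[e^{\lambda Z}]$ for a generic nonnegative $Z$; this follows from Jensen/Hölder, or more transparently from representation (ii), where the supremand $\E_{\tilde\pi}[Z]-\frac1\lambda H(\tilde\pi|\pi)$ is manifestly nondecreasing in $\lambda$.
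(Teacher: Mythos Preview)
Your proposal is correct and follows essentially the same approach as the paper: both arguments establish the easy inequality via the Gibbs variational formula (Lemma~\ref{lem:LapGib}(2)), then for the reverse inequality take near-optimizers $\pi_\lambda$, extract a weakly convergent subsequence in $\Pi^{bc}(\P,\P')$, freeze a lower parameter $\mu=\lambda_0$, use the monotonicity in $\lambda$, pass to the limit via lower semicontinuity of $\pi\mapsto \E_\pi[\exp(\lambda_0\|X-Y\|_q)]$, and finally send $\lambda_0\to\infty$ using Lemma~\ref{lem:LapGib}(1). The only cosmetic difference is that you justify the lower semicontinuity step by truncation, whereas the paper invokes the Portmanteau theorem directly.
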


\begin{proof}
By Lemma~\ref{lem:LapGib}(2), we have 
\[
E^q_\lambda(\P,\P')=  \inf_{\pi\in\Pi^{bc}(\P,\P')}\sup_{\tilde\pi\sim \pi}\left\{\E_{\tilde\pi}[\|X-Y \|_q] - \frac{1}{\lambda}H(\tilde \pi| \pi)\right\}\leq \inf_{\pi\in\Pi^{bc}(\P,\P')}\sup_{\tilde\pi\sim \pi}\E_{\tilde\pi}[\|X-Y \|_q]=\AW_{\infty,q}(\P,\P') . 
\]
Hence we only have to show that $\AW_{\infty,q}(\P,\P')\leq \liminf_{\lambda\to \infty} E^q_\lambda(\P,\P')$. Hence we may suppose that $\liminf_{\lambda\to \infty} E^q_\lambda(\P,\P')= \lim_{\lambda\to \infty} E^q_\lambda(\P,\P')<\infty$. Furthermore, by compactness, then up to selecting a subsequence we may assume $\pi_\lambda\to \hat\pi  \in\Pi^{bc}(\P,\P')$ where $\pi_\lambda$ is up to an error of $1/\lambda$ an optimizer of $E^q_\lambda(\P,\P')$. Thus by Lemma~\ref{lem:LapGib}(2),
\begin{align*}
E^q_\lambda(\P,\P')+1/\lambda & \geq \sup_{\tilde \pi \sim \pi_\lambda}\left\{\E_{\tilde\pi}[\|X-Y \|_q] - \frac{1}{\lambda}H(\tilde \pi| \pi_\lambda)\right\} \\ & \geq \sup_{\tilde \pi \sim \pi_\lambda}\left\{\E_{\tilde\pi}[\|X-Y \|_q] - \frac{1}{\lambda_0}H(\tilde \pi| \pi_\lambda)\right\}\\ &= \frac{1}{\lambda_0}\log  \E_{\pi_\lambda}[\exp(\lambda_0 \|X-Y \|_q) , 
\end{align*}
for all $\lambda\geq \lambda_0$. Taking limit in $\lambda$ and by the Portmanteau theorem we find $\lim_{\lambda\to \infty} E^q_\lambda(\P,\P')\geq \frac{1}{\lambda_0}\log  \E_{\hat \pi}[\exp(\lambda_0 \|X-Y \|_q) $. Taking limit in $\lambda^0$ and applying Lemma~\ref{lem:LapGib}(1) we obtain $ \lim_{\lambda\to \infty} E^q_\lambda(\P,\P')\geq\esssup_{\hat\pi} \|X- Y\|_q$ and we conclude.
\end{proof}
    
\begin{remark}
At this point it is relevant to stress that $E^q_\lambda(\cdot,\cdot)$ is not an extended metric. The theory of Modular spaces \cite{Mu06} serves as an inspiration in order to build extended metrics from functionals like $E^q_\lambda$; for instance
\begin{align*}
\AW_{exp,q}(\P,\P'):= \inf_{a\in \R_+}\left [ a+ a E^q_{1/a}(\P,\P')  \right]. 
\end{align*}
\hfill$\Diamond$
\end{remark}

\begin{remark}
The attentive reader may wonder if there is a connection between the here described Laplace method and the concept of $\epsilon$-arbitrage. This is indeed the case: a trading strategy $H$ is a strict $\epsilon$-arbitrage iff $\lim_{\lambda\to\infty}\frac{-1}{\lambda}\log\E[\exp\{-\lambda[(H \bullet S)_T - \epsilon \| H \|_p]\}] \geq 0$ and for some $\tilde \P\ll\P$ also $\lim_{\lambda\to\infty}\frac{-1}{\lambda}\log \E_{\tilde \P}[\exp\{-\lambda[(H \bullet S)_T - \epsilon \| H \|_p]\}] >0$.
\hfill$\Diamond$
\end{remark}

\subsection{Empirical estimation}

One obvious way to approximate  $\AW_{\infty,q}(\P,\P')$ would be to use its empirical counterpart $\AW_{\infty,q}(\P_n,\P'_n)$. Here $\P_n=\frac{1}{n}\sum_{k\leq n}\delta_{X^k}$ and $\P'_n=\frac{1}{n}\sum_{k\leq n}\delta_{{X'}^k}$, with $\{X^k\}_k$ being iid samples from $\P$ and $\{{X'}^k\}_k$  iid samples from $\P'$, and these samples being independent of each other. This would be too naive, though, as the essential supremum is too strong as an optimization criterion and we in general cannot expect for instance that  $\AW_{\infty,q}(\P_n,\P)\to 0$. Luckily Lemma~\ref{lem:ElambdatoAinfty} provides a way around this issue, as it suggest to approximate instead $E^q_\lambda(\P,\P')$ by  $E^q_\lambda(\P_n,\P'_n)$ for $\lambda$ large enough, and so hopefully $\AW_{\infty,q}(\P,\P')$ is well approximated by $E^q_\lambda(\P_n,\P'_n)$. 
However, this brings us to the next difficulty, as in fact we cannot expect for instance that { $E^q_\lambda(\P_n,\P')\to E^q_\lambda(\P,\P')$}. Similar phenomena, which are originated by the presence of bicausal couplings in the definition of the optimization problems at hand, have been described in great detail in \cite{BaBaBeWi20,PfPi16}. 

In this part we assume that $\P$ and $\P'$ have bounded support, and so w.l.g.\ that they are concentrated on $[0,1]^{d(T+1)}$.  Borrowing from \cite{BaBaBeWi20} we introduce the adapted empirical measure:

\begin{definition}[Adapted empirical measure]
\label{def:adapted.empirical.measure}
Set $r=(T+2)^{-1}$ for $d=1$ and $r=(d(T+1))^{-1}$ for $d\geq 2$.
For all $n\geq 1$, partition the cube $[0,1]^d$ into the disjoint union of $n^{rd}$ cubes with edges of length $n^{-r}$ and let $\varphi^n\colon[0,1]^d\to[0,1]^d$ map each such small cube to its center\footnote{That is, the function $\varphi^n$ satisfies $\sup_{u\in[0,1]^d} |u-\varphi^n(u)|\leq Cn^{-r}$ and its range $\varphi^n([0,1]^d)$ consist of $n^{rd}$ points.}.
Then, for all $n\geq 1$, define
\[ {\bf P_n} :=\frac{1}{n}\sum_{k=1}^n\delta_{\varphi^n(X_0^k),\dots,\varphi^n(X_T^k)},\]
where $\{X^k\}_k$ are iid samples from $\P$. 
We call ${\bf P_n} $ the {\emph {adapted empirical measure}} of $\P$.
\end{definition}

Similarly let ${\bf P'_n}$ be the adapted empirical measure of $\P'$, built from the independent iid samples $\{{X'}^k\}_k$. We can now formalize the idea of approximating $\AW_{\infty,q}(\P,\P')$ by means of adapted empirical measures:

\begin{lemma}\label{lem:empiricalconvergesinfinity}
For each $\lambda$, the following almost sure limit holds:
$$\lim_{n\to \infty} E^q_\lambda({\bf P_n},{\bf P'_n})=E^q_\lambda(\P,\P').$$
Hence almost surely we have
$$\AW_{\infty,q}(\P,\P')=\sup_{\lambda\geq 0} \lim_{n\to \infty} E^q_\lambda({\bf P_n},{\bf P'_n}).$$
\end{lemma}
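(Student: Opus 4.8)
The plan is to establish the first displayed identity --- the almost sure convergence $E^q_\lambda(\mathbf{P_n},\mathbf{P'_n})\to E^q_\lambda(\P,\P')$ for each fixed $\lambda$ --- and then deduce the second identity by combining it with Lemma~\ref{lem:ElambdatoAinfty}. For the second identity one has to be slightly careful: Lemma~\ref{lem:ElambdatoAinfty} gives $\AW_{\infty,q}(\P,\P')=\sup_{\lambda\ge0}E^q_\lambda(\P,\P')$, and by the first part $E^q_\lambda(\P,\P')=\lim_n E^q_\lambda(\mathbf{P_n},\mathbf{P'_n})$ on a set of full probability for each rational $\lambda$; since $\lambda\mapsto E^q_\lambda$ is nondecreasing (an immediate consequence of the Gibbs representation in Lemma~\ref{lem:LapGib}(2), or of Jensen), the supremum over $\lambda\ge0$ equals the supremum over rational $\lambda$, so intersecting countably many full-probability sets we get $\AW_{\infty,q}(\P,\P')=\sup_{\lambda\ge0}\lim_n E^q_\lambda(\mathbf{P_n},\mathbf{P'_n})$ almost surely. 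So the whole content is the fixed-$\lambda$ statement.

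For the fixed-$\lambda$ convergence, the strategy is to reduce $E^q_\lambda$ to a bicausal optimal transport problem with a bounded continuous cost and then invoke the consistency results for the adapted empirical measure from \cite{BaBaBeWi20}. Concretely, write $c_\lambda(x,y):=\exp(\lambda\|x-y\|_q)$, which on $[0,1]^{d(T+1)}\times[0,1]^{d(T+1)}$ is continuous and bounded (say by $e^{\lambda C}$ for a dimensional constant $C$); then $\lambda E^q_\lambda(\P,\P')=\log\inf_{\pi\in\Pi^{bc}(\P,\P')}\E_\pi[c_\lambda]$, i.e.\ $E^q_\lambda$ is a monotone transformation of the bicausal transport cost $V_\lambda(\P,\P'):=\inf_{\pi\in\Pi^{bc}}\E_\pi[c_\lambda(X,Y)]$. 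The main known input is that the adapted empirical measure is consistent for bicausal transport with bounded Lipschitz (more generally continuous bounded) cost: $V_\lambda(\mathbf{P_n},\mathbf{P'_n})\to V_\lambda(\P,\P')$ almost surely --- this is exactly the kind of statement proved in \cite{BaBaBeWi20} (the quantization rate $r$ in Definition~\ref{def:adapted.empirical.measure} is chosen precisely so that $\mathbf{P_n}\to\P$ in adapted Wasserstein distance a.s.). Since $\log$ is continuous and $V_\lambda$ is bounded away from $0$ (it is $\ge1$), applying $\frac1\lambda\log(\cdot)$ to both sides transfers the convergence to $E^q_\lambda$.

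The main obstacle --- or rather the point requiring the most care --- is verifying that the hypotheses of the \cite{BaBaBeWi20} consistency theorem genuinely apply here: one needs the cost $c_\lambda$ to be in the class covered there (bounded and Lipschitz on the compact support, which it is, with Lipschitz constant $\lambda e^{\lambda C}$), and one needs both marginals to be approximated simultaneously by independent adapted empirical measures, which is the form in which their result (or a direct two-sided extension of it) is stated. A clean way to organize this is: (i) bound $|V_\lambda(\mathbf{P_n},\mathbf{P'_n})-V_\lambda(\P,\P')|$ by $\mathrm{Lip}(c_\lambda)\cdot\bigl(\AW_{1,q}(\mathbf{P_n},\P)+\AW_{1,q}(\mathbf{P'_n},\P')\bigr)$ using the standard Lipschitz stability of bicausal transport costs in adapted Wasserstein distance; (ii) invoke \cite{BaBaBeWi20} for the a.s.\ convergence $\AW_{1,q}(\mathbf{P_n},\P)\to0$ and likewise for $\P'$; (iii) conclude. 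Everything else --- the passage to the $\log$, the monotonicity in $\lambda$, the countable-intersection argument for the supremum --- is routine.
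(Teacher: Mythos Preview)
Your proposal is correct and follows essentially the same route as the paper: reduce to a.s.\ convergence $\AW_{1,q}(\mathbf{P_n},\P)\to 0$ (and likewise for $\P'$) via \cite{BaBaBeWi20}, then use stability of bicausal transport with bounded continuous cost to transfer this to $V_\lambda$, and finally exploit monotonicity in $\lambda$ plus a countable intersection for the second identity. The only cosmetic difference is that the paper cites a ready-made stability theorem (\cite[Theorem 3.6]{EcPa22}) rather than writing out the Lipschitz bound in step (i), and takes the supremum over $\lambda\in\N$ rather than over rationals; your explicit Lipschitz estimate is a perfectly valid substitute, though note it implicitly relies on an adapted gluing lemma to compose bicausal couplings.
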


\begin{proof}
Define $\AW_{1,q}$ exactly as $\AW_{\infty,q}$ but with $\esssup_\pi \|X- Y\|_q$ replaced by $\mathbb E_\pi[\|X- Y\|_q]$. By \cite[Theorem 1.3]{BaBaBeWi20} we have the almost sure limits
\begin{align*}
\lim_{n\to \infty} \AW_{1,q}({\bf P_n},\P) &= 0; \\
\lim_{n\to \infty} \AW_{1,q}({\bf P'_n},\P') &=0.
\end{align*}
On the other hand, the function $([0,1]^{d(T+1)})^2 \ni (X,Y)\mapsto \exp(\lambda\|X-Y\|_q)$ is continuous and bounded by an affine function of $\|X\|_q+\|Y\|_q$. We may apply the stability result \cite[Theorem 3.6]{EcPa22} obtaining $\lim_{n\to \infty} E^q_\lambda({\bf P_n},{\bf P'_n})=E^q_\lambda(\P,\P')$. For the last statement we notice that the $E^q_\lambda$ are increasing in $\lambda$, so by Lemma~\ref{lem:ElambdatoAinfty} we may write $\AW_{\infty,q}(\P,\P')=\sup_{\lambda\in\mathbb N}E^q_\lambda(\P,\P')$, and conclude by the first part after intersecting countably many full sets. 
\end{proof}

We do not pursue this, or any other computational/approximation method any further in this paper. We merely remark that, upon strengthening the assumptions on $\P,\P'$ as done in \cite{BaBaBeWi20}, one may be able to obtain deviation and concentration bound in the setting of Lemma~\ref{lem:empiricalconvergesinfinity}.

\subsection{Remarks on \texorpdfstring{$\AW_{\infty,q}$}{}}

There is much to be understood about $\AW_{\infty,q}(\P,\P')$. For instance, the identification or characterization of optimal couplings seems challenging. The following example shows that the so-called Knothe-Rosenblatt coupling\footnote{Let $U_0,\dots,U_T$ be independent uniform random variables on $[0,1]$. The Knothe-Rosenblatt rearrangement between $\P$ and $\P'$ is the law of $(W,Z)$ where $W=(W_0,\dots,W_T)$ and $Z=(Z_0,\dots,Z_T)$ are given by
\[W_0:=F^{-1}_{P_0}(U_0)\,\,\ , \,\,\, Z_0:=F^{
-1}_{P'_0}(U_0) ; \]
and inductively
\[W_{t+1}:=F^{-1}_{P^{W_0,\dots,W_t}}(U_{t+1})\,\,\ , \,\,\, Z_{t+1}:=F^{-1}_{P'^{Z_0,\dots,Z_t}}(U_{t+1}) ,\]
whereby $F^{-1}_{\rho^{x_0,\dots,x_t}}$ denotes the quantile function of the law under measure $\rho$ of the $(t+1)$-th coordinate given that the first $t+1$ coordinates are equal to  $x_0,\dots,x_t$, and $\rho_0$ denotes the time zero marginal (zeroth coordinate, so to speak) of $\rho$.
} need not be optimal for  $\AW_{\infty,q}(\P,\P')$ even when $\P,\P'$ are Markov laws with increasing kernels\footnote{	We say that $ \mu\in \mathcal P(\R^{T+1})$ is \emph{Markov with increasing kernels}, if 
$\mu$ is the law of a discrete-time, real-valued, Markov process in $T+1$ steps, and for each $t\in{0,\dots,T-1}$: if $x_t\leq \bar x_t$ for $x_t,\bar x_t\in\text{supp}(\mu_t)$, then $F_{\mu^{x_t}}\geq F_{\mu^{\bar x_t}} $ pointwise.  	} This is drastically different from the usual setting of adapted Wasserstein distances (cf.\ \cite{BaBeLiZa17}) where one expects optimality of Knothe-Rosenblatt in the presence of Markov marginals with increasing kernels.

\begin{example}
For $T=1$, we consider the marginal process laws
$$\P:=\frac{1}{2}[\delta_{(0,3)}+\delta_{(1,5)}] \,\,\,\, \text{and} \,\,\,\,  \P':=\frac{1}{2}[\delta_{(1,1)}+\delta_{(3,2)} ].$$
It is direct to check that these are Markov with increasing kernels. The Knothe-Rosenblatt coupling is given by
$$\pi^{KR}=\frac{1}{2}[\delta_{((0,3),(1,1))}+ \delta_{((1,5),(3,2))}].$$
We establish now that $\pi^{KR}$ is not optimal for $\AW_{\infty,q}(\P,\P')$. Indeed, consider
$$\pi:= \frac{1}{2}[\delta_{((0,3),(3,2))}+ \delta_{((1,5),(1,1))}],$$
which is a feasible coupling. Then
$$\esssup_{\pi^{KR}}\{|x_1-y_1|+|x_2-y_2|\}=5>4=\esssup_{\pi}\{|x_1-y_1|+|x_2-y_2|\}.$$ \hfill$\Diamond$
\end{example}

Another aspect where the theory for $\AW_{\infty,q}(\P,\P')$ could be expanded concerns duality. In the case of $\mathcal W_{\infty,q}(\P,\P') $ this was done in \cite{BaBoJe17} via quasiconvex duality techniques.

\section{Postponed proofs}\label{sect:proofs}

\subsection{The canonical decomposition}\label{sect:proofs.candec}

We provide the pending proofs of Lemma~\ref{lem:H_bar} and Lemma~\ref{lem:decomp}.

\begin{proof}[Proof of Lemma~\ref{lem:H_bar}]
Fix any $t\in\{1,\ldots,T\}$.

\emph{Step 1}:
We construct $\bar H_t \in E_{\epsilon,t}$ with $|\bar H_t|_p \in \{0,1\}$ that has maximal support, i.e.,
\begin{align}
\label{eq:lem_H_bar.maximality_p>1}
\P[ \bar H_t \neq 0 ] =& \sup_{H_t \in E_{\epsilon,t}} \P[H_t \neq 0] &&\mbox{if }p > 1, \\
\label{eq:lem_H_bar.maximality_p=1}
\P[ \bar H_{t,i} \neq 0 ] =& \sup_{H_t \in E_{\epsilon,t}
} \P[H_{t,i} \neq 0] &&\mbox{if }p =1, \mbox{ for }i = 1,\ldots,d.
\end{align}
To see existence of such a strategy $\bar H_t$, fix $i \in \{1,\ldots,d\}$ when $p = 1$,
pick a maximizing sequence $(H^k_t)_{k \in \N}\subseteq E_{\epsilon,t}$ of the right-hand side of \eqref{eq:lem_H_bar.maximality_p>1} resp.\ \eqref{eq:lem_H_bar.maximality_p=1}. Define the $\F_{t-1}$-measurable set $A^k := \{ H^k_t \neq 0 \mbox{ and } H^j_t = 0 \mbox{ for }j=1,\ldots,k-1 \}$ for $k \in \N$, in case $p>1$, and $A^k_i := \{ H^k_{t,i} \neq 0 \mbox{ and } H^j_{t,i} = 0 \mbox{ for }j=1,\ldots,k-1 \}$ in case $p=1$.
Then, $\mathbbm 1_{A^k} H^k_t$ resp.\ $\mathbbm 1_{A^k_i} H^k_t$ is contained in $E_{\epsilon,t}$
and by Lemma~\ref{lem:E_conv_cone} also the sum
\[
\bar H_t := \sum_{k \in \N} \mathbbm 1_{A^k} \frac{H^k_t}{|H^k_t|_p} \in E_{\epsilon,t},
\]
 in case $p>1$, while for $p=1$
 \[
\bar H_t^i := \sum_{k \in \N} \mathbbm 1_{A^k_i} \frac{H^k_t}{|H^k_t|_p} \in E_{\epsilon,t}.
\]
Since $\{\bar H_t \neq 0\} \supset \{H^k_t \neq 0\}$ resp.\ 
$\{\bar H_{t,i}^i \neq 0 \} \supset \{H^k_{t,i} \neq 0 \}$ for all $k \in \N$, 
we find $\P[\bar H_t \neq 0] \ge \lim_{k \to \infty} \P[H^k_t \neq 0]$ and resp.\  $\P[\bar H_t^i \neq 0] \ge \lim_{k \to \infty} \P[H^k_{t,i} \neq 0]$. Since $(H^k_t)_{k \in \N}$ 
is a maximizing sequence of \eqref{eq:lem_H_bar.maximality_p>1} resp.\ \eqref{eq:lem_H_bar.maximality_p=1}, then
$\bar H_t$ resp.\ $\bar H_t^i$ attains the supremum. Since the sets $(A^k)_k$ resp.\ $(A_i^k)_k$ are disjoint, we have $|\bar H_t|_p, |\bar H_t^i|_p\in\{0,1\}$ as desired.

If $p = 1$, we have shown that for every $i \in \{1,\ldots,d\}$ there is an $\bar H_t^i$ that maximizes \eqref{eq:lem_H_bar.maximality_p=1} for that particular index.
Note that, when $H^1_t,H^2_t \in E_{\epsilon,t}$, by \eqref{eq:conv_cone} and the absence of strict $\epsilon$-arbitrage, we have (modulo null sets) for all $j \in \{1,\ldots,d\}$
\[
H^1_{t,j} \neq 0, H^2_{t,j} \neq 0 \implies \sgn(H^1_{t,j}) = \sgn(H^2_{t,j}),
\]
since the absolute value is only linear on the positive or the negative half-line.
Hence, 
$\bar H_t :=\frac{1}{d} \sum_{j = 1}^d \bar H_t^j$ maximizes \eqref{eq:lem_H_bar.maximality_p=1} for all $i \in \{1,\ldots,d\}$. Furthermore, with this choice, we have for all $H_t \in E_{\epsilon,t}$:
\[
H_t \cdot \bar H_t^\ast = \sum_{i = 1}^d H_{t,i} \sgn(\bar H_{t,i}) = H_t \cdot H_t^\ast = |H_t|_1|H_t^\ast|_\infty = |H_t|_1|\bar H_t^\ast|_\infty,
\]
which means that $\bar H_t^\ast \in F_{\epsilon,t}$ when $p=1$.

\emph{Step 2}:
For the remainder of the proof, assume that $p > 1$.
Let $H_t \in E_{\epsilon,t}$.
We claim that there exists $a_t \in L(\Omega,\F_{t-1},\P;\R_+)$ with $a_t \bar H_t = H_t$, where $\bar H_t$ was obtained in the previous step.
By \eqref{eq:lem_H_bar.maximality_p>1}, $\{\bar H_t = 0, H_t \neq 0\}$ must be a null set.
By Lemma \ref{lem:E_conv_cone} we have
\[
\epsilon (|H_t|_p + |\bar H_t|_p ) = (H_t + \bar H_t) \cdot \Delta S_t
= \epsilon |H_t + \bar H_t|_p.
\]
By strict convexity of the $p$-norm $| \cdot |_p$, $|H_t + \bar H_t|_p = |H_t|_p + |\bar H_t|_p$ can only hold true when almost surely $H_t$ is a non-negative multiple of $\bar H_t$, thus,
\[
E_{\epsilon,t} = \left\{a_t \bar H_t \colon a_t \in L^0(\Omega,\F_{t-1},\P;\R_+) \right\},
\]
which readily implies that $\bar H_t^\ast \in F_{\epsilon,t}$.
{ Consequently, as $\bar H_t \in \{ 0,1 \}$, we find {for $X \in \mathcal H_t$} that
\[
    X \in F_{\eps,t} \iff X \cdot \bar H_t = |X|_q|\bar H_t|_p \iff X = \mathbbm 1_{ \{ \bar H_t = 0 \} } X + |X|_q \mathbbm 1_{ \{ \bar H_t \neq 0 \} } \bar H_t^\ast.
\]
Therefore, we derive {for $H_t \in \mathcal H_t$ that}
\[
H_t \in F_{\epsilon,t}^\perp \iff H_t \cdot \bar H_t^\ast = 0 \mbox{ and } \mathbbm 1_{\{ \bar H_t = 0 \}}|H_t|_p = 0,
\]
which provides the desired representation of $F_{\epsilon,t}^\perp$.}
\end{proof}

\begin{proof}[Proof of Lemma \ref{lem:decomp}]
We want to decompose $H_t \in \bar{\mathcal H}_t$ in a univocal way in a triple $(a_t, G_t, \tilde G_t)$ where $a_t \in L^0(\Omega,\F_{t-1},\P;\R)$, $G_t \in F_{\epsilon,t}^\perp \cap E_{0,t}^\perp$ and $\tilde G_t \in  F_{\epsilon,t}^\perp \cap E_{0,t}$.

To find $a_t$, we solve the equation
\[
(H_t - a_t \bar H_t) \cdot \bar H_t^\ast = 0,
\] 
which admits a solution (unique on $\{ \bar H_t \neq 0 \}$) given by $a_t := H_t \cdot \bar H_t^\ast$. 
We deduce by Lemma~\ref{lem:H_bar} that $H_t - a_t \bar H_t \in F_{\epsilon,t}^\perp$.
Next, recall that by \cite[Lemma 6.2.1]{DeSc06} there exists an $\F_{t-1}$-measurable mapping $P_{0,t}$ taking values in the orthogonal projections in $\R^d$, so that {for $H_t \in \mathcal H_t$
\[
    H_t \in F_{\epsilon,t}^\perp \cap E_{0,t} \iff P_{0,t} H_t = H_t.
\]}
Now we define
\[
G_t := (\id - P_{0,t}) (H_t - a_t \bar H_t) \quad \mbox{and} \quad \tilde G_t := P_{0,t}(H_t - a_t \bar H_t),
\]
and note that $\tilde G_t \in F_{\epsilon,t}^\perp \cap E_{0,t}$, thus $G_t \in F_{\epsilon,t}^\perp \cap E_{0,t}^\perp$.

To see uniqueness (in an almost sure sense) let $(a_t',G_t', \tilde G_t')$ be an admissible decomposition of $H_t \in \bar{\mathcal H}_t$.
We have that $H_t - a_t' \bar H_t \in F_{\epsilon,t}^\perp$ which can only be the case when $a_t' = a_t$.
Then $G_t = G_t'$ and $\tilde G_t = \tilde G_t'$ follow from properties of the orthogonal projection-valued mapping $P_{0,t}$.

\emph{Case $p = 2$}: In this case, the map $P_{\epsilon,t} H_t := H_t - a_t \bar H_t$ coincides with the $\F_{t-1}$-measurable mapping that is provided by \cite[Lemma 6.2.1]{DeSc06} and takes values in the orthogonal projections such that $H_t \in E_{\epsilon,t} = F_{\epsilon,t} \cap \bar{\mathcal H}_t$ if and only if $P_{\epsilon,t} H_t = H_t$.
As elements in $F_{\epsilon,t} \cap \bar{\mathcal H}_t$ are almost surely orthogonal to elements in $F_{\epsilon,t}^\perp$, we find $P_{\epsilon,t} \circ P_{0,t} = P_{0,t} \circ P_{\epsilon,t}$ and, in particular, $(a_t\bar H_t, G_t, \tilde G_t)$ is an orthogonal decomposition of $H_t$.
\end{proof}

\subsection{\texorpdfstring{$\NA_\epsilon$}{}: the one-step case}\label{sec:Nae_one_step}

We begin with the postponed argument for Remark~\ref{rem:localization}:
{
\begin{proof}[Argument for Remark \ref{rem:localization}] Assume that $H \in \mathcal H$ is a strict $\eps$-arbitrage and, for $t \in \{1,\ldots, T \}$, denote by $V_t := (H \bullet S)_{t} - \epsilon \sum_{s = 1}^{t} |H_s|_p$ the value function with the convention that $V_0 = 0$.
Similar to the classical framework, we can use the linearity in time of the integral $H \bullet S$ and of the norm $\| \cdot \|_p$ and consider the time
\[
{t} := \min \left\{ s \in\{1,\ldots,T\} \colon V_s \ge 0\, \mbox{ and }\, \P[V_s > 0] > 0 \right\},
\]
which is well-defined as we assumed that $H$ is a strict $\eps$-arbitrage opportunity.
We distinguish two cases: either $\P[ V_{{t} - 1} < 0] > 0$ or $\P[V_{{t} - 1} = 0 ]=1$.
In the first case, we have $A := \{ V_{{t} - 1} < 0 \} \in \F_{t - 1}$ and $\tilde H_{t} := \mathbbm 1_A H_{t} / |H_{t}|_p \in \mathcal H_{{t}}$ is also a strict $\eps$-arbitrage, since
\[
    \tilde H_{t} \cdot \Delta S_{t} {- \eps |\tilde H_t
    |_p }= \frac{\mathbbm 1_A}{|H_{t}|_p} \left( V_{t} - V_{{t} - 1} \right) \ge 0,
\]
where the inequality is strict on $A$ and $\P[A] > 0$ by assumption.
On the other hand, in the second case we have that $V_{{t} - 1} = 0$ which means that $\tilde H_t := H_{t} / |H_{t}|_p \in \mathcal H_{t}$ already is a strict $\eps$-arbitrage since then
\[
    \tilde H_t \cdot \Delta S_t { - \eps |\tilde H_t
    |_p } = \frac{1}{|H_{t}|_p} V_t.
\]
Analogous reasoning establishes the localization of $\NA_\epsilon'$.   
\end{proof}
}
Remark \ref{rem:localization} motivates the fact that we should first study the situation of a single period market. Hence, we now fix $t=1,\ldots,T$ and argue for investments in a single period, $[t-1,t)$.\\

We will often employ the concept of measurable subsequences:

\begin{definition}
An $\N$-valued, $\F$-measurable function is called a random time.
A strictly increasing sequence $(\tau_k)_{k \in \N}$ of random times is called an $\F$-measurably parametrized subsequence or simply an $\F$-measurable subsequence.
\end{definition}

We recall our convention: we say that a sequence $(f_n)_{n \in \N}\subset L^0(\Omega,\F, \P; \R)$ is bounded (resp.\ bounded from below) if there is $c\in L^0(\Omega,\F, \P; \R_+)$ such that $|f_n|\leq c$ (resp.\ $f_n\geq -c $) for all $n$ a.s. 
{ Moreover, we say that a sequence $(f_n)_{n \in \N}\subset L^0(\Omega,\F, \P; \R^d)$ is bounded if $(|f_n|)_{n \in \N}\subset L^0(\Omega,\F, \P; \R)$ is bounded.
}    
We recall that, by \cite[Proposition 6.3.3]{DeSc06}, for any sequence $(f_n)_{n \in \N}$ of functions in $L^0(\Omega,\F, \P; \R^d)$ which is bounded, we may find an $\F$-measurable subsequence $(\tau_n)_{n \in \N}$ such that $(f_{\tau_n})_{n \in \N}$ converges for all $\omega \in \Omega$.

\begin{lemma}
\label{lem:1_step_H_bar_complement}
Assume absence of strict $\epsilon$-arbitrage and let $(H_t^k)_{k \in \N}$ be such that $(H_t^k \cdot \Delta S_t - \epsilon |H_t^k|_p)_{k \in \N}$ is almost surely bounded from below. Then
\[
\P\Bigg[ 
\Big\{ \limsup_{k \to \infty} |H_t^k|_p = \infty \Big\} 
\setminus
\{ \bar H_t \neq 0 \}
\Bigg] = 0.
\]
In particular, $(H^k_{t})_{k \in \N}$ has to be bounded on $\{\bar H_t = 0 \}$.
If $\epsilon > \epsilon(\P)$, then $(H^k_{t})_{k \in \N}$ has to be bounded.
\end{lemma}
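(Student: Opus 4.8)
The plan is to argue by contradiction: assume that the set $B := \{ \limsup_{k} |H_t^k|_p = \infty \} \setminus \{ \bar H_t \neq 0 \}$ has positive probability, and produce from $(H_t^k)$ a new strategy (or sequence of strategies) in $E_{\epsilon,t}$ whose support strictly exceeds that of $\bar H_t$, contradicting the maximality of $\bar H_t$ established in Lemma~\ref{lem:H_bar}. First I would pass to the renormalized sequence. On the event $B$, since $\limsup_k |H_t^k|_p = \infty$, we may extract (using the measurable-subsequence device of \cite[Proposition 6.3.3]{DeSc06}) an $\F_{t-1}$-measurable subsequence along which $|H_t^k|_p \to \infty$ pointwise on $B$; restricting attention to $B$ via multiplication by $\mathbbm 1_B$, set $\tilde H_t^k := \mathbbm 1_B\, H_t^k / |H_t^k|_p$, which satisfies $|\tilde H_t^k|_p = \mathbbm 1_B$. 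By boundedness of the unit sphere we may pass to a further $\F_{t-1}$-measurable subsequence so that $\tilde H_t^k \to \tilde H_t$ pointwise, with $|\tilde H_t|_p = \mathbbm 1_B$ and $\{\tilde H_t \neq 0\} = B$ (modulo null sets).

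Next I would identify the limit as an element of $E_{\epsilon,t}$. Dividing the hypothesis ``$H_t^k \cdot \Delta S_t - \epsilon |H_t^k|_p \ge -c$'' by $|H_t^k|_p$ on $B$ and letting $k \to \infty$ (the term $-c/|H_t^k|_p \to 0$ since $|H_t^k|_p \to \infty$ there), we obtain
\[
\tilde H_t \cdot \Delta S_t - \epsilon |\tilde H_t|_p \ge 0 \quad \text{on } B.
\]
Since $|\tilde H_t|_p = \mathbbm 1_B$, outside $B$ this quantity is $0$, so the inequality $\tilde H_t \cdot \Delta S_t - \epsilon |\tilde H_t|_p \ge 0$ holds a.s.\ on all of $\Omega$. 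By the absence of strict $\epsilon$-arbitrage (localized to time $t$, as in Remark~\ref{rem:localization}), this forces $\tilde H_t \cdot \Delta S_t - \epsilon |\tilde H_t|_p = 0$ a.s., i.e.\ $\tilde H_t \in E_{\epsilon,t}$. But $\{\tilde H_t \neq 0\} = B$ and, by construction, $B \cap \{\bar H_t \neq 0\} = \emptyset$ while $\P[B] > 0$; hence $\bar H_t + \tilde H_t \in E_{\epsilon,t}$ (using that $E_{\epsilon,t}$ is a convex cone, Lemma~\ref{lem:E_conv_cone}) has support $\{\bar H_t \neq 0\} \cup B$ strictly larger than $\{\bar H_t \neq 0\}$, contradicting the maximality of $\bar H_t$ in \eqref{eq:lem_H_bar.maximality_p>1} (for $p>1$), resp.\ the componentwise maximality \eqref{eq:lem_H_bar.maximality_p=1} after replacing $B$ by $\{\tilde H_{t,i} \neq 0\}$ for an appropriate coordinate $i$ (for $p = 1$). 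This proves the first assertion, and ``boundedness on $\{\bar H_t = 0\}$'' is its immediate reformulation.

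For the final sentence, suppose $\epsilon > \epsilon(\P)$. By Lemma~\ref{lem:critical_value} we have $\bar H_t \equiv 0$, so $\{\bar H_t = 0\} = \Omega$ and the already-proven statement gives that $(H^k_t)_{k}$ is bounded on all of $\Omega$; since this holds for each $t \in \{1,\ldots,T\}$, the full sequence $(H^k)_k$ is bounded. The main obstacle I anticipate is the bookkeeping around measurable subsequences and the $p=1$ versus $p>1$ dichotomy in invoking Lemma~\ref{lem:H_bar}: one must be careful that the extracted limit $\tilde H_t$ genuinely has support equal to $B$ (no further collapse of the support under the limit), which is why the renormalization to the unit sphere together with a pointwise-convergent measurable subsequence is essential, and in the $p=1$ case one works coordinatewise to match the form of the maximality property.
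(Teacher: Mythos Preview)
Your proof is correct and follows essentially the same approach as the paper: normalize on the event where the norms blow up, extract an $\F_{t-1}$-measurable convergent subsequence on the unit sphere, and use absence of strict $\epsilon$-arbitrage to place the limit in $E_{\epsilon,t}$, then invoke the maximality of $\bar H_t$ from Lemma~\ref{lem:H_bar}. The paper differs only cosmetically: it normalizes on the full set $A=\{\limsup_k |H_t^k|_p=\infty\}$ with denominator $|H_t^k|_p\vee 1$ (avoiding any division-by-zero bookkeeping) and concludes directly via the structural description $E_{\epsilon,t}=\{a_t\bar H_t\}$ rather than framing it as a contradiction, but the substance is the same.
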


\begin{proof} 
Write $A := \{ \limsup_{k \to \infty} |H_t^k|_p = \infty \}$ and define the normalized sequence
\begin{equation}\label{eq:def_normalized_sequence}
\tilde H_t^k := \mathbbm 1_A \frac{H_t^k}{|H_t^k|_p \vee 1}.
\end{equation}
Since $(\tilde H_t^k)_{k \in \N}$ is bounded, we can assume, by first passing to an $\F_{t-1}$-measurable subsequence, that $A = \{ \lim_{k \to \infty} |H_t^k|_p = \infty \}$ and that $(\tilde H_t^k)_{k \in \N}$ converges to some $\tilde H_t$.
Observe then that $\{ |\tilde H_t|_p = 1 \} = A$ and $\{ |\tilde H_t|_p = 0 \} = A^\textrm{c}$.
By the lower boundedness assumption, we have
\[
\tilde H_t \cdot \Delta S_t - \epsilon |\tilde H_t|_p = \lim_{k \to \infty} \tilde H_t^k \cdot \Delta S_t - \epsilon |\tilde H_t^k|_p \ge 0,
\]
which yields $\tilde H_t \in E_{\epsilon,t}$ by the absence of strict $\eps$-arbitrage.
Then the assertion follows by the structure of  $E_{\eps,t}$  exposed in Lemma~\ref{lem:H_bar}.

If $\eps > \eps(\P)$ then $\bar H_t = 0$ by Lemma \ref{lem:critical_value}, from which we obtain as a consequence of the first part that $(H^k_t)_{k \in \N}$ has to be bounded.
\end{proof}

\begin{lemma}
\label{lem:1_step_eps_arbitrage}
Assume $\NA_\epsilon'$ and let $(G^k_t)_{k \in \N}$ be a sequence in $F_{\epsilon,t}^\perp \cap E_{0,t}^\perp$. 
If $(G^k_t \cdot \Delta S_t)_{k \in \N}$ is a.s.\ bounded from below, then $(G^k_t)_{k \in \N}$ is a.s.\ bounded.
In particular, the following statements are true:
\begin{enumerate}[label = (\alph*)]
\item $\{ H_t \cdot \Delta S_t \colon H_t \in F_{\epsilon,t}^\perp \}$ is closed w.r.t.\ convergence in probability;
\item for any sequence $(Y^k)_{k \in \N}$ in $\{ H_t \cdot \Delta S_t \colon H_t \in F_{\epsilon,t}^\perp \}$ that is lower bounded, there is $Y \in \{ H_t \cdot \Delta S_t \colon H_t \in F_{\epsilon,t}^\perp \}$ with
\[
\liminf_{k \to \infty} Y^k \le Y.
\]
\end{enumerate}
\end{lemma}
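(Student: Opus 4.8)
The plan is to establish the boundedness assertion first, since statements (a) and (b) then follow by a routine extraction of measurably parametrized subsequences. Throughout I would use that, by \cite[Proposition 6.3.3]{DeSc06}, an a.s.\ bounded sequence in $L^0(\Omega,\F_{t-1},\P;\R^d)$ admits an $\F_{t-1}$-measurable subsequence converging pointwise, and that $F_{\epsilon,t}^\perp$ (both for $p>1$ and $p=1$) and $E_{0,t}^\perp$ are random linear subspaces of $\R^d$: their defining conditions are of the form ``a linear equation plus a support constraint'', hence they are stable under multiplication by $\F_{t-1}$-measurable scalars and closed under pointwise a.s.\ limits. I would also record at the outset that $\{H_t\cdot\Delta S_t : H_t\in F_{\epsilon,t}^\perp\}=\{G_t\cdot\Delta S_t : G_t\in F_{\epsilon,t}^\perp\cap E_{0,t}^\perp\}$: any $H_t\in F_{\epsilon,t}^\perp$ decomposes, via the projection $P_{0,t}$ of \cite[Lemma 6.2.1]{DeSc06} (equivalently by Lemma~\ref{lem:decomp} applied with $a_t=0$, which is legitimate since $\NA_\epsilon'$ entails the absence of strict $\epsilon$-arbitrage), as $G_t+\tilde G_t$ with $G_t\in F_{\epsilon,t}^\perp\cap E_{0,t}^\perp$ and $\tilde G_t\in F_{\epsilon,t}^\perp\cap E_{0,t}$, so that $H_t\cdot\Delta S_t=G_t\cdot\Delta S_t$.

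For the boundedness claim I would argue by contradiction, following the pattern of the proof of Lemma~\ref{lem:1_step_H_bar_complement}. If $(G^k_t)_{k\in\N}$ is not a.s.\ bounded, then $A:=\{\limsup_{k}|G^k_t|_p=\infty\}$ is an $\F_{t-1}$-measurable set with $\P[A]>0$. As in \eqref{eq:def_normalized_sequence}, put $\tilde G^k_t:=\mathbbm 1_A\,G^k_t/(|G^k_t|_p\vee 1)$, which is bounded and, being an $\F_{t-1}$-measurable scalar multiple of $G^k_t$, still lies in $F_{\epsilon,t}^\perp\cap E_{0,t}^\perp$. Passing to an $\F_{t-1}$-measurable subsequence I may assume $A=\{\lim_k|G^k_t|_p=\infty\}$ and $\tilde G^k_t\to\tilde G_t$ pointwise, so $\{|\tilde G_t|_p=1\}=A$ and $\tilde G_t=0$ off $A$. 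Using that $(G^k_t\cdot\Delta S_t)_k$ is bounded from below, say by $-c$ with $c\ge 0$, on $A$ we get $\tilde G^k_t\cdot\Delta S_t\ge -c/(|G^k_t|_p\vee 1)\to 0$, while off $A$ the term vanishes; hence $\tilde G_t\cdot\Delta S_t\ge 0$. Since $\tilde G_t\in F_{\epsilon,t}^\perp$ (closedness of this set under a.s.\ limits) and $|\tilde G_t|_p\le 1$, the strategy equal to $\tilde G_t$ at time $t$ and $0$ otherwise belongs to $F_\epsilon^\perp$, so condition \ref{it:def_NA'_NA} in Definition~\ref{def:NA'} (equivalently its one-step form in Remark~\ref{rem:localization}) forces $\tilde G_t\cdot\Delta S_t=0$, i.e.\ $\tilde G_t\in E_{0,t}$. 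Combined with $\tilde G_t\in E_{0,t}^\perp$ this gives $\tilde G_t=0$ a.s., contradicting $\P[|\tilde G_t|_p=1]=\P[A]>0$.

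For the consequences I would then proceed as follows. For (b): given a lower bounded sequence $Y^k=G^k_t\cdot\Delta S_t$ with $G^k_t\in F_{\epsilon,t}^\perp\cap E_{0,t}^\perp$, the boundedness claim and \cite[Proposition 6.3.3]{DeSc06} yield an $\F_{t-1}$-measurable subsequence $(\tau_k)_k$ along which $G^{\tau_k}_t\to G_t$ pointwise, necessarily $G_t\in F_{\epsilon,t}^\perp\cap E_{0,t}^\perp$, so $Y:=G_t\cdot\Delta S_t=\lim_k Y^{\tau_k}\ge\liminf_k Y^k$ lies in the set. For (a): if $Y^k\to Y$ in probability with each $Y^k$ in the set, I would pass to an a.s.\ convergent subsequence, which is in particular a.s.\ bounded from below, and run the same argument to obtain $Y=G_t\cdot\Delta S_t$ with $G_t\in F_{\epsilon,t}^\perp\cap E_{0,t}^\perp$, hence $Y$ in the set.

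The main obstacle is the boundedness step: one must normalize on the random blow-up set $A$, verify that the normalized strategies remain in $F_{\epsilon,t}^\perp\cap E_{0,t}^\perp$ and converge there along a measurable subsequence, and then exploit that $\NA_\epsilon'$ pins the limit into $E_{0,t}$ while membership in $E_{0,t}^\perp$ is preserved, so that the two constraints collide and force the limit to vanish. Everything afterwards is bookkeeping with $\F_{t-1}$-measurable subsequences.
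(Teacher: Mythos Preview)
Your argument is correct. The boundedness step via normalization and contradiction works exactly as you describe: the key observation is that both $F_{\epsilon,t}^\perp$ and the space playing the role of $E_{0,t}^\perp$ in the canonical decomposition are $\F_{t-1}$-measurable random linear subspaces (images of $\F_{t-1}$-measurable projections), hence stable under $\F_{t-1}$-measurable scalars and closed under pointwise limits; so the normalized limit $\tilde G_t$ lands in both, and $\NA_\epsilon'$ forces it into $E_{0,t}$ as well, whence $\tilde G_t=0$. Parts (a) and (b) then follow from the preliminary reduction $\{H_t\cdot\Delta S_t: H_t\in F_{\epsilon,t}^\perp\}=\{G_t\cdot\Delta S_t: G_t\in F_{\epsilon,t}^\perp\cap E_{0,t}^\perp\}$ together with the usual extraction of $\F_{t-1}$-measurable convergent subsequences.

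The paper takes a different route. It introduces an auxiliary shifted process $\bar S$ with increments $\Delta\bar S_t:=\mathbbm 1_{\{\bar H_t\neq 0\}}(\Delta S_t-\epsilon\bar H_t^\ast)$ (coordinatewise when $p=1$), observes that for $H_t\in\bar{\mathcal H}_t$ with canonical decomposition $(a_t,G_t,\tilde G_t)$ one has $G_t\cdot\Delta S_t=G_t\cdot\Delta\bar S_t=H_t\cdot\Delta\bar S_t$, and then checks that $\NA_\epsilon'$ implies classical no-arbitrage for $\bar S$. The lemma is then obtained as a black-box consequence of \cite[Proposition~6.4.1]{DeSc06}. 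Your approach is more self-contained (you are essentially reproducing the normalization argument that underlies that proposition in this specific setting) and avoids the auxiliary process; the paper's reduction is more conceptual, identifying the problem as a disguised instance of classical no-arbitrage, and it pays off later when the shifted process $\bar S$ (and variants) are reused in the proofs of Theorem~\ref{thm:closure} and Theorem~\ref{thm:price_range}.
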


\begin{proof}
We reason by relating the current setting to the classical no-arbitrage framework.
To this end, we consider a shifted process $\bar S$ whose increments are given by
\begin{align*}
\Delta \bar S_t :=& \mathbbm 1_{\{\bar H_t \neq 0\}} \left(\Delta S_t - \epsilon \bar H_t^\ast\right), &&\mbox{if }p > 1, \\
\Delta \bar S_{t,i} :=& \mathbbm 1_{\{ \bar H_{t,i} \neq 0\}} \left(\Delta S_{t,i} - \epsilon \bar H_{t,i}^\ast \right), &&\mbox{if }p=1,\, \text{for $i = 1,\ldots, d$}.
\end{align*}
Any $H_t \in \bar{\mathcal H}_t$ with canonical decomposition $(a_t,G_t,\tilde G_t)$, see Lemma~\ref{lem:decomp}, satisfies
\begin{equation}
\label{eq:1_step_eps_arbitrage_bar_Delta}
G_t \cdot \Delta S_t = G_t \cdot \Delta \bar S_t = H_t \cdot \Delta \bar S_t.
\end{equation}
Therefore, under $\NA_\epsilon'$ we find as a consequence of \eqref{eq:1_step_eps_arbitrage_bar_Delta} that $(\bar S_{t-1},\bar S_t)$ admits no arbitrage in the classical sense, i.e.\ for $H_t \in \mathcal H_t$ we have that
\begin{equation}
\label{eq:1_step_eps_arbitrage_NA}
H_t \cdot \Delta \bar S_t \ge 0 \implies H_t \cdot \Delta \bar S_t = 0.
\end{equation}
Indeed, when $H_t \in \mathcal H_t$ then the strategy $\tilde H_t$ given by
\begin{align*}
    \tilde H_t :=& \mathbbm 1_{\{ \bar H_t \neq 0 \}} H_t &&p > 1, \\
    \tilde H_{t,i}:= & \mathbbm 1_{\{ \bar H_{t,i} \neq 0 \}} H_{t,i} &&p = 1, \mbox{ for }i = 1,\ldots,d,
\end{align*}
is an element of $\bar{\mathcal H}_t$ with $\Delta \bar S_t \cdot H_t = \tilde H_t \cdot \Delta S_t$. Thus by \eqref{eq:1_step_eps_arbitrage_bar_Delta}, if $H_t \cdot \Delta \bar S_t$ is non-negative then according to $\NA_\epsilon'$ it has to vanish.
Hence, $\Delta \bar S_t$  admits no arbitrage in the sense of \eqref{eq:1_step_eps_arbitrage_NA}.

We conclude the proof by invoking \cite[Proposition 6.4.1]{DeSc06} and the fact that bounded $\mathcal F_{t-1}$-measurable sequences always allow to pass to convergent, $\mathcal F_{t-1}$-measurable subsequences.
\end{proof}

\begin{lemma}
\label{lem:p=1.H_bounded}
Assume the absence of strict $\epsilon$-arbitrage and $p = 1$.
Let $(H^k_t)_{k \in \N}$ be a sequence in $\mathcal H_t$ such that $(H^k_t \cdot \Delta S_t - \epsilon |H^k_t|_1)_{k \in \N}$ is a.s.\ bounded from below.
Then, for $i = 1,\ldots, d$, $(H^k_{t,i})_{k \in \N}$ is a.s.\ bounded on $\{ \bar H_{t,i} = 0 \}$.
\end{lemma}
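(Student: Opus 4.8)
The plan is to first strengthen the standing hypothesis. Since $p=1$, Lemma~\ref{lem:p=1.strict_G} shows that the mere absence of strict $\epsilon$-arbitrage already gives $\NA_\epsilon'$: if $G\in F_\epsilon^\perp$ has $(G\bullet S)_T\ge0$, write $(G\bullet S)_T=(H\bullet S)_T-\epsilon\|H\|_1$ and conclude it must vanish, lest $H$ be a strict $\epsilon$-arbitrage. Hence Lemma~\ref{lem:1_step_eps_arbitrage} is available, so the shifted increments $\Delta\bar S_t:=\mathbbm 1_{\{\bar H_{t,\cdot}\neq 0\}}(\Delta S_t-\epsilon\bar H_t^\ast)$ of Section~\ref{sec:Nae_one_step} admit no arbitrage in the classical sense. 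I would also record the structural facts (an elaboration of Step~1 of the proof of Lemma~\ref{lem:H_bar}): under the absence of strict $\epsilon$-arbitrage, every $H_t\in E_{\epsilon,t}$ satisfies $\{H_{t,j}\neq 0\}\subseteq\{\bar H_{t,j}\neq0\}$ and $\sgn(H_{t,j})=\sgn(\bar H_{t,j})$ on $\{H_{t,j}\neq0\}$ for each $j$ — the first because otherwise $\bar H_t+\mathbbm 1_{\{H_{t,j}\neq0\}\setminus\{\bar H_{t,j}\neq0\}}H_t$ would lie in $E_{\epsilon,t}$ by Lemma~\ref{lem:E_conv_cone} and have strictly larger $j$-th support than $\bar H_t$.

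Next I would localize to the $\F_{t-1}$-measurable sets $\Omega_S:=\{\omega\colon\{j\colon\bar H_{t,j}\neq0\}=S\}$: on $\Omega_\emptyset=\{\bar H_t=0\}$ the claim is exactly Lemma~\ref{lem:1_step_H_bar_complement}, and on $\Omega_S$ with $S=\{1,\dots,d\}$ there is nothing to prove, so fix $\emptyset\neq S\subsetneq\{1,\dots,d\}$. Split $H^k_t=P^k_t+R^k_t$ into its active part $P^k_t:=\sum_{l\in S}H^k_{t,l}e_l$ and inactive part $R^k_t:=\sum_{l\notin S}H^k_{t,l}e_l$; these have coordinatewise disjoint supports, and a direct computation gives, on $\Omega_S$,
\[
H^k_t\cdot\Delta S_t-\epsilon\|H^k_t\|_1
= P^k_t\cdot\Delta\bar S_t + \big(R^k_t\cdot\Delta S_t-\epsilon|R^k_t|_1\big) - \epsilon\!\!\sum_{l\in S}|H^k_{t,l}|\big(1-\sgn(H^k_{t,l})\sgn(\bar H_{t,l})\big),
\]
where the last sum is $\ge0$. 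Since the left-hand side is bounded from below, this reduces the problem to bounding from below the inactive contribution $Z^k:=R^k_t\cdot\Delta S_t-\epsilon|R^k_t|_1$: indeed the inactive sub-market (coordinates outside $S$) has no strict $\epsilon$-arbitrage, and its own canonical strategy vanishes on $\Omega_S$ (any element of the inactive $E_{\epsilon,\cdot}$, embedded with zeros on $S$, lies in $E_{\epsilon,t}$, so by support-maximality of $\bar H_t$ and the structural fact it is zero outside $S$ on $\Omega_S$); hence Lemma~\ref{lem:1_step_H_bar_complement} applied to that sub-market turns lower-boundedness of $Z^k$ into boundedness of $R^k_t=(H^k_{t,l})_{l\notin S}$ on $\Omega_S$, which is the assertion.

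The heart of the matter — and what I expect to be the main obstacle — is the decoupling of the active and inactive parts: in the displayed identity $P^k_t$ and $R^k_t$ may both be unbounded and compensate each other, so $P^k_t\cdot\Delta\bar S_t$ cannot be bounded from above "for free". I would handle this by a normalization/compactness argument. If $Z^k$ is not bounded from below on $\Omega_S$, then $\|H^k_t\|_1\to\infty$ along an $\F_{t-1}$-measurable subsequence (via \cite[Proposition~6.3.3]{DeSc06}) on an $\F_{t-1}$-measurable set $A$ of positive probability; the normalized strategies $\mathbbm 1_A H^k_t/\|H^k_t\|_1$ converge to some $\tilde H_t$ which, by the usual division of the lower-bounded quantity $H^k_t\cdot\Delta S_t-\epsilon\|H^k_t\|_1$ by $\|H^k_t\|_1\to\infty$ and the absence of strict $\epsilon$-arbitrage, lies in $E_{\epsilon,t}$, is therefore supported in $S$ (so $R^k_t/\|H^k_t\|_1\to0$) and, by the sign structural fact, satisfies $\tilde H_t\cdot\Delta\bar S_t=0$, i.e.\ is a null direction of $\bar S$. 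Peeling off this leading direction and iterating at most $|S|$ times — so that the residual active part is pinned down by the no-arbitrage closedness of $\bar S$, \cite[Proposition~6.4.1]{DeSc06} (boundedness below forces boundedness modulo null directions) — shows $P^k_t\cdot\Delta\bar S_t$, hence $Z^k=(H^k_t\cdot\Delta S_t-\epsilon\|H^k_t\|_1)-P^k_t\cdot\Delta\bar S_t+\epsilon(\cdots)$, is bounded from below, contradicting the assumption. The remaining work — making all the $\F_{t-1}$-measurable selections and subsequences in the iteration legitimate — is routine but tedious, relying throughout on \cite[Proposition~6.3.3]{DeSc06}.
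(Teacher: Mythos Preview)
Your overall strategy---split into active/inactive coordinates, reduce to showing the inactive contribution $Z^k:=R^k_t\cdot\Delta S_t-\epsilon|R^k_t|_1$ is bounded from below, then invoke Lemma~\ref{lem:1_step_H_bar_complement} for the inactive sub-market---is sound, and the displayed identity is correct. The gap is in the decoupling step.

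From your identity,
\[
Z^k \;=\; \bigl(H^k_t\cdot\Delta S_t-\epsilon|H^k_t|_1\bigr)\;-\;P^k_t\cdot\Delta\bar S_t\;+\;\epsilon\sum_{l\in S}(\cdots),
\]
so to bound $Z^k$ from below you need $P^k_t\cdot\Delta\bar S_t$ bounded from \emph{above}, not from below. Your peeling argument aims at the latter via \cite[Proposition~6.4.1]{DeSc06}, but that result takes ``$P^k_t\cdot\Delta\bar S_t$ bounded from below'' as \emph{input}; here there is no such a priori bound (the identity only gives $P^k_t\cdot\Delta\bar S_t\ge -c - Z^k$, and you are assuming $Z^k$ is unbounded below). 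Moreover, the peeling itself is problematic: subtracting a multiple of $\tilde H_t\in E_{\epsilon,t}$ from $H^k_t$ does not preserve the quantity $H^k_t\cdot\Delta S_t-\epsilon|H^k_t|_1$ unless signs match coordinatewise, which you only know in the limit, not for each $k$. So the iteration does not obviously maintain the lower bound needed to re-apply the normalization.

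The paper resolves the decoupling differently and more directly: it normalizes by $|\tilde H^k_t|_1$ (the \emph{inactive} norm) rather than by $|H^k_t|_1$. This makes the inactive limit $\tilde H_t$ have $|\tilde H_t|_1=1$ on the bad set, and one only needs to control the normalized \emph{active} contribution $(\hat H^k_t\cdot\Delta S_t-\epsilon|\hat H^k_t|_1)/(|\tilde H^k_t|_1\vee1)$. By the canonical decomposition, c.f.\ \eqref{eq:p-norm_lower_bound}, this is dominated above by $G^k_t\cdot\Delta S_t/(|\tilde H^k_t|_1\vee1)$ with $G^k_t\in F_{\epsilon,t}^\perp$; the normalized sequence is bounded from below, so Lemma~\ref{lem:1_step_eps_arbitrage}(b) produces a dominating $G_t\cdot\Delta S_t$, which Lemma~\ref{lem:p=1.strict_G} rewrites as $H_t\cdot\Delta S_t-\epsilon|H_t|_1$ for some $H_t\in\bar{\mathcal H}_t$. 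Since $\tilde H_t$ and $H_t$ have coordinatewise disjoint supports, $|\tilde H_t+H_t|_1=|\tilde H_t|_1+|H_t|_1$, so $\tilde H_t+H_t\in E_{\epsilon,t}$, contradicting the coordinatewise maximality of $\bar H_t$. No peeling or iteration is needed.
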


\begin{proof}
By Lemma \ref{lem:1_step_H_bar_complement} we have that $(\mathbbm 1_{\{ \bar H_t = 0 \}} H^k_t)_{k \in \N}$ is bounded, therefore we may assume w.l.o.g.\ that $H^k_t$, $k \in \N$ vanishes
outside of $\{\bar H_t \neq 0 \}$ for every $k \in \N$.
On $\{ \bar H_t \neq 0 \}$ we define, for $k \in \N$ and $i\in \{1,\ldots,d\}$,
\[
\hat H^k_{t,i} := \mathbbm 1_{\{\bar H_{t,i} \neq 0 \}} H^k_{t,i}
\quad \mbox{and} \quad
\tilde H^k_{t,i} := \mathbbm 1_{\{ \bar H_{t,i} = 0 \} } H^k_{t,i},
\]
and observe that $|H^k_t|_1 = |\hat H^k_t|_1 + |\tilde H^k_t|_1$.
It suffices 
to show that $(\tilde H^k_t)_{k \in \N}$ is a.s.\ bounded. 

Set $A := \{ \limsup_{k \to \infty} |\tilde H^k_t|_1 = \infty \}$ and consider the normalized sequence $(\tilde H^k_t / (|\tilde H^k_t|_1 \vee 1))_{k \in \N}$.
By the same reasoning as in Lemma \ref{lem:1_step_H_bar_complement} (that involves passing to an $\F_{t-1}$-measurable subsequence), we can assume that the normalized sequence converges to some $\tilde H_t \in \mathcal H_t$, $\mathbbm 1_A = |\tilde H_t|_1$, and 
\[
\tilde H_t \cdot \Delta S_t - \epsilon |\tilde H_t|_1 + \liminf_{k \to \infty} \frac{1}{|\tilde H_t^k|_1 \vee 1} \left( \hat H^k_t \cdot \Delta S_t - \epsilon |\hat H^k_t| \right) \ge 0.
\]
Since by \eqref{eq:p-norm_lower_bound} $\hat H^k_t \cdot \Delta S_t - \epsilon |\hat H^k_t| \le G^k_t \cdot \Delta S_t$ (where $(a^k_t, G^k_t, \tilde G^k_t)$ denotes the canonical decomposition of $\hat H^k_t$, for $k \in \N$), 
{ we have that $(G^k_t \cdot \Delta S_t / (|\tilde H_t^k|_1 \vee 1))_{k \in \N}$ is bounded from below.
Using Lemma \ref{lem:1_step_eps_arbitrage} we can pass to a measurable subsequence such that $(G^k_t / (|\tilde H_t^k|_1 \vee 1))_{k \in \N}$ converges in probability.
Passing to the limit by Lemma \ref{lem:p=1.strict_G} we find $H_t \in \bar{\mathcal H}_t$ with}
\[
\tilde H_t \cdot \Delta S_t - \epsilon |\tilde H_t|_1 + H_t \cdot \Delta S_t - \epsilon |H_t|_1 = 0,
\]
where the equality is due to the absence of strict $\epsilon$-arbitrage.
Then $\tilde H_t + H_t \in E_{\epsilon,t}$.
Since $\bar H_t$ is maximal, in the sense that (modulo null sets)
\[
\{\bar H_{t,i} \neq 0\} \supseteq \{ \hat G_{t,i} \neq 0 \}\quad \forall \hat G_{t,i} \in E_{\epsilon,t},
\]
and since (modulo null sets) $\{ \tilde H_{t,i} + H_{t,i} \neq 0\} = \{\tilde H_{t,i} \neq 0\} \cup \{ H_{t,i} \neq 0\}$, thus $\P[A] = 0$.
\end{proof}

\begin{lemma}
\label{lem:1_step_propA_NA}
Assume $\NA_\epsilon'$.
Then any pair $(H_t,G_t) \in \mathcal H_t \times F_{\epsilon,t}^\perp$ satisfies
\[
(H_t + G_t) \cdot \Delta S_t - \epsilon |H_t|_p \ge 0 \implies (H_t + G_t) \cdot \Delta S_t - \epsilon |H_t|_p = 0.
\]
\end{lemma}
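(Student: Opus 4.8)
The plan is to treat the cases $p>1$ and $p=1$ separately, using the canonical decomposition of Lemma~\ref{lem:decomp} for $p>1$ and the substitution from the proof of Lemma~\ref{lem:p=1.strict_G} for $p=1$. Two consequences of $\NA_\epsilon'$ will be used repeatedly, both in the localized form of Remark~\ref{rem:localization}: for every $\tilde H_t\in\mathcal H_t$, if $\tilde H_t\cdot\Delta S_t-\epsilon|\tilde H_t|_p\ge 0$ then $\tilde H_t\cdot\Delta S_t-\epsilon|\tilde H_t|_p=0$ (absence of strict $\epsilon$-arbitrage); and for every $G_t\in F_{\epsilon,t}^\perp$, if $G_t\cdot\Delta S_t\ge 0$ then $G_t\cdot\Delta S_t=0$ (clause (2) of Definition~\ref{def:NA'}). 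Write $Y:=(H_t+G_t)\cdot\Delta S_t-\epsilon|H_t|_p$ and assume $Y\ge 0$; the goal is $Y=0$.

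For $p>1$, I would first reduce to $H_t\in\bar{\mathcal H}_t$. Since $G_t\in F_{\epsilon,t}^\perp$ forces $G_t=0$ on $\{\bar H_t=0\}$ by Lemma~\ref{lem:H_bar}, multiplying $Y\ge 0$ by $\mathbbm 1_{\{\bar H_t=0\}}$ shows that $H_t'':=\mathbbm 1_{\{\bar H_t=0\}}H_t$ satisfies $H_t''\cdot\Delta S_t-\epsilon|H_t''|_p\ge 0$, hence equals $0$, so $H_t''\in E_{\epsilon,t}$; as $E_{\epsilon,t}$ consists of nonnegative multiples of $\bar H_t$ (Lemma~\ref{lem:H_bar}) while $H_t''$ is supported where $\bar H_t=0$, this forces $H_t''=0$, i.e.\ $H_t\in\bar{\mathcal H}_t$. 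Then take the canonical decomposition $H_t=a_t\bar H_t+G_t^{\circ}+\tilde G_t$ of Lemma~\ref{lem:decomp}, with $G_t^{\circ}\in F_{\epsilon,t}^\perp\cap E_{0,t}^\perp$ and $\tilde G_t\in F_{\epsilon,t}^\perp\cap E_{0,t}$. Using $\bar H_t\cdot\Delta S_t=\epsilon|\bar H_t|_p$, $\tilde G_t\cdot\Delta S_t=0$, and the bound $a_t|\bar H_t|_p\le|H_t|_p$ (which holds by \eqref{eq:p-norm_lower_bound} on $\{\bar H_t\neq 0\}$ and trivially elsewhere since $H_t\in\bar{\mathcal H}_t$), one computes
\[
Y=a_t\epsilon|\bar H_t|_p+(G_t^{\circ}+G_t)\cdot\Delta S_t-\epsilon|H_t|_p\le(G_t^{\circ}+G_t)\cdot\Delta S_t .
\]
Because $F_{\epsilon,t}^\perp$ is a linear space, $G_t^{\circ}+G_t\in F_{\epsilon,t}^\perp$, and $(G_t^{\circ}+G_t)\cdot\Delta S_t\ge Y\ge 0$; the second fact recalled above then gives $(G_t^{\circ}+G_t)\cdot\Delta S_t=0$, whence $Y\le 0$ and therefore $Y=0$.

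For $p=1$ clause (2) of $\NA_\epsilon'$ is not needed. Applying the construction from the proof of Lemma~\ref{lem:p=1.strict_G} to the one-period strategy $G_t$ (i.e.\ setting $\hat H_t:=G_t+w_t\bar H_t$ with the weight $w_t$ defined there), one obtains $\hat H_t\in\mathcal H_t$ with $\hat H_t\cdot\Delta S_t-\epsilon|\hat H_t|_1=G_t\cdot\Delta S_t$. Then
\[
Y=(H_t+\hat H_t)\cdot\Delta S_t-\epsilon\big(|H_t|_1+|\hat H_t|_1\big)\le(H_t+\hat H_t)\cdot\Delta S_t-\epsilon|H_t+\hat H_t|_1 ,
\]
and the right-hand side is $\ge Y\ge 0$, hence vanishes by absence of strict $\epsilon$-arbitrage; so once more $Y\le 0$ and $Y=0$.

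The step I expect to be the main obstacle is the $p>1$ part: one must carefully peel off the component of $H_t$ supported off $\{\bar H_t\neq 0\}$ so that Lemma~\ref{lem:decomp} applies, and then keep precise track of which summands of the canonical decomposition lie in $F_{\epsilon,t}^\perp$ and which in $E_{0,t}$; once this bookkeeping is done, the estimate collapses directly onto the two clauses of $\NA_\epsilon'$. The $p=1$ case is comparatively routine, the substitution from Lemma~\ref{lem:p=1.strict_G} doing essentially all the work.
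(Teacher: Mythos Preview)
Your proof is correct and follows essentially the same route as the paper: for $p>1$ you reduce to $H_t\in\bar{\mathcal H}_t$, apply the canonical decomposition, bound $Y$ from above by $(G_t^\circ+G_t)\cdot\Delta S_t$ via \eqref{eq:p-norm_lower_bound}, and then invoke clause~\ref{it:def_NA'_NA} of $\NA_\epsilon'$; for $p=1$ you use the substitution of Lemma~\ref{lem:p=1.strict_G} and conclude by absence of strict $\epsilon$-arbitrage. The only cosmetic difference is your reduction step: you observe that $H_t'':=\mathbbm 1_{\{\bar H_t=0\}}H_t$ lies in $E_{\epsilon,t}$ and therefore must vanish, so that $H_t$ is already in $\bar{\mathcal H}_t$, whereas the paper simply notes that on $\{\bar H_t=0\}$ one has $G_t=0$ and the claim reduces to no strict $\epsilon$-arbitrage, allowing one to restrict attention to $\{\bar H_t\neq 0\}$ without ever asserting $H_t''=0$.
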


\begin{proof}
Note that when $p = 1$ the assertion follows directly from the absence of strict $\epsilon$-arbitrage and Lemma \ref{lem:p=1.strict_G}.

Now let $p > 1$.
Since on $\{ \bar H_t = 0\}$ we have that $G_{t} = 0$, 
 the assertion is a consequence of the absence of strict $\epsilon$-arbitrage. So,
assume w.l.o.g.\ that $H_t \in \bar{\mathcal H}_t$ with canonical decomposition $(a_t, \hat G_t, \tilde G_t)$ and $G_t \in F_{\epsilon,t}^\perp$ be such that
\[
(H_t + G_t) \cdot \Delta S_t - \epsilon |H_t|_p \ge 0.
\]
Observe that by \eqref{eq:p-norm_lower_bound} we have
\[
(H_t + G_t) \cdot \Delta S_t - \epsilon |H_t|_p \le (\hat G_t + G_t) \cdot \Delta S_t.
\]
Therefore, we find that $(\hat G_t + G_t) \cdot \Delta S_t \ge 0$.
We conclude by $\NA_\epsilon'$ that also
\[
0 = (\hat G_t + G_t) \cdot \Delta S_t = (H_t + G_t) \cdot \Delta S_t - \epsilon |H_t|_p,
\]
which yields the claim.
\end{proof}

\begin{lemma}
\label{lem:1_step_closure_simple}
Assume the absence of strict $\epsilon$-arbitrage.
Let $(H^k_t)_{k \in \N}$ be a sequence in $\bar{\mathcal H}_t$ with canonical decompositions $(a^k_t, G^k_t, \tilde G^k_t)$.
If $(|G^k_t|)_{k \in \N}$ and $(a^k_t - |H^k_t|_p)_{k \in \N}$ are a.s.\ bounded, then there exists an $\mathcal F_{t-1}$-measurable subsequence of $(H^k_t \cdot \Delta S_t - \epsilon |H_t^k|_p)_{k \in \N}$ with limit in \[\{ H_t \cdot \Delta S_t - \epsilon |H_t|_p + G_t \cdot \Delta S_t \colon H_t \in \mathcal H_t, G_t \in F_{\epsilon,t}^\perp, H_t^\ast \cdot G_t = 0 \}.\]
Moreover, the latter set coincides with
$\{ H_t \cdot \Delta S_t - \epsilon |H_t|_p + \mathbbm 1_{\{H_t = 0\}} G_t \cdot \Delta S_t \colon H_t \in \mathcal H_t, G_t \in F_{\epsilon,t}^\perp \}$.
\end{lemma}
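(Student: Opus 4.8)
The plan is to use the canonical decomposition of Lemma~\ref{lem:decomp} to reduce the integrand $H^k_t\cdot\Delta S_t-\epsilon|H^k_t|_p$ to a scalar quantity, to extract an $\F_{t-1}$-measurable subsequence via \cite[Proposition 6.3.3]{DeSc06} and identify its limit, and then, separately, to prove the set identity by an elementary one-variable construction. I will write $A$ for the first set appearing in the statement and $B$ for the second.

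\emph{Step 1: reduction and the limit.} Using $H^k_t=a^k_t\bar H_t+G^k_t+\tilde G^k_t$ together with $\bar H_t\cdot\Delta S_t=\epsilon|\bar H_t|_p$ (as $\bar H_t\in E_{\epsilon,t}$), $\tilde G^k_t\cdot\Delta S_t=0$ (as $\tilde G^k_t\in E_{0,t}$), $|\bar H_t|_p\in\{0,1\}$, and $H^k_t=0$ on $\{\bar H_t=0\}$ (as $H^k_t\in\bar{\mathcal H}_t$), one gets
\[
H^k_t\cdot\Delta S_t-\epsilon|H^k_t|_p=\mathbbm 1_{\{\bar H_t\neq0\}}\bigl(G^k_t\cdot\Delta S_t-\epsilon(|H^k_t|_p-a^k_t)\bigr),
\]
where $|H^k_t|_p-a^k_t\ge0$ by \eqref{eq:p-norm_lower_bound}. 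Since $(|G^k_t|)_k$ and $(|H^k_t|_p-a^k_t)_k$ are a.s.\ bounded, I would pass to an $\F_{t-1}$-measurable subsequence (relabelled $k$) along which $G^k_t\to G_t$ and $|H^k_t|_p-a^k_t\to\delta_t$ pointwise. Then $G_t\in F_{\epsilon,t}^\perp\cap E_{0,t}^\perp$ (this set is stable under pointwise limits), $\delta_t\ge0$ (set $\delta_t:=0$ on $\{\bar H_t=0\}$), and $H^k_t\cdot\Delta S_t-\epsilon|H^k_t|_p\to Y:=\mathbbm 1_{\{\bar H_t\neq0\}}(G_t\cdot\Delta S_t-\epsilon\delta_t)$. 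To place $Y$ in $A$, take $\hat H_t:=-\tfrac{\delta_t}{2}\bar H_t$ and $\hat G_t:=G_t$: then $\hat H_t\cdot\Delta S_t-\epsilon|\hat H_t|_p=-\epsilon\delta_t|\bar H_t|_p$, and since $\hat H_t^\ast=-\bar H_t^\ast$ on $\{\delta_t>0\}$ and $\hat H_t^\ast=0$ on $\{\delta_t=0\}$, while $\bar H_t^\ast\in F_{\epsilon,t}$ annihilates $G_t\in F_{\epsilon,t}^\perp$, one has $\hat H_t^\ast\cdot\hat G_t=0$; adding $\hat G_t\cdot\Delta S_t$ and using $G_t=0$ on $\{\bar H_t=0\}$ recovers $Y$.

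\emph{Step 2: the identity $A=B$.} For $B\subseteq A$: given $(H_t,G_t)$ as in $B$, replace $G_t$ by $\mathbbm 1_{\{H_t=0\}}G_t\in F_{\epsilon,t}^\perp$; as $H_t^\ast=0$ on $\{H_t=0\}$ the constraint $H_t^\ast\cdot(\mathbbm 1_{\{H_t=0\}}G_t)=0$ holds and the value is unchanged. For $A\subseteq B$: fix $(H_t,G_t)$ with $G_t\in F_{\epsilon,t}^\perp$. On $\{\bar H_t=0\}$ one has $G_t=0$, so the value is $H_t\cdot\Delta S_t-\epsilon|H_t|_p$, already in $B$-form with vanishing $G'$. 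On $\{\bar H_t\neq0\}$, decompose $\mathbbm 1_{\{\bar H_t\neq0\}}H_t=\alpha_t\bar H_t+P_t+Q_t$ by Lemma~\ref{lem:decomp}; using $\bar H_t\cdot\Delta S_t=\epsilon$ and $Q_t\cdot\Delta S_t=0$, the value becomes $\tilde G_t\cdot\Delta S_t-\epsilon\delta_t$ with $\tilde G_t:=P_t+G_t\in F_{\epsilon,t}^\perp$ and $\delta_t:=|H_t|_p-\alpha_t\ge0$. On $\{\bar H_t\neq0,\delta_t=0\}$ this is $\tilde G_t\cdot\Delta S_t$, a $B$-value with $(0,\tilde G_t)$. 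On $\{\delta_t>0\}$ I would consider the continuous function $f(c):=|\tilde G_t+c\bar H_t|_p-c$, for which $f\ge0$, $f(c)\to+\infty$ as $c\to-\infty$, and $f(c)\to0$ as $c\to+\infty$ (the last via Remark~\ref{lem:square_root_asympt} and $\bar H_t^\ast\cdot\tilde G_t=0$, covering both $p>1$ and $p=1$); hence $f$ attains every value in $(0,\infty)$. Choosing $c_t$ measurably with $f(c_t)=\delta_t$ (e.g.\ $c_t:=\inf\{c:f(c)\le\delta_t\}$) and setting $H'_t:=\tilde G_t+c_t\bar H_t$ gives $H'_t\cdot\Delta S_t-\epsilon|H'_t|_p=\tilde G_t\cdot\Delta S_t-\epsilon\delta_t$; moreover $\bar H_t^\ast\cdot H'_t=c_t$, and $H'_t=0$ would force $c_t=0$, hence $\delta_t=|\tilde G_t|_p>0$ and $H'_t=\tilde G_t\neq0$, a contradiction, so $H'_t\neq0$. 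Pasting these three choices of $(H'_t,G'_t)$ over the $\F_{t-1}$-measurable partition $\{\bar H_t=0\}$, $\{\bar H_t\neq0,\delta_t=0\}$, $\{\delta_t>0\}$ yields a single pair in $B$ with the required value.

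\emph{Main obstacle.} The delicate point is the inclusion $A\subseteq B$ on $\{\delta_t>0\}$: because the $p$-norm (for $p\neq2$) is not compatible with the splitting $H_t=a_t\bar H_t+(\text{rest})$, one cannot read off a $B$-representative directly and must use the intermediate-value property of $f$ together with a measurable selection of the scalar $c_t$. Everything else is bookkeeping, the two subtleties being the set $\{\bar H_t=0\}$ and the fact that $H_t$ ranges over all of $\mathcal H_t$ (not $\bar{\mathcal H}_t$) in the definitions of $A$ and $B$.
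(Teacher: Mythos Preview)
Your proof is correct and follows the same overall strategy as the paper's (reduce via the canonical decomposition to the scalar form $G_t\cdot\Delta S_t+\epsilon(a_t-|H_t|_p)$, extract an $\F_{t-1}$-measurable subsequence, and use an intermediate-value construction for the set identity), but the two executions differ in instructive ways.

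For Step~1, the paper does not place the limit in the first set directly; instead it represents $G_t\cdot\Delta S_t+\epsilon c_t$ in the \emph{second} set via a three-case split on $\{c_t=0\}$, $\{c_t<0,\ \tilde G^k_t\equiv0\ \forall k\}$, and $\{c_t<0,\ \exists k:\tilde G^k_t\neq0\}$, exploiting some $\tilde G^\kappa_t\in E_{0,t}$ to inflate the norm without changing the integral. Your choice $\hat H_t=-\tfrac{\delta_t}{2}\bar H_t$ is markedly simpler and bypasses that case analysis entirely; since the lemma only asks for membership in the first set, this is enough.

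For Step~2, the paper also uses an intermediate-value argument, but keeps the original $H_t$ and solves
\[
a_t+x-\bigl|x\bar H_t+H_t+\mathbbm 1_{\{H_t\neq0\}}\hat G_t\bigr|_p=a_t-|H_t|_p,
\]
where the constraint $H_t^\ast\cdot\hat G_t=0$ is used to ensure the value at $x=0$ lies on the correct side of $a_t-|H_t|_p$. You instead first collapse the value to $\tilde G_t\cdot\Delta S_t-\epsilon\delta_t$ and then solve $|\tilde G_t+c\bar H_t|_p-c=\delta_t$; this does not use the orthogonality constraint at all and therefore actually proves the stronger fact that the constraint $H_t^\ast\cdot G_t=0$ in the first set is redundant. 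Both approaches paste over the same $\F_{t-1}$-measurable partition.
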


\begin{proof}
For the last assertion, observe that it suffices to show the statement for $H_t \in \bar{\mathcal H}_t$ 
and $G_t \in F_{\epsilon,t}^\perp$ since on $\{\bar H_t = 0 \}$ (when $p > 1$) resp.\ $\{ \bar H_{t,i} = 0 \}$ (when $p = 1$ and $i \in \{1,\ldots,d\}$) $G_t$ resp.\ $G_{t,i}$ have to vanish.
To this end, let $H_t \in \bar{\mathcal H}_t$ with decomposition
$(a_t,G_t,\tilde G_t)$
and $\hat G_t \in F_{\epsilon,t}^\perp$ with $H_t^\ast \cdot \hat G_t = 0$.
Since
\[
|H_t + \hat G_t|_p \ge H_t^\ast \cdot (H_t + \hat G_t) = |H_t|_p,
\]
and the map
\[
\R \ni x \mapsto a_t + x - |x \bar H_t + H_t + \mathbbm 1_{\{H_t \neq 0 \} } \hat G_t|_p
\]
is a.s.\ continuous with limit equal to $0$ for $x\to\infty$, cf.\ Remark \ref{lem:square_root_asympt} and \eqref{eq:p-norm_lower_bound}, there is $\hat a_t \in L^0(\Omega,\F_{t-1},\P;\R)$ with $|\hat a_t| \ge | a_t|$, $\{ H_t = 0 \} \subset \{ \hat a_t = 0 \}$, and 

\[
a_t - |H_t|_p = \hat a_t - |(\hat a_t - a_t) \bar H_t + H_t + \mathbbm 1_{\{ H_t \neq 0 \}} \hat G_t|_p.
\]
Define $\hat H_t$ as the element of $\bar {\mathcal H}_t$ with canonical decomposition $(\hat a_t, G_t +  \mathbbm 1_{\{ H_t \neq 0 \}} \hat G_t, \tilde G_t)$.
Then {$\{\hat H_t = 0 \} = \{H_t = 0 \}$ and } $\hat H_t$ and $\hat G_t$  admit the desired representation, since
\begin{align*}
(\hat H_t + \mathbbm 1_{ \{ \hat H_t = 0 \} } \hat G_t) \cdot \Delta S_t - \epsilon |\hat H_t|_p &=     (\hat H_t + \mathbbm 1_{ \{  H_t = 0 \} } \hat G_t) \cdot \Delta S_t - \epsilon |\hat H_t|_p \\
&= (G_t + \hat G_t) \cdot \Delta S_t + \epsilon (\hat a_t - |\hat H_t|_p)
\\
&= (G_t + \hat G_t) \cdot \Delta S_t + \epsilon (a_t - |H_t|_p)
\\
&= (H_t + \hat G_t) \cdot \Delta S_t - \epsilon |H_t|_p.
\end{align*}
This shows {that the last assertion, i.e., the two sets coincide.}
    
We now turn our attention towards the first assertion.
{By potentially passing to an $\mathcal F_{t-1}$-measurable subsequence, we can assume w.l.o.g.\ that the sequences, $(|G^k_t|_p)_{k \in \N}$, $(a^k_t - |H^k_t|_p)_{k \in \N}$ and $(\min(a^k_t, 0))_{k \in \N}$ are convergent with limit $G_t \in F^\perp_{\epsilon,t} \cap E^\perp_{0,t}$, $c_t \in L^0(\Omega,\mathcal F_{t-1}, \mathbb P; \mathbb R^-)$ and $\tilde a_t \in L^0(\Omega,\mathcal F_{t-1}, \mathbb P; \mathbb R^-)$ respectively. }
Therefore,
\begin{equation}\label{eq:limhk}
\lim_{k \to \infty} H_t^k \cdot \Delta S_t - \epsilon |H_t^k|_p = G_t \cdot \Delta S_t + \epsilon c_t.
\end{equation}
Observe that $c_t$ is $\mathcal F_{t-1}$-measurable and consider the $\mathcal F_{t-1}$-measurable sets
\[
A := \{ c_t = 0 \}, \quad 
B := \{ c_t < 0, \tilde G^k_t = 0\ \forall k \in \N\}, \quad
C := \{ c_t < 0, \exists k \in \N \text{ s.t. } \tilde G^k_t \neq 0\}.
\]
Then there exists an $\F_{t-1}$-measurable function $\kappa \colon \Omega \to \N$ with $\tilde G^\kappa_t \neq 0$ on $C$, in particular, $\tilde G^\kappa_t \in E_{0,t}$.
Since, by the triangle inequality $|x \bar H_t + G_t|_p \le |x| + |G_t|_p$, we have on { $B \cap \{ \tilde a_t = 0 \}$} that $c_t \ge - | { G_t} |_p$ { whereas on $B \cap \{ \tilde a_t < 0 \}$ we have that $c_t = \tilde a_t - |\tilde a_t \bar H_t +{ G_t} |_p$.} 
Thus,
\begin{equation}
\label{eq:lem_1_step_closure_B}
x - |x \bar H_t + G_t|_p = c_t
\end{equation}
has a solution.
We write $a_t \in L^0(\Omega,\mathcal F_{t-1},\P; \mathbb R)$ for a function solving \eqref{eq:lem_1_step_closure_B} on $B$ and vanishing on the complement.
Since {on $C$,} $x \mapsto |G_t + x \tilde G_t^\kappa|_p$ diverges for $x \to \infty$, there is $b_t \in L^0(\Omega, \mathcal F_{t-1},\P; \mathbb R_+)$ that vanishes outside of $C$ and satisfies
\[
-|G_t + b_t \tilde G^{\kappa}_t|_p = c_t.
\]
Then
we can use the pair $(H_t, \mathbbm 1_A G_t)$, where $H_t$ has the canonical decomposition $(a_t, \mathbbm 1_{\{B \cup C \}} G_t, b_t \tilde G^\kappa_t)$, in order to represent the limit in \eqref{eq:limhk} as
\[
G_t \cdot \Delta S_t + \epsilon c_t = G_t \cdot \Delta S_t + \epsilon (a_t - |a_t \bar H_t +{ \mathbbm 1_{\{B \cup C\}} }G_t + b_t \tilde G_t^\kappa|_p) = H_t \cdot \Delta S_t - \epsilon |H_t|_p + \mathbbm 1_A G_t \cdot \Delta S_t.
\]
\end{proof}

\begin{corollary}
\label{cor:1_step_boundedness_canonical_decomp}
Assume $\NA_\epsilon'$.
Let $H^k_t \in \bar{\mathcal H}_t$, $k \in \N$ with canonical decomposition $(a^k_t, G^k_t, \tilde G^k_t)$ be such that $(H^k \cdot \Delta S_t - \epsilon |H^k_t|_p)_{k \in \N}$ is almost surely bounded from below.
Then the sequences $(|G^k_t|_p)_{k \in \N}$ and $( |H^k_t|_p - a^k_t)_{k \in \N}$ are almost surely bounded.
\end{corollary}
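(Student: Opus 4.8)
The plan is to reduce the statement to the one-step estimate in Lemma~\ref{lem:1_step_eps_arbitrage} by exploiting the algebra of the canonical decomposition. For each $k$ I would write $H^k_t = a^k_t \bar H_t + G^k_t + \tilde G^k_t$ as in \eqref{eq:canonical_decomp}, with the canonical choice $a^k_t = \bar H^\ast_t \cdot H^k_t$, which vanishes on $\{\bar H_t = 0\}$ because there $H^k_t = 0$ by membership in $\bar{\mathcal H}_t$. Using that $\bar H_t \in E_{\epsilon,t}$ with $|\bar H_t|_p \in \{0,1\}$ (Lemma~\ref{lem:H_bar}), so $\bar H_t \cdot \Delta S_t = \epsilon |\bar H_t|_p = \epsilon \mathbbm 1_{\{\bar H_t \neq 0\}}$, together with $\tilde G^k_t \in E_{0,t}$, hence $\tilde G^k_t \cdot \Delta S_t = 0$, and $a^k_t = a^k_t \mathbbm 1_{\{\bar H_t \neq 0\}}$, one gets the identity
\[
H^k_t \cdot \Delta S_t - \epsilon |H^k_t|_p = G^k_t \cdot \Delta S_t - \epsilon\bigl( |H^k_t|_p - a^k_t \bigr).
\]
The point of this rewriting is that, by \eqref{eq:p-norm_lower_bound} on $\{\bar H_t \neq 0\}$ and trivially on its complement, $|H^k_t|_p - a^k_t \ge 0$, so the last summand is non-positive.

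Next I would exploit the hypothesis: there is $c \ge 0$ with $H^k_t \cdot \Delta S_t - \epsilon|H^k_t|_p \ge -c$ for all $k$ a.s. Plugging this into the identity gives $G^k_t \cdot \Delta S_t \ge -c + \epsilon(|H^k_t|_p - a^k_t) \ge -c$, so $(G^k_t \cdot \Delta S_t)_{k}$ is a.s.\ bounded from below. Since $G^k_t \in F_{\epsilon,t}^\perp \cap E_{0,t}^\perp$ by construction of the canonical decomposition, Lemma~\ref{lem:1_step_eps_arbitrage} applies and yields that $(G^k_t)_{k}$, hence $(|G^k_t|_p)_{k}$, is a.s.\ bounded. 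This is where essentially all the work lies, and it is the step where $\NA_\epsilon'$ is genuinely used (via its reduction to a classical no-arbitrage statement); the rest is bookkeeping, so I do not expect a real obstacle beyond setting up the identity correctly.

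Finally, once $(|G^k_t|_p)_{k}$ is a.s.\ bounded, so is $(G^k_t \cdot \Delta S_t)_{k}$ from both sides, as $\Delta S_t$ is a fixed random vector. Rearranging the identity once more,
\[
\epsilon\bigl(|H^k_t|_p - a^k_t\bigr) = G^k_t \cdot \Delta S_t - \bigl(H^k_t \cdot \Delta S_t - \epsilon|H^k_t|_p\bigr) \le G^k_t \cdot \Delta S_t + c,
\]
so $(|H^k_t|_p - a^k_t)_{k}$ is a.s.\ bounded from above; combined with $|H^k_t|_p - a^k_t \ge 0$ and $\epsilon > 0$ this gives a.s.\ boundedness of $(|H^k_t|_p - a^k_t)_{k}$, as claimed. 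The only mild subtlety throughout is checking that the opening identity holds also on the degenerate set $\{\bar H_t = 0\}$ and, when $p = 1$, coordinatewise on the sets $\{\bar H_{t,i} = 0\}$; this is immediate from $H^k_t \in \bar{\mathcal H}_t$ and the formula $a^k_t = \bar H^\ast_t \cdot H^k_t$.
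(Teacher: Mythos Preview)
The proposal is correct and follows essentially the same route as the paper: from the canonical decomposition and \eqref{eq:p-norm_lower_bound} you derive $G^k_t \cdot \Delta S_t \ge H^k_t \cdot \Delta S_t - \epsilon |H^k_t|_p$, invoke Lemma~\ref{lem:1_step_eps_arbitrage} to bound $(|G^k_t|_p)_k$, and then rearrange to bound $(|H^k_t|_p - a^k_t)_k$. Your write-up is in fact slightly more explicit than the paper's, as you spell out the exact identity $H^k_t \cdot \Delta S_t - \epsilon |H^k_t|_p = G^k_t \cdot \Delta S_t - \epsilon(|H^k_t|_p - a^k_t)$ rather than using only the inequality form.
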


\begin{proof}
Since, by \eqref{eq:p-norm_lower_bound}, we have, for $k \in \N$, that
\[
G^k_t \cdot \Delta S_t \ge H^k_t \cdot \Delta S_t - \epsilon |H^k_t|_p,
\]
we can apply Lemma~\ref{lem:1_step_eps_arbitrage} and conclude that $(|G^k_t|)_{k \in \N}$ has to be almost surely bounded.
Moreover, observe that $|H_t^k|_p - a^k_t$ is by \eqref{eq:p-norm_lower_bound} non-negative and that the left-hand side of
\[
\sup_{k \in \N} |G_t^k|_p |\Delta S_t|_q + \epsilon \liminf_{k \to \infty} a^k_t - |H_t^k|_p \ge \liminf_{k \to \infty} H_t^k \cdot \Delta S_t - \epsilon |H_t^k|_p
\]
does not diverge to $-\infty$ if and only if $(|H_t^k|_p - a^k_t)_{k \in \N}$ is a.s.\ bounded.
\end{proof}

\begin{proposition}
\label{prop:1_step_closure}
    Assume $\NA_\epsilon'$.
    Then the closure of $K_t = \{ H_t \cdot \Delta S_t - \epsilon |H_t|_p \colon H \in \mathcal H_t \}$ w.r.t.\ convergence in probability is given by
    \begin{equation}
        \label{eq:lem_1_step_closure}
        \overline{K}_t = \{ H_t \cdot \Delta S_t - \epsilon |H_t|_p + G_t \cdot \Delta S_t \colon H_t \in \mathcal H_t, G_t \in F_{\epsilon,t}^\perp, H_t^\ast \cdot G_t = 0 \}.
    \end{equation}
    Furthermore, for any sequence $(Y^k)_{k \in \N}\subset \overline{K}_t$ that is bounded from below, there is $Y \in \overline{K}_t$ with
    \begin{equation*}
        \label{eq:lem_1_step_closure_liminf}
        Y \ge \liminf_{k \to \infty} Y^k.
    \end{equation*}
\end{proposition}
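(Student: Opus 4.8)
The plan is to establish \eqref{eq:lem_1_step_closure} and the final domination statement simultaneously, reducing both to one extraction claim for sequences drawn from $K_t$ itself; write $S$ for the right-hand side of \eqref{eq:lem_1_step_closure}. First I would dispatch the inclusion $S\subseteq\overline K_t$. By the second part of Lemma~\ref{lem:1_step_closure_simple}, $S$ also equals $\{H_t\cdot\Delta S_t-\epsilon|H_t|_p+\mathbbm 1_{\{H_t=0\}}G_t\cdot\Delta S_t\colon H_t\in\mathcal H_t,\ G_t\in F_{\epsilon,t}^\perp\}$, so I would fix such a representation $Z=H_t\cdot\Delta S_t-\epsilon|H_t|_p+\mathbbm 1_{\{H_t=0\}}G_t\cdot\Delta S_t$ of an arbitrary $Z\in S$ and take $H_t^k:=\mathbbm 1_{\{H_t\neq0\}}H_t+\mathbbm 1_{\{H_t=0\}}(k\bar H_t+G_t)\in\mathcal H_t$. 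On $\{H_t\neq0\}$ these reproduce $H_t\cdot\Delta S_t-\epsilon|H_t|_p$, while on $\{H_t=0\}$, where $G_t$ vanishes off $\{\bar H_t\neq0\}$ by Lemma~\ref{lem:H_bar}, Remark~\ref{lem:square_root_asympt} combined with $\bar H_t^\ast\in F_{\epsilon,t}$ (hence $\bar H_t^\ast\cdot G_t=0$; for $p=1$ the extra term in Remark~\ref{lem:square_root_asympt} also vanishes since $G_{t,i}=0$ whenever $\bar H_{t,i}=0$) gives $H_t^k\cdot\Delta S_t-\epsilon|H_t^k|_p\to G_t\cdot\Delta S_t$ a.s. Hence $H_t^k\cdot\Delta S_t-\epsilon|H_t^k|_p\to Z$ a.s.\ and $Z\in\overline K_t$.

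The core of the argument is the claim: under $\NA_\epsilon'$, every sequence $(Z^k)_k\subseteq K_t$ which is a.s.\ bounded from below has an $\F_{t-1}$-measurable subsequence $(\tau_k)_k$ with $Z^{\tau_k}\to Z$ a.s.\ for some $Z\in S$. To prove it I would write $Z^k=H_t^k\cdot\Delta S_t-\epsilon|H_t^k|_p$ and split $H_t^k=\check H_t^k+\hat H_t^k$, where $\check H_t^k$ is $H_t^k$ restricted to $\{\bar H_t=0\}$ (coordinate-wise, to $\{\bar H_{t,i}=0\}$, when $p=1$) and $\hat H_t^k\in\bar{\mathcal H}_t$ is the complementary part. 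Since the two pieces are supported on disjoint $\F_{t-1}$-sets (resp.\ coordinates), both $|H_t^k|_p$ and $H_t^k\cdot\Delta S_t$ split additively, so $Z^k=(\check H_t^k\cdot\Delta S_t-\epsilon|\check H_t^k|_p)+(\hat H_t^k\cdot\Delta S_t-\epsilon|\hat H_t^k|_p)$ with both summands still a.s.\ bounded from below. The first sequence is a.s.\ bounded by Lemma~\ref{lem:1_step_H_bar_complement} (resp.\ Lemma~\ref{lem:p=1.H_bounded}), so along an $\F_{t-1}$-measurable subsequence $\check H_t^k$ converges and the first summand tends a.s.\ to an element of $S$ (with $G_t=0$). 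For the second sequence, Corollary~\ref{cor:1_step_boundedness_canonical_decomp} makes $(|G_t^k|_p)_k$ and $(|\hat H_t^k|_p-a_t^k)_k$ a.s.\ bounded for the canonical decompositions $(a_t^k,G_t^k,\tilde G_t^k)$ of $\hat H_t^k$, so Lemma~\ref{lem:1_step_closure_simple} supplies a further $\F_{t-1}$-measurable subsequence along which the second summand converges a.s.\ to an element of $S$. Since $S$ is stable under gluing two of its members along an $\F_{t-1}$-measurable set (all defining conditions, including $H_t^\ast\cdot G_t=0$, being pointwise), the sum of the two limits lies in $S$, which proves the claim.

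Finally I would assemble everything. For closedness, given $Z\in\overline K_t$ I would pick $Z^k\in K_t$ with $Z^k\to Z$ in probability, pass to an a.s.\ convergent subsequence (automatically a.s.\ bounded from below by its pointwise infimum), and apply the claim to get a further $\F_{t-1}$-measurable subsequence converging a.s.\ to some $Y\in S$; since this subsequence still converges to $Z$, we get $Z=Y\in S$, so with the first paragraph $\overline K_t=S$, which is \eqref{eq:lem_1_step_closure}. For the domination statement, let $(Y^k)_k\subseteq\overline K_t$ be a.s.\ bounded from below; a Borel--Cantelli diagonalisation yields $\tilde Y^k\in K_t$ with $\tilde Y^k-Y^k\to0$ a.s., so $(\tilde Y^k)_k$ is a.s.\ bounded from below, the claim gives an $\F_{t-1}$-measurable subsequence $\tilde Y^{\tau_k}\to Y\in S=\overline K_t$ a.s., hence $Y^{\tau_k}\to Y$ a.s., and therefore $Y=\lim_k Y^{\tau_k}=\liminf_k Y^{\tau_k}\ge\liminf_k Y^k$. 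I expect the main obstacle to be not any single estimate, since the analytic work is already isolated in Lemma~\ref{lem:1_step_closure_simple} and Corollary~\ref{cor:1_step_boundedness_canonical_decomp}, but the measurable bookkeeping: carrying out every extraction $\F_{t-1}$-measurably, moving cleanly among $K_t$, $\overline K_t$ and $S$, and splitting off the $\{\bar H_t=0\}$-part, which for $p=1$ must be handled coordinate-wise.
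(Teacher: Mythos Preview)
Your proposal is correct and follows essentially the same route as the paper: show $S\subseteq\overline K_t$ via the approximating strategies $k\bar H_t+G_t$, then for a bounded-from-below sequence in $K_t$ first peel off the part living on $\{\bar H_t=0\}$ (coordinate-wise when $p=1$) using Lemma~\ref{lem:1_step_H_bar_complement} resp.\ Lemma~\ref{lem:p=1.H_bounded}, and finish on $\bar{\mathcal H}_t$ with Corollary~\ref{cor:1_step_boundedness_canonical_decomp} and Lemma~\ref{lem:1_step_closure_simple}; the domination statement is handled by the same diagonalisation the paper uses in its Step~3. The only cosmetic slip is in the $p=1$ case, where ``both summands still a.s.\ bounded from below'' does not follow from ``disjoint coordinates'' alone---you need Lemma~\ref{lem:p=1.H_bounded} (applied to the original $H_t^k$) first to bound $\check H_t^k$, after which the second summand is bounded from below; your next sentence already supplies exactly this, so the argument is sound once the order is fixed.
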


\begin{proof}
    We write $\hat K_t$ for the right-hand side of \eqref{eq:lem_1_step_closure}.

    \emph{Step 1}: We show the inclusion $\hat K_t \subseteq \overline{K}_t$.
    Pick any $H_t \in \mathcal H_t$ and $G_t \in  F^\perp_{\epsilon,t}$ and recall that by Lemma~\ref{lem:1_step_closure_simple} it suffices to show that
    \begin{equation}
        \label{eq:lem_1_step_closure_toshow1}
        H_t \cdot \Delta S_t - \epsilon |H_t|_p + \mathbbm 1_{\{ H_t = 0 \}} G_t \cdot \Delta S_t \in \overline{K}_t.
    \end{equation}

    Since $H_t \cdot \Delta S_t - \epsilon |H_t|_p \in K_t$, { proving}
    \eqref{eq:lem_1_step_closure_toshow1} boils down  to showing that $G_t \cdot \Delta S_t \in \overline{K}_t$.
    To this end, define
    $H^k_t :=  (k \bar H_t + G_t)$ for $k\in \N$ and compute on $\{\bar H_t\neq 0\}$ {(as in Remark \ref{lem:square_root_asympt})}:
    \[
        \lim_{k \to \infty} H^k_t \cdot \Delta S_t - \epsilon |H^k_t|_p = G_t \cdot \Delta S_t + \epsilon \lim_{k \to \infty} k - |k\bar H + G|_p = G_t \cdot \Delta S_t-\bar H_t^\ast\cdot G_t=G_t \cdot \Delta S_t.
    \]
    On the other hand, on $\{\bar H_t= 0\}$ we have $H^k_t \cdot \Delta S_t - \epsilon |H^k_t|_p=G_t \cdot \Delta S_t=0$. Hence overall we obtained $G_t \cdot \Delta S_t \in \overline{K}_t$.
    
    \emph{Step 2}: We prove the reverse inclusion $\hat K_t \supset \overline{K}_t$.
    We argue first that we can restrict w.l.o.g.\ to sequences $(H^k_t)_{k \in \N}$ in $\bar{\mathcal H}_t$.
    So, let $(H^k_t)_{k \in \N}$ be a sequence in $\mathcal H_t$ where $(H^k_t \cdot \Delta S_t - \epsilon |H^k_t|_p)_{k \in \N}$ is bounded from below.

    \emph{Case $p > 1$}: By Lemma \ref{lem:1_step_H_bar_complement} this sequence admits an $\F_{t-1}$-measurable subsequence such that
    \[
        (H^{\tau_k}_t \cdot \Delta S_t - \epsilon |H^{\tau_k}_t|_p)_{k \in \N} \mbox{ converges on }\{\bar H_t = 0\}.
    \]

    \emph{Case $p = 1$}: By Lemma \ref{lem:p=1.H_bounded} this sequence admits an $\F_{t-1}$-measurable subsequence such that
    \[
        (H^{\tau_k}_{t,i} \cdot \Delta S^i_t - \epsilon |H^{\tau_k}_{t,i}|)_{k \in \N} \mbox{ converges on }\{\bar H_{t,i} = 0 \},    
    \]
    for all $i \in \{1,\ldots, d\}$.

    In both cases, we may now assume w.l.o.g.\ that $(H_t^k)_{k \in \N}$ is a sequence in $\bar{\mathcal H}_t$, from where we conclude  the proof of Step 2 by combining  Corollary \ref{cor:1_step_boundedness_canonical_decomp} with Lemma \ref{lem:1_step_closure_simple}.
    
    \emph{Step 3}: For the last assertion, a diagonalization argument provides us with a sequence $(H_t^k)_{k \in \N}$  such that  $H^k_t \cdot \Delta S_t - \epsilon |H^k_t|_p\geq Y_k-a_k$, where $a_k\searrow 0$. In particular, $(H^k_t \cdot \Delta S_t - \epsilon |H^k_t|_p)_{k \in \N}$ is bounded from below, so {Corollary} \ref{cor:1_step_boundedness_canonical_decomp} with Lemma \ref{lem:1_step_closure_simple} yield, for an $\F_{t-1}$-measurable subsequence, a limit for $(H^{\tau_k}_t \cdot \Delta S_t - \epsilon |H^{\tau_k}_t|_p)_{k \in \N}$. This limit must necessarily live in $\hat K_t$ and be larger or equal than $\liminf_k Y_k$. This concludes the proof, as $\hat K_t=\overline{K}_t$.
\end{proof}

\subsection{\texorpdfstring{$\NA_\epsilon$}{}: the multi-step case}\label{sec:NA_multistep}

{We begin with the pending proof  of Proposition \ref{thm:epscritval}:

\begin{proof}[Proof of Proposition \ref{thm:epscritval}]
Let $\eps>  \eps(\P)$.
By localization, cf.\ Remark \ref{rem:localization}, it suffices to check the absence of $\eps$-arbitrage for single periods $[t-1,t)$ with $t \in \{1,\ldots, T\}$.
Recall that by Lemma \ref{lem:critical_value} we have that $E_{\epsilon,t} = \{ 0 \}$.
Consequently, if $(H^k_t)_{k \in \N}$ is a sequence in $\mathcal H_t$ such that $(H^k_t \cdot \Delta S_t - \eps |H_t^k|_p)_{k \in \N}$ is bounded from below, then we deduce from Lemma \ref{lem:1_step_H_bar_complement} that $(H^k_t)_{k \in \N}$ has to be bounded.
We conclude that there is an $\F_{t-1}$-measurable subsequence $(\tau_k)_{k \in \N}$ such that $\lim_{k \to \infty} H^{\tau_k}_t = H_t \in \mathcal H_t$ and
\[
    H_t \cdot \Delta S_t - \eps |H_t|_p = \lim_{k \to \infty} H^{\tau_k}_t \cdot \Delta S_t - \eps |H^{\tau_k}_t|_p \ge \liminf_{k \to \infty} H^k_t \cdot \Delta S_t - \eps |H^k_t|_p.
\]
In particular, there can be no $\eps$-arbitrage opportunities, as the above argument would produce a strict $\eps$-arbitrage.
\end{proof}

Next we provide the pending proof  of Theorem~\ref{thm:closure}:}
\begin{proof}[Proof of Theorem~\ref{thm:closure}, $p > 1$]
We show the claim by induction.
To this end, we write { $\tilde K_{T} := \overline{K}_{T}$} and recursively set, for $t = 1,\ldots, T-1$,
\begin{align*}
\tilde K_t :=& \left\{ X + Y \colon X \in \overline{K}_t, Y \in \tilde K_{t + 1}  \right\},
\\
\tilde K_t^0 :=& \left\{ (H \bullet S)_T - \epsilon \|H\|_p \colon H = (H_s)_{s = 1}^{T} \in \mathcal H, H_s = 0 \text{ for } s \le t-1 \right\},
\end{align*}
where {$\overline{K}_t$} is given in Proposition~\ref{prop:1_step_closure}. This same proposition, for $t=T$, constitutes the base of the induction. That is, the closure of $\tilde K_{T}^0$ w.r.t.\ convergence in probability is $\tilde K_{T}$ and for any sequence $(Z^k)_{k \in \N}$ in $\tilde K_{T}$ bounded from below there is $Z \in \tilde K_{T}$ with $Z \ge \liminf_{k \to \infty} Z^k$. Moreover, by $\NA_\epsilon'$, if $Z \in \tilde K_{T}$ is s.t. $Z \ge  0$, then actually $Z = 0$. Next, for the induction step, assume:
\begin{enumerate}[label = (\alph*)]
\item the closure of $\tilde K_{t + 1}^0$ w.r.t.\ convergence in probability is $\tilde K_{t + 1}$;
\item for any sequence $(Z^k)_{k \in \N}$ in $\tilde K_{t + 1}$ that is a.s.\ bounded from below there is $Z \in \tilde K_{t + 1}$ with $Z \ge \liminf_{k \to \infty} Z^k$;
\item for $Z \in \tilde K_{t + 1}$ we have
\begin{equation}
\label{eq:thm_closure.NAeps_inductive}
Z \ge 0 \implies Z = 0.
\end{equation}
\end{enumerate}

By the inductive assumption, \eqref{eq:thm_closure.NAeps_inductive} {and (b) have} as a consequence that, for any sequence $(Z^k)_{k \in \N}$ in $\tilde K_{t + 1}$, the following implication holds:
\begin{equation}
\label{eq:thm_closure.NAeps_liminf}
\liminf_{k\to\infty} Z^k \ge 0 \implies
\liminf_{k\to\infty} Z^k = 0.
\end{equation}
We now show the same claims for $\tilde K_t$.
Let $H_t^k \in \mathcal H_t$, $k \in \N$ with canonical decomposition $(a^k_t, G^k_t, \tilde G^k_t)$ on $\{\bar H_t \neq 0\}$, and let $(Y^k)_{k \in \N}$ be a sequence in 
$\tilde K_{t + 1}$ such that
\[
\inf_{k \in \N} H_t^k \cdot \Delta S_t - \epsilon |H_t^k|_p + Y^k > -\infty.
\]
Similar to Corollary~\ref{cor:1_step_boundedness_canonical_decomp}, we aim to first show that
\begin{equation}
\label{eq:thm_closure_step0}
\sup_{k \in \N} |G^k_t|_p < \infty,\quad
\inf_{k \in \N} a^k_t - |H^k_t|_p > -\infty,\quad\mbox{and }
\sup_{k \in \N} |H^k_t|_p < \infty \mbox{ on }\{\bar H_t\neq 0\},
\end{equation}
which will then allow us to pass to convergent subsequences by Lemma~\ref{lem:1_step_closure_simple}.

\emph{Step 1}. 
Here we consider the set $\{ \bar H_t = 0 \}$ and claim that $\sup_{k \in \N} |H^k_t|_p$ is finite on $\{\bar H_t = 0 \}$.
Write $A := \{ \limsup_{k \to \infty} |H_t^k|_p = \infty, \bar H_t = 0\}$ and define the normalized sequence $(\tilde H_t^k)_{k \in \N}$ as in \eqref{eq:def_normalized_sequence}.
By passing to an $\F_{t-1}$-measurable subsequence, we can assume that this sequence converges to $\tilde H_t \in F_{\epsilon,t}^\perp$ with $|\tilde H_t|_p = \mathbbm 1_A$, and
\begin{equation}
\label{eq:thm_closure_step1}
\tilde H_t \cdot \Delta S_t - \epsilon |\tilde H_t| + \liminf_{k \to \infty} \tilde Y^k \ge 0,
\end{equation}
where $\tilde Y^k := \mathbbm 1_{A} \frac{Y^k}{|H^k_t|_p \vee 1} \in \tilde K_{t + 1}$.
By \eqref{eq:thm_closure_step1} the sequence $(\tilde Y^k)_{k \in \N}$ is bounded from below.
Thus, the inductive assumption provides $\tilde Y \in \tilde K_{t + 1}$ with $\tilde Y \ge \liminf_{k \to \infty} \tilde Y^k$.
Since $\mathbbm 1_{ \{ \tilde H_t \cdot \Delta S_t - \epsilon |\tilde H_t|_p \le 0\} } \tilde Y \in \tilde K_{t + 1}$ we deduce from \eqref{eq:thm_closure.NAeps_inductive} that
\[
\mathbbm 1_{ \{ \tilde H_t \cdot \Delta S_t - \epsilon |\tilde H_t|_p \le 0\} } \tilde Y = 0,
\]
and in particular, by \eqref{eq:thm_closure_step1},
\[
\tilde H_t \cdot \Delta S_t - \epsilon |\tilde H_t|_p\ge 0,
\]
By the absence of strict $\epsilon$-arbitrage, this yields
\[
\tilde H_t \cdot \Delta S_t - \epsilon |\tilde H_t|_p = 0, \quad
\mbox{i.e.,}\quad
\tilde H_t \in E_{\epsilon,t} \cap F_{\epsilon,t}^\perp,
\]
which necessarily means $\tilde H_t \equiv 0$, from where we deduce that $\P[A] = 0$, as wanted.
    
\emph{Step 2}. 
We claim that $\sup_{k \in \N} |G^k_t|_p$ is a.s.\ finite and proceed by contradiction:
Let the event $A := \{ \sup_{k \in \N} |G^k_t|_p = \infty \}$ have positive probability.
Then passing to an $\F_{t-1}$-measurable, convergent subsequence leads to
\[
0 \le \mathbbm 1_A \liminf_{k \to \infty} \frac{G_t^k}{{|G_t^k|_p \vee 1}}  \cdot \Delta S_t + Y^k {\le} \bar G_t \cdot \Delta S_t + \mathbbm 1_A \liminf_{k \to \infty} Y^k,
\]
where $0 \neq \bar G_t \in F_{\epsilon,t}^\perp \cap E_{0,t}^\perp$.
On $B := \{ \bar G_t \cdot \Delta S_t < 0 \}$ we find
\[
\liminf_{k \to \infty} \mathbbm 1_{A \cap B} Y^k > 0,
\]
which yields a contradiction to \eqref{eq:thm_closure.NAeps_liminf}, thus implying $\P[B] = 0$.
Therefore, $\bar G_t \cdot \Delta S_t \ge 0$ almost surely, where we find by Definition \ref{def:NA'} \ref{it:def_NA'_NA} that $\bar G_t \cdot \Delta S_t = 0$ (that is, $\bar G_t \in E_{0,t}$).
Clearly this is not possible as $0 \neq \bar G_t \in E_{0,t}^\perp$, from where we deduce that $\mathbb P[A] = 0$.
    
\emph{Step 3}. 
By Steps 1 and 2 we have that $\sup_{k \in \N} |G^k_t|_p$ is a.s.\ finite which allows us to assume w.l.o.g.\ (by passing to an $\mathcal F_{t-1}$-measurable subsequence) that $(G_t^k)_{k \in \N}$ converges to some $G_t \in F_{\epsilon,t}^\perp \cap E_{0,t}^\perp$.
Then, we get
\[
-\infty < \liminf_{k \to \infty} G_t^k \cdot \Delta S_t + \epsilon (a_t^k - |H_t^k|_p) + Y^k = G_t \cdot \Delta S_t + \liminf_{k \to \infty} \epsilon (a_k - |H_t^k|_p) + Y^k,
\]
and claim that $A := \{ \liminf_{k \to \infty} a_k - |H_t^k|_p = -\infty \}$ is a null set.
Indeed, if $\mathbb P[A] > 0$, we could pass to an $\F_{t-1}$-measurable subsequence such that $(\mathbbm 1_A(a_t^k - |H_t^k|_p))_{k \in \N}$ diverges on $A$, and obtain
\[
\liminf_{k \to \infty} \mathbbm 1_A Y^k = +\infty \mbox{ where } \mathbbm 1_A Y^k \in \tilde K_{t + 1},
\]
which contradicts the inductive assumption \eqref{eq:thm_closure.NAeps_liminf}.

\emph{Step 4}.
By combining the Steps 1-3 we have shown \eqref{eq:thm_closure_step0} and can, thanks to Lemma~\ref{lem:1_step_closure_simple}, extract an $\mathcal F_{t-1}$-measurable subsequence such that
\[
\lim_{k \to \infty} H^k_t \cdot \Delta S_t - \epsilon |H^k_t|_p = H_t \cdot \Delta S_t - \epsilon |H_t|_p + G_t \cdot \Delta S_t \in \overline{K}_t,
\]
where $H_t \in \mathcal H_t$, $G_t \in F_{\epsilon,t}^\perp$ and $H_t^\ast \cdot G_t = 0$.
The next computation reveals that $(Y^k)_{k \in \N}$ is a.s.\ bounded:
\[
-\infty < \liminf_{k \to \infty} H^k_t \cdot \Delta S_t - \epsilon |H^k_t|_p + Y^k = H_t \cdot \Delta S_t - \epsilon |H_t|_p + G_t \cdot \Delta S_t + \liminf_{k \to \infty} Y^k.
\]
Therefore, we can make use of the inductive hypothesis in order to find $\tilde Y \in \tilde K_{t + 1}$ with $\tilde Y \ge \liminf_{k \to \infty} Y^k$ where equality holds in case that $(Y^k)_{k \in \N}$ is convergent.
In particular, we have
\begin{align}
\label{eq:thm_closure.dominated_liminf1}
H_t \cdot \Delta S_t - \epsilon |H_t|_p + G_t \cdot \Delta S_t + \tilde Y
&\ge
\liminf_{k\to \infty}
H^k_t \cdot \Delta S_t - \epsilon |H_t^k|_p + Y^k.
\end{align}
Denote the left-hand side of \eqref{eq:thm_closure.dominated_liminf1} by $Y$.
Therefore we have shown the existence of an element $Y \in \tilde K_t$ that dominates the right-hand side of \eqref{eq:thm_closure.dominated_liminf1}.
Moreover, in case that $(H^k_t \cdot \Delta S_t - \epsilon |H^k_t| + Y^k)_{k \in \N}$ was already convergent we even have equality in \eqref{eq:thm_closure.dominated_liminf1}
Hence, $\tilde K_t$ is closed under convergence in probability which concludes the inductive step.
    
Finally, we obtain by the same reasoning as for \eqref{eq:thm_closure.NAeps_inductive} the last assertion, that is, \eqref{eq:thm_closure.NAeps_liminf}.
\end{proof}

\begin{proof}[Proof of Theorem \ref{thm:closure}, $p = 1$]
For any $H \in \mathcal H$ we write { $\hat H$} for the element that is coordinatewise defined by
\[
    \hat H_{t,i} := \mathbbm 1_{ \{ \bar H_{t,i} \neq 0\} } H_{t,i},
\]
and denote its canonical decomposition by $(a_t,G_t,\tilde G_t)$.

We claim, for $t = 1,\ldots, T$, any sequence $(H^k)_{k \in \N}$ in {$\tilde K_t^0$}, cf.\ the proof of Theorem \ref{thm:closure} for $p > 1$, where $((H^k \bullet S)_T - \epsilon \|H^k\|_1)_{k \in \N}$ is bounded from below, satisfies
\begin{enumerate}[label = (\roman*)]
    \item \label{it:closure_p=1.1}$(H^k - \hat H^k)_{k \in \N}$ is bounded,
    \item \label{it:closure_p=1.2}both sequences, $(G^k_t)_{k \in \N}$ and $(\sum_{s = t}^T a_s^k - \|\hat H^k_t\|_1)_{k \in \N}$, are bounded.
\end{enumerate}
This claim is shown by induction.
The base case $t = T$ was already dealt with in the proof of Proposition \ref{prop:1_step_closure}.
So, next let $t = 1,\ldots,T-1$ and assume that have shown the claim for $t + 1$.

To see \ref{it:closure_p=1.1}, consider $(H^k_t - \hat H^k_t)_{k \in \N}$ and $A := \{ \limsup_{k \to \infty} |H^k_t - \hat H^k_t|_1 = \infty \}$.
By passing to an $\F_{t-1}$-measurable subsequence we can assume that the normalized sequence
\[
    \mathbbm 1_A \frac{H^k_t}{|H^k_t| \vee 1},  \quad k \in \N,
\]
converges to $\tilde H_t$ with $|\tilde H_t|_1 \in \{0,1\}$, $\{ |\tilde H_t|_1 = 1\} = A$, and
$\{ \tilde H_{t,i} \neq 0 \} \subseteq \{ \bar H_{t,i} = 0 \}$ for $i = 1,\ldots,d$.
Hence, 
\[
    \tilde H_t \cdot \Delta S_t - \epsilon |\tilde H_t| + \liminf_{k \to \infty} \frac{\mathbbm 1_A}{|H^k_t| \vee 1} \sum_{s = t + 1}^T H^k_s \cdot \Delta S_t - \epsilon |H^k_s|_1 \ge 0,
\]
and by the same arguments as in \emph{Step 1} in the proof of Theorem \ref{thm:closure} for $p > 1$, we find that $\tilde H_t \in E_{\epsilon,t}$ which yields $\P[A] = 0$.

To see \ref{it:closure_p=1.2}, we can argue as in \emph{Step 2} in the proof of Theorem \ref{thm:closure} for $p > 1$ and find that $(G^k_t)_{k \in \N}$ has to be bounded.
Then, analogously to \emph{Step 3} in the proof of Theorem \ref{thm:closure} for $p > 1$, we get also that $(a_k^t - |\hat H^k_t|_1)_{k \in \N}$ has to be bounded.

This concludes the inductive step thanks to the inductive assumption.

Now, let $(H^k)_{k \in \N}$ be a sequence $\mathcal H = \tilde K_1^0$ such that $((H^k \bullet S)_T - \epsilon \| H^k \|_1 )_{k \in \N}$ is bounded from below.
Then, thanks to \ref{it:closure_p=1.1} and \ref{it:closure_p=1.2}, we can pass to measurable subsequence such that there are, {by Proposition \ref{prop:1_step_closure} and Lemma \ref{lem:p=1.strict_G}, ${H_t} \in K_t$} with
\[
    \lim_{k \to \infty} (H^k \bullet S)_T - \epsilon \|H^k\|_1 = (H \bullet S)_T - \epsilon \|H\|_1.
\]
Therefore, $\overline{K} = K$ and in particular, for any sequence $(Y^k)_{k \in \N}$ in $K$ that is bounded from below there is $Y \in K$ with
\[
    Y \ge \liminf_{k \to \infty} Y^k.    
\]
This concludes the proof.
\end{proof}

With this at hand we can provide the pending proofs of Proposition \ref{cor:NAeps_equivalence} and Theorem~\ref{cor:C_closed}:

\begin{proof}[Proof of Proposition \ref{cor:NAeps_equivalence}]
{The forward implication was shown right after the statement of the proposition.}
To see the backward implication, assume $\NA_\epsilon'$.
By Theorem~\ref{thm:closure} we then have for any sequence $(H^k)_{k \in \N}$ in $\mathcal H$ that
\[
\liminf_{k \to \infty} (H^k \bullet S)_T - \epsilon \| H^k \|_p \ge 0 \implies
\liminf_{k \to \infty} (H^k \bullet S)_T - \epsilon \| H^k \|_p = 0.
\]
Hence, there is no $\epsilon$-arbitrage.
\end{proof}

\begin{proof}[Proof of Theorem~\ref{cor:C_closed}]
That $C$ is a convex cone is immediate. Let us then show the closure property. To this end, let $(X^k)_{k \in \N}$ be a sequence in $C$ that is bounded from below, 
and $(Y^k)_{k \in \N}$ in $\overline{K}$ with $X^k \le Y^k$.
Clearly, we have
\[
\liminf_{k \to \infty} Y^k \ge \liminf_{k \to \infty} X^k,
\]
and, by Theorem~\ref{thm:closure}, there is $Y \in \overline{K}$ with $Y \ge \liminf_{k \to \infty} Y^k$.
In case that $(X^k)_{k \in \N}$ converges to some $X$, then it is bounded from below and hence $Y \ge X$ and in particular $X \in C$.
Moreover, if $\liminf_{k \to \infty} X^k \ge 0$ and therefore $Y \ge 0$, then by $\NA_\epsilon$ we have that $Y = 0$, which yields that $C \cap L^0_+(\Omega,\mathcal F_T, \mathbb P; \mathbb R) = \{ 0 \}$.
\end{proof}

\subsection{The FTAP, duality, and the price range}\label{sect:proof:ftap}

\begin{proof}[Proof of Theorem~\ref{thm:ftap}]
The fact that the existence of an $\epsilon$-martingale measure implies absence of $\epsilon$-arbitrage is a consequence of Remark \ref{rem:eps_mart_arb}.

In order to show the other implication, let us assume that the market satisfies $\NA_\epsilon$.
If {$S \notin L^1(\Omega,\F_T,\P)$ we pass} to an equivalent probability measure $\P' \sim \P$, for example $\frac{d\P'}{d\P} \propto \exp(- \| S \|_p)$, such that $S \in L^1(\Omega,\F_T,\P')$. {In this case}, we redefine $\P:=\P'$.
Recall that by Theorem~\ref{cor:C_closed} the set 
\[
C = \{ X \in L^0(\Omega, \F_T,\P;\R) \colon \exists Y \in \overline{K} \text{ with } X \le Y \}
\]
is closed w.r.t.\ convergence in probability and
satisfies $C \cap L^0_+(\Omega,\F_T,\P) = \{ 0 \}$. Consider the convex cone given by
\[
C^1 := 
\left\{
X \in L^1(\Omega, \F_T, \P) \colon
\exists Y \in C \text{ with }  X \le Y
\right\} \subseteq C.
\]
Since $C$ is closed w.r.t.\ convergence in probability, we have that $C \cap L^1(\Omega,\F_T,\P)$ is closed in $L^1$ and coincides with $C^1$ by definition.
In particular, we have
\[
\{ 0 \} \subseteq C^1 \cap L^1_+(\Omega, \F_T, \P) \subseteq C \cap L^0_+(\Omega, \F_T, \P) = \{ 0 \}.
\]
Hence we can apply the Kreps-Yan theorem (see \cite[Theorem 1.62]{FoSc16})  to the closed convex cone $C^1$ in order to obtain $\Q \sim \P$ with $\frac{d\Q}{d\P} \in L^\infty_+(\Omega,\F_T,\P')$ such that $\E_\Q[X] \le 0$ for all $X \in C^1$.
Using the latter property we get
\[
\E_\Q[ \mathbbm 1_A \left( H_t \cdot \Delta S_t - \epsilon \right) ] \le 0 \quad \forall A \in \F_{t-1} \mbox{ and } H_t \in \mathcal H_t \mbox{ with } |H_t|_p = 1.
\]
By the defining property of the conditional expectation, we find that
\[
H_t \cdot \E_\Q[ \Delta S_t | \F_{t-1} ] \le \epsilon |H_t|_p  \quad \forall H_t  \in \mathcal H_t.
\]
Finally, as the $q$-norm is the dual of the $p$-norm, we have
$\left| \E_\Q[ \Delta S_t | \F_{t-1} ] \right|_q \le \epsilon$, which implies that $\Q$ is an $\eps$-martingale and concludes the proof.
\end{proof}

\begin{proof}[Proof of Theorem~\ref{thm:duality}]
As usual, weak duality is trivially satisfied since, for any $\Q \in \mathcal M_\epsilon(\P)$ and admissible super-replicating strategy $x \in \R$ and $(H,G) \in \mathcal H \times F_\epsilon^\perp$ (in the sense of those used in the r.h.s. in \eqref{eq:dual}), we have
\[
\E_\Q[\Psi] \le x+\E_\Q[(H\bullet S)_T + (G \bullet S)_T - \epsilon \| H \|_p ]\le x,
\]
as follows from Lemma~\ref{lem:eps_mart_means}. To prove the converse, recall that $C^1$ is closed in $L^1(\Omega,\F_T,\P)$. The bipolar theorem, see for example \cite[Theorem 5.103]{Ch06},  tells us that
\begin{equation}
\label{eq:duality_bipolar_theorem}
C^1 = \left\{ Y \in L^1(\Omega, \F_T, \P) \colon \sup_{\Q \in \mathcal M_\epsilon(\P)} \E_\Q[Y] \le 0  \right\}.
\end{equation}
Denoting the value of the left-hand side of \eqref{eq:dual} by $l \in \R$, we have
\[
\sup_{\Q \in \mathcal M_\epsilon} \E_\Q[\Psi - l] = 0,
\]
which yields that $\Psi - l \in C^1$.
Therefore, by \eqref{eq:duality_bipolar_theorem} and the definition of $C^1$, there exists an admissible pair $(G,H)$ with $l + (H \bullet S)_T + (G \bullet S)_T \ge \Psi + \epsilon \|H\|_p$.
\end{proof}

\begin{proof}[Proof of Theorem~\ref{thm:price_range}]
    \emph{Part 1}:
To prove the first assertion, let $\Q \in \mathcal M_\epsilon(\P)$ and $\psi \in [\E_\Q \Psi - \epsilon, \E_\Q + \epsilon]$.
Note that $\E_\Q[\Psi - \psi] \in [-\epsilon,\epsilon]$.
Consider $(H^k)_{k \in \N}$ in $\mathcal H$ and $(a^k)_{k \in \N}$ in $\R$ with
\[
\liminf_{k \to \infty} (H^k \bullet S)_T + a^k(\Psi - \psi) - \epsilon \left( \|H^k\|_p + |a^k| \right)\ge 0.
\]
We need to show that the above is actually an equality.
    
\emph{Case 1}: Assume that $(a^k)_{k \in \N}$ is bounded. W.l.o.g., by passing to a subsequence, we may assume that the sequence converges to $a \in \R$.
Then $((H^k \bullet S)_T - \epsilon \|H^k\|_p)_{k \in \N}$ is bounded from below and by Theorem~\ref{thm:closure} there exists $Y \in \overline{K}$ such that
\begin{equation}\label{eq:yalim}
Y + a (\Psi - \psi) - \epsilon |a| \ge \liminf_{k \to \infty} (H^k \bullet S)_T + a^k(\Psi - \psi) - \epsilon \left( \|H^k\|_p + |a^k| \right) \ge 0.
\end{equation}
Since $\Psi \in L^1(\P)$, we deduce for all $n \in \N$ that $Y \wedge n \in L^1(\P)$ with $Y \wedge n \le Y \in \overline{K}$.
Moreover, because $\Q \in \mathcal M_\epsilon(\P)$, we obtain $\E_\Q[Y \wedge n] \le 0$.
Finally, note that $Y \wedge n \nearrow Y$ for $n \to \infty$, which yields by monotone convergence
\[
\E_\Q[Y] = \lim_{n \to \infty} \E_\Q[Y \wedge n] \le 0.
\]
Hence, $\E_\Q[ Y + a(\Psi - \psi) - \epsilon |a|] \le 0$, and from \eqref{eq:yalim} we get $Y + a (\Psi - \psi) - \epsilon |a| = 0$, as wanted.

\emph{Case 2}: Assume that $(a^k)_{k \in \N}$ is unbounded.
W.l.o.g.\ let the sequence diverge to either $+\infty$ or $-\infty$ and call the limit $a$. 
By normalizing, we find
\[
\liminf_{k \to \infty} \frac{1}{|a^k|} \left( (H^k \bullet S)_T - \epsilon \|H^k\|_p  \right) \ge -\sgn(a) \left(\Psi - \psi \right) + \epsilon.
\]
Therefore, by Theorem~\ref{thm:closure} there exists $Y \in \overline{K}$ with
\[
Y + \sgn(a) \left(\Psi - \psi \right) - \epsilon \ge 0.
\]
As in Case 1, we find that $\E_\Q[Y] = 0$, $Y + \sgn(a) \left(\Psi - \psi \right) - \epsilon = 0$, and in particular $Y \in L^1(\P)$.
Since $Y \in \overline{K}$, there exist $H \in \mathcal H$, $G \in F_{\epsilon}^\perp$ with $H_t^\ast \cdot G_t = 0$ when $p > 1$ and $G_t = 0$ when $p = 1$ { (since then $K = \overline{K}$ by Theorem \ref{thm:K=barK})}, such that $Y = ((G+H)\bullet S)_T - \epsilon \|H\|_p$.
    
\emph{Claim}: For $t = 1,\ldots, T$, we have
\begin{equation}
\label{eq:fair_price_range.claim}
\E_\Q[(G_t + H_t) \cdot \Delta S_t - \epsilon |H_t|_p | \F_{t-1}] = 0.
\end{equation}
Since $\Q$ is an $\epsilon$-martingale measure, { $\E_\Q[G_t \cdot \Delta S_t | \F_{t-1}] = 0$ by Lemma \ref{lem:eps_mart_means} while $\E_\Q[H_t \cdot \Delta S_t - \epsilon |H_t|_p| \F_{t-1}] \le 0 $}, thus the left-hand side of \eqref{eq:fair_price_range.claim} has to be non-positive.
For the reverse inequality, consider for $N \in \N$ the localizing sequence of $(\F_{t - 1})_{t
=1}^T$-stopping times
\[
\tau_N := \max \left\{ t \in \{0,\ldots,T\} \colon \sum_{s = 1}^t |G_t|_p + |H_t|_p  \leq  N \right\},
\]
and for any random time $\tau$, set
\[
Y_\tau := \sum_{t = 1}^{\tau} (G_t + H_t) \cdot \Delta S_t - \epsilon  |H_t|_p.
\]
Clearly, $\tau_N \nearrow T$ for $N \to \infty$ and $Y_{\tau_N \wedge t} \in \overline{K} \cap L^1(\Omega,\F_T,\P)$ for $t \in \{0,\ldots,T\}$ and $N \in \N$.
We find that $Y - Y_{\tau_N \wedge t}$ is also contained in $\overline{K} \cap L^1(\Omega,\F_T,\P)$.
Therefore, since $\E_\Q[Y] = 0$, we deduce that { $\E_\Q[Y - Y_{\tau_N \wedge t}] = 0$ and $\E_\Q[Y_{\tau_N \wedge t}] = 0$.}
By monotone convergence, we get
\begin{align*}
\E_\Q \left[ \E_\Q[ (G_t + H_t) \cdot \Delta S_t - \epsilon |H_t|_p | \F_{t-1} ] \right] 
&= \lim_{N \to \infty}
\E_\Q \left[ \mathbbm 1_{ \{ \tau_N > t - 1\} }
\E_\Q [ (G_t + H_t) \cdot \Delta S_t - \epsilon |H_t|_p | \F_{t-1} ] \right]
\\
&= \lim_{N \to \infty}
\E_\Q \left[ Y_{\tau_N \wedge t} - Y_{\tau_N \wedge (t - 1)} \right] = 0,
\end{align*}
which yields \eqref{eq:fair_price_range.claim}, since $\E_\Q[(G_t + H_t) \cdot \Delta S_t - \epsilon |H_t|_p | \F_{t-1}] \le 0$.
    
As a consequence of \eqref{eq:fair_price_range.claim} and Lemma \ref{lem:eps_mart_means}, we obtain
\[
H_t \cdot \E_\Q[\Delta S_t | \F_{t-1}] = \epsilon  |H_t|_p,
\]
and again by Lemma \ref{lem:eps_mart_means}, $\E_\Q[\Delta S_t | \F_{t-1}] \in F_{\epsilon,t}$.
The process $\tilde S$ whose increments are given by
\[
\Delta \tilde S_t := \Delta S_t - \E_\Q[\Delta S_t | \F_{t-1}] 
\]
is a martingale under $\Q$.
We compute
\begin{align*}
((G + H) \bullet \tilde S)_T &= ((G + H) \bullet S)_T - 
\sum_{t = 1}^T (G_t + H_t) \cdot \E_{\Q}[\Delta S_t | \F_{t-1}] \\
&= ((G + H) \bullet S)_T - \sum_{t = 1}^T H_t \cdot \E_{\Q}[\Delta S_t | \F_{t-1}] \\
&= ((G + H) \bullet S)_T - \epsilon \|H \|_p = -\left( \sgn(a) \left( \Psi - \psi \right) - \epsilon \right),
\end{align*}
and, due to H\"older's inequality,
\begin{align*}
(H^k \bullet \tilde S)_T &= (H^k \bullet S)_T - \sum_{t = 1}^{T} H^k_t \cdot \E_\Q[\Delta S_t | \F_t] \ge (H^k \bullet S)_T - \epsilon \|H^k\|_p.
\end{align*}
Hence,
\begin{align*}
\liminf_{k \to \infty} ((H^k- |a^k| (G + H))&\bullet \tilde S)_T \\
&\ge \liminf_{k \to \infty} (H^k \bullet S)_T - \epsilon \|H^k\|_p + a^k ( \Psi - \psi)- \epsilon |a^k|
\ge 0,
\end{align*}
which can only be the case when the left-hand side vanishes, since $\tilde S$ is a martingale under $\Q$.
{This is readily derived from results of the classical theory, see e.g.\ \cite[Proposition 6.9.1]{DeSc06}.}

\emph{Part 2}: {It is enough to
show the second assertion when $p>1$, as in the case $d=1$ the $p$-norm on $\mathbb R$ is just the absolute value.}
Assume that introducing $\Psi$ at price $\psi$ in the market does not introduce $\epsilon$-arbitrage.
We claim that there exists $\Q \in \mathcal M_\epsilon(\P)$ with $\psi \in [\E_\Q[\Psi] - \epsilon, \E_\Q[\Psi] + \epsilon]$.
Consider sequences $(H^k)_{k \in \N}$ of trading strategies and $(a^k)_{k \in \N}$ in $\R$ with
\begin{equation}
\label{eq:fair_price_range.part2_sequences}
\liminf_{k \to \infty} (H^k \bullet S)_T - \epsilon \|H^k\|_p + a^k \left( \Psi - \psi \right) - \epsilon |a^k| \ge Y \in L^0(\Omega,\F_T,\P).
\end{equation}
As before, we distinguish two cases:
    
\emph{Case 1}:
Suppose that, for all sequences satisfying  \eqref{eq:fair_price_range.part2_sequences},
$(a^k)_{k \in \N}$ is bounded.
We claim that
\[
C_{\Psi,\psi} := \left\{Y \in L^0(\Omega,\F_T,\P) \colon \exists X \in \overline{K} \mbox{ and } a \in \R \mbox{ such that } Y \le  X + a(\Psi - \psi) - \epsilon |a| \right\}
\]
is closed w.r.t.\ convergence in probability, from which we derive, by the same arguments as in the proof of Theorem~\ref{thm:ftap}, that there exists $\Q \in \mathcal M_\epsilon(\P)$ with $-\epsilon \le \E_\Q[\Psi - \psi] \le \epsilon$.
As in Case~1 of Part~1, we find $\tilde Y \in \overline{K}$ {and $a \in \R$} such that
\[
\tilde Y + a(\Psi - \psi) - \epsilon |a| \ge Y,
\]
thus, $Y \in C_{\Psi,\psi}$ {and we conclude that $ C_{\Psi,\psi}$ is closed w.r.t.\ convergence in probability}.

\emph{Case 2}: 
Suppose there is a sequence satisfying  \eqref{eq:fair_price_range.part2_sequences} with $(a^k)_{k \in \N}$ bounded. 
We proceed just as in the beginning of Case~2 of Part~1, and get by normalization
\[
\tilde Y \ge \liminf_{k \to \infty} \frac1{|a^k|} \left( (H^k \bullet S)_T - \epsilon \|H^k\|_p \right) \ge -\sgn(a)\left( \Psi - \psi \right) + \epsilon,
\]
for some $\tilde Y \in \overline{K}$.
Since $\psi$ is an $\epsilon$-fair price for $\Psi$, the inequalities in the above display have to be equalities.
Thus, by representing $\tilde Y$ in terms of a pair $(H,G) \in \mathcal H \times F_{\epsilon}^\perp$ with $H^\ast_t \cdot G_t = 0$, we have
\[
\sgn(a) \left( \Psi - \psi \right) - \epsilon = - \tilde Y = - ((G + H) \bullet S)_T + \epsilon \|H \|_p.
\]
Consider the auxiliary process $\tilde S$ whose increments are given by 
\[
\Delta \tilde S_t := \mathbbm 1_{ \{ H_t = 0 \} } \Delta S_t.
\]
By assumption, $\tilde S$ admits no $\epsilon$-arbitrage, whence, by Remark \ref{rem:Doob_decomposition}, there is a predictable process $\zeta$ with $|\Delta \zeta_t|_q \le \epsilon$ such that $\tilde S - \zeta$ has no arbitrage in the classical sense.
Next, we define another auxiliary process $\hat S$ whose increments are given by
\begin{align*}
\Delta \hat S_t := \Delta S_t - \Delta \zeta_t - \epsilon H_t^\ast.
\end{align*}    
We claim that $\hat S$ has no arbitrage either.
Indeed, by localization (cf.\ Remark \ref{rem:localization}) it suffices to consider $\tilde H_t \in \mathcal H_t$, and compute
\begin{align*}
\tilde H_t \cdot \Delta \hat S_t &= \mathbbm 1_{\{ H_t = 0 \} } \tilde H_t \cdot (\Delta S_t - \Delta \zeta_t) + \mathbbm 1_{\{H_t \neq 0 \}} \tilde H_t \cdot\left( \Delta S_t - \epsilon H_t^\ast \right).
\end{align*}
In order to deduce the absence of arbitrage, it remains to show the implication
\[
\mathbbm 1_{\{ H_t \neq 0 \} } \tilde H_t \cdot (\Delta S_t - \epsilon H_t^\ast)
\ge 0 \implies
\mathbbm 1_{\{ H_t \neq 0 \} }\tilde H_t \cdot(\Delta S_t - \epsilon H_t^\ast)= 0.
\]
To this end, we assume that $\{ \tilde H_t \neq 0\} \subseteq \{H_t \neq 0\}$.
We decompose $\tilde H_t = \hat H_t + \alpha_t H_t$ such that $\hat H_t \in \mathcal H_t$, $\hat H_t \cdot H_t^\ast = 0$ and $\alpha_t \in L^0(\Omega,\F_{t-1},\P)$, which is possible by the same arguments as in Lemma \ref{lem:H_bar}.
For $k \in \N$, we define $\alpha^k_t := \max \{ \alpha_t + k, 0 \}$.
{ By the same reasoning as in Remark \ref{lem:square_root_asympt} note that}

\[
    \lim_{k \to \infty} |\hat H_t + \alpha_t^k H_t|_p - k|H_t|_p = \alpha_t|H_t|_p,
\]
whence we find
\begin{align*}
\tilde H_t \cdot \Delta \hat S_t 
&= \tilde H_t \cdot \Delta S_t - \epsilon \tilde H_t \cdot H_t^\ast
= (\hat H_t + \alpha_t H_t) \cdot \Delta S_t - \epsilon \alpha_t |H_t|_p
\\
&= \lim_{k \to \infty} (\hat H_t + \alpha^k_t H_t ) \cdot \Delta S_t - k H_t \cdot \Delta S_t
- \epsilon \left( |\hat H_t + \alpha^k_t H_t|_p - k|H_t|_p\right)
\\
&= \lim_{k \to \infty} (\hat H_t + \alpha_t^k H_t) \cdot \Delta S_t - \epsilon |\hat H_t + \alpha_t^kH_t|_p - k (H_t \cdot \Delta S_t - \epsilon |H_t|_p ).
\end{align*}

If $\tilde H_t \cdot \Delta \hat S_t \ge 0$, then the right-hand side of the display above has to vanish, since $\psi$ is an $\epsilon$-fair price for $\Psi$.
To see this, define $H_s^k = k H_s$ for $s \neq t$, $G^k := k G$, and $H_t^k := \hat H_t + \alpha_t^k H_t$ and note that
\begin{align*}
\tilde H_t \cdot \Delta \hat S_t &= \lim_{k \to \infty} ((G^k + H^k) \bullet S)_T - \epsilon \|H^k\|_p - k ((G + H) \bullet S)_T + \epsilon k \|H\|_p.
\end{align*}
We conclude that $\hat S$ admits no arbitrage.
    
Finally, we can apply the classical fundamental theorem of asset pricing \cite[Theorem 6.1.1]{DeSc06} to find a measure $\Q \sim \P$, $\frac{d\Q}{d\P} \in L^\infty(\P)$ such that $\hat S$ is a martingale under $\Q$.
Therefore, $\E_{\Q}[\tilde Y] = \E_\Q[((G+H)\bullet \hat S)_T] = 0$.
Denote by $\hat \zeta$ the predictable process with increments $\Delta \hat \zeta_t := \Delta \zeta_t + \epsilon H_t^\ast$.
Note that $|\hat \zeta_t|_q \leq \epsilon$, therefore $S = \hat S - \hat \zeta$ is an $\epsilon$-martingale under $\Q$, by Remark \ref{rem:Doob_decomposition}.
Finally, observe that
\[
\sgn(a) \E_\Q[\Psi - \psi] + \epsilon = \E_\Q[ \tilde Y] = 0,
\]
thus, $\psi \in [\E_\Q [\Psi] - \epsilon, \E_\Q [\Psi] + \epsilon]$.
\end{proof}

\bibliography{joint_biblio(2)}{}
\bibliographystyle{plain}

\end{document}